\setlist[enumerate]{leftmargin=.5in}
\setlist[itemize]{leftmargin=.5in}
\crefname{hypothesis}{Hypothesis}{Hypotheses}
\title{Optimal payoff under Bregman-Wasserstein divergence constraints\thanks{This version: {May 16, 2026};  Earlier versions: October 28, 2025; November 21, 2024.
}}
\author{Silvana M. Pesenti\thanks{Department of Statistical Sciences, University of Toronto, Canada
  (\email{silvana.pesenti@utoronto.ca}).}
\and Steven Vanduffel\thanks{Department of Economics and Political
Science, Vrije Universiteit Brussel, Belgium (\email{Steven.Vanduffel@vub.be}).}
\and Yang Yang\thanks{Center for Financial Engineering, Soochow University, China (\email{yangy\_1217@126.com}, \email{j.yao@suda.edu.cn}).}
\and Jing Yao\footnotemark[4]
}
\newcommand*{\addFileDependency}[1]{% argument=file name and extension
  \typeout{(#1)}% latexmk will find this if $recorder=0 (however, in that case, it will ignore #1 if it is a .aux or .pdf file etc and it exists! if it doesn't exist, it will appear in the list of dependents regardless)
  \@addtofilelist{#1}% if you want it to appear in \listfiles, not really necessary and latexmk doesn't use this
  \IfFileExists{#1}{}{\typeout{No file #1.}}% latexmk will find this message if #1 doesn't exist (yet)
}
\newtheorem{assumption}{Assumption}[section]
\DeclareMathOperator*{\argmin}{arg\,min}
\newcommand{\cost}{{\mathfrak{c}}}
\newcommand{\ep}{{\varepsilon}}
\renewcommand{\d}{{\mathrm{d}}}
\renewcommand{\epsilon}{{\varepsilon}}
\DeclareMathOperator*{\coloneqq}{:=}
\newcommand{\Finv}{{\Breve{F}}}
\newcommand{\Ginv}{{\Breve{G}}}
\newcommand{\Fbenchinv}{{\Breve{F}_b}}
\newcommand{\DB}{{\mathscr{B}}}
\newcommand{\E}{{\mathbb{E}}}
\renewcommand{\P}{{\mathbb{P}}}
\newcommand{\Q}{{\mathbb{Q}}}
\newcommand{\mQ}{{\mathcal{Q}}}
\begin{document}

\maketitle

% REQUIRED
\begin{abstract}
  We study optimal payoff choice for an expected utility maximizer under the constraint that their payoff is not allowed to deviate ``too much'' from a given benchmark. We solve this problem when the deviation is assessed via a Bregman-Wasserstein (BW) divergence, generated by a convex function $\phi$. Unlike the Wasserstein distance (i.e., when $\phi(x)=x^2$) the inherent asymmetry of the BW divergence makes it possible to penalize positive deviations different than negative ones. As a main contribution, we provide the optimal payoff in this setting. Numerical examples illustrate that the choice of $\phi$ allows us to better align the payoff choice with the objectives of investors. 
\end{abstract}

% REQUIRED
\begin{keywords}
  Portfolio choice, Preferences, Expected Utility, Hoeffding--Fr\'echet bounds, Bregman Divergence, Wasserstein distance. 
\end{keywords}

% \textbf{MSC} 91B28, 91B30, 91B02.
% EUT
% REQUIRED
% \begin{AMS}
% \end{AMS}

\section{Introduction}
A standard assumption in the literature on optimal payoff choice is that investors' preferences can be described by the expected utility theory (EUT) of \cite{vonNeumann_Morgenstern_47}. This approach has however faced significant criticisms and researchers have explored optimal payoff choices under alternative decision theories. Notable are the dual theory of \cite{Yaari_87}, the rank-dependent expected utility (RDEUT) approach of \cite{Quiggin_93}, and several behavioural approaches, including the SP/A theory of \cite{Lopes_87} and \cite{Shefrin_Statman_00}, and the cumulative prospect theory (CPT) of \cite{Tversky_Kahneman_92}.
Despite extensive research within these alternative frameworks\footnote{Optimal payoffs have been derived for an RDEUT investor in \cite{carlier2011optimal}, \cite{xia2016arrow}, \cite{he2016hope} \cite{xu2016note}, \cite{jaimungal2022SIAM}; for a CPT investor in \cite{jin2008behavioral}, \cite{zhang2011behavioral}, and \cite{ruschendorf2017}; and for a Yaari investor in \cite{He_Zhou_11},  \cite{he2021optimal} and \cite{ boudt2022optimal}.}, none prove superior to EUT for the problem of optimal portfolio choice. For instance, under Yaari's dual theory, optimal payoffs often materialize as binary (digital) or trinary options, which are infrequently practical since investors rarely opt for such payoffs. In addition,  \cite{bernard2015rationalizing} demonstrate that the optimal payoff of an investor with law-invariant increasing preferences can always be rationalized via a concave utility function, and their optimal payoff obtained by maximizing the corresponding expected utility.

In real-world investment contexts, investors commonly encounter constraints on fund allocation. For instance, shareholders of an investment company often seek a balance between risk and return, whereas a supervisory authority of a pension fund prioritizes the fund's ability to meet participant obligations. Additionally, asset managers frequently establish a benchmark when engaging with clients, striving to outperform it without significant deviation. Moreover, within Markowitz's seminal mean-variance framework, additional constraints are typically imposed on the composition of buy-and-hold portfolios to ensure ``reasonable'' allocations.
As a consequence,  various studies have delved into optimal payoff selection under diverse constraints. However, the majority of these works predominantly tackle distributional constraints of the payoff itself, rather than exploring the behaviour of payoffs in relation to benchmarks. For instance, \cite{basak2001value} derive the optimal payoff for an EUT maximizer under the  constraint of maintaining the  Value-at-Risk (VaR) within acceptable limits. Other studies in this realm include \cite{wei2021risk}, \cite{cuoco2008optimal}, and \cite {chen2024equivalence}.

In this paper, we examine the optimal choices for an EUT investor who is concerned about the deviation of their payoff from a specified benchmark. Related problems have been investigated in \cite{pesenti2020portfolio} and \cite{jaimungal2022SIAM}. The former considers Yaari investors and uses (among other constraints) the 2-Wasserstein distance to measure divergence while the latter focuses on RDEUT and the $p$-Wasserstein distance. The $p$-Wasserstein distance arises as the solution to the Monge-Kantorovich optimal transport (OT) problem with the \textit{symmetric} cost function $c(z_1,z_2) =|z_1-z_2|^p$. Thus positive deviations (gains) from a benchmark are penalised to the same extent as negative ones (losses)\footnote{Other research that has addressed optimal payoff choices under symmetric distance constraints include works by \cite{Bernard_Moraux_Rueschendorf_Vanduffel_13}, \cite{ruschendorf2017}, and \cite{he2021optimal}.}.  In optimal payoff choice, however, 
asymmetry may be desired. Specifically, since the fundamental work of \cite{Tversky_Kahneman_92}, it is well accepted that individuals tend to feel the pain of losses more acutely than the pleasure from equivalent gains. We contribute to the literature by studying the optimal payoff choice when  the investor assigns asymmetric penalties for gains versus losses with respect to an entire reference distribution -- a benchmark -- rather than a specific reference point.  

To the best of our knowledge, such asymmetry to a benchmark distribution has not yet been explored. A notable exception is \cite{pesenti2026WP}, who, in a later development, study a related but distinct problem.
To quantify dissimilarities between gains and losses we use the so-called Bregman-Wasserstein (BW) divergence which was introduced in the OT literature by  \cite{carlier2007monge} and extensively studied in \cite{kainth2025bregman}. A BW divergence is the minimiser of the Monge-Kantorovich OT problem where the cost function is a Bregman divergence. The BW divergence, which is generated by a convex function $\phi$, is thus an asymmetric generalisation of the (squared) Wasserstein distance. In particular, when $\phi(x) = x^2$, it reduces to the squared Wasserstein distance.

We consider a static setting, where admissible payoffs are non-negative measurable functions of a risky asset's terminal value, i.e., we deal with path-independent payoffs (see also \cite{ruschendorf2017}). Clearly an optimal payoff in the static setting might be surpassed if dynamic trading is permitted, but this outperformance is only apparent if  transactions do not bear a cost. In fact, if each transaction has a minimum cost, continuous trading leads to instantaneous bankruptcy and thus is not feasible. Allowing for discrete intermediate trading could mitigate this issue, but it presents major mathematical challenges\footnote{Few studies address portfolio maximization under discrete trading with minimum transaction costs. Notable exceptions include  \cite{BEL22} and \cite{BAY22} who explore the maximization of expected utility in a Black-Scholes market.}. The class of path-independent payoffs is extensive in that any distribution function of terminal wealth can be attained and - under technical conditions - the path-independent payoff can be replicated by trading at series of suitable calls and puts\footnote{The assumption that all calls and puts are available is arguably as reasonable as the assumption that continuous trading is feasible. To  this regard, note that \cite{CAR97} consider this assumption ``analogous to the continuous trading assumption permeating the continuous time literature.''} (see \cite{BRE78}).

The paper is structured as follows. \Cref{sec:2} provides the investors problem description and \Cref{sec:3} provides our main result, that is the optimal investment strategy. \Cref{sec:4} discusses examples and \Cref{sec:5} concludes.

\section{Problem Formulation}
\label{sec:2}

\subsection{Financial Market}
There are two assets available in the market, a risk-free asset with current value $B_0>0$ and value $B_T = B_0 e^{rT}$ at terminal time $T>0$, where $r>0$ is the risk-free interest rate, and a risky asset with current value $S_0>0$ and value at time $T$ that is described by the random variable $S_T: \Omega \to [0,\infty)$. We operate within the { complete atomless} probability space $(\Omega, \mathcal{F}, \mathbb{P})$, in which $\mathcal{F}=\sigma(S_T)$ denotes the $\sigma$-algebra generated by $S_T$ and $\mathbb{P}$ is a probability  measure, which may be an objective statistical measure derived from historical data or a subjective measure chosen by the investor based on personal beliefs and preferences.
	
We consider an investor who aims to acquire at $t=0$ a non-negative payoff $X_T$ that is a measurable function of the risky asset's terminal value {and lies in the space of integrable, non-negative random variables}. Furthermore, $X_T$ must be affordable  given the investor's initial budget $x_0>0$. The admissible payoff $X_T$ thus belongs to the set $\mathcal{X}(x_0)$ defined as 
	\begin{align*}
		 \mathcal{X}(x_0) \coloneqq \big\{X_T=g(S_T) ~|~ g\text{ non-negative and measurable, } {\mathbb{E}_{\mathbb{P}}[X_T] < \infty} \, \text{ and } \,c(X_T)  \leq x_0\big\}\,,
	\end{align*}
	where $c(X_T)$ denotes the cost of $X_T$. We assume that a pricing measure $\mathbb{Q}$ exists and that the cost $c(X_T)$ of $X_T$ is given by $c(X_T):= \mathbb{E}_{\mathbb{Q}}[e^{-rT}X_T]$, where $\E_\Q[\cdot]$ denotes the expected value under $\Q$.  
 Equivalently, the cost $c(X_T)$ can be written as
	\begin{equation*}
	c(X_T)=\mathbb{E}_\mathbb{P}[\varphi _{T}X_{T}],
	\end{equation*}
	where $\varphi_T:=e^{-rT}\frac{\text{d}\mathbb{Q}}{\text{d}\mathbb{P}}$ is the state price density. For simplicity, we write $\E[\cdot] := \E_\P[\cdot]$, when considering the expected value under $\P$.
	
{\begin{remark} If call options $c_K := \max\{S_T-K,0\}$ are traded for all exercise prices $K\geq0$, then the pricing measure $\mathbb{Q}$ is uniquely determined by these call prices, see e.g., \cite{BRE78}, \cite{ROS76}, and \cite{NAC88}. 
\end{remark}}

In the sequel, all cumulative distribution functions (cdfs) are taken with respect to $\mathbb{P}$. For a given cdf $F$, we denote its (left-continuous) quantile function\footnote{The left-continuous quantile function of a cdf $F$ is defined as $\Finv(u) = \inf\{ y \in \mathbb{R}~|~ F(y) \geq u\}, ~ 0 < u <  1$.} by $\Finv$ and for a random variable $Y$, we write $F_Y$ and $\Finv_Y$ for its cdf and quantile function, respectively. Furthermore, we assume that the state-price density $\varphi _{T}$ is continuously distributed, satisfies $0<\varphi _{T}<+\infty$ $\P$-a.s.

As we work with non-negative {integrable} random variables, we define by $\mQ$ the set of quantile functions corresponding to non-negative { integrable} random variables. That is 
\begin{align*}
	\mQ := \big\{ \Ginv:(0,1) \to [0, \infty) \ | \ \Ginv \ \text{is non-decreasing, left-continuous, and }    { \int_0^1 \Ginv(u)\d u  < \infty} \big\}. 
\end{align*}

{The investor employs a utility function $u(\cdot)$ to attach a value $u(x)$ to each payoff outcome $x$ of $X_T$ and evaluates $X_T$ by its expected utility $\E[u(X_T)]$. Throughout, we assume that the utility function $u:[0,\infty)\to\mathbb R$ is continuous on $[0,\infty)$ and twice continuously differentiable on $(0,\infty)$, and satisfies:
\begin{enumerate}[label = $\roman*)$]
    \item $u$ is strictly increasing and strictly concave on $(0,\infty)$, i.e.,
    \begin{align*}
    u'(x)>0 \quad \text{and}\quad u''(x)<0,\qquad x>0; 
    \end{align*}
    \item $u$ satisfies the Inada conditions, i.e.,
    \begin{align*}
    \lim_{x\to 0}u'(x)=\infty \quad \text{and}\quad  \lim_{x \to \infty} u'(x) = 0\,. 
    \end{align*}
\end{enumerate}}

\subsection{Bregman-Wasserstein Uncertainty}
The investor aims to control the divergence from their payoff to a given benchmark. To formalise this we utilise the notion of Bregman-Wasserstein divergence, for which we first recall the Bregman divergence.   
\begin{definition}[Bregman divergence]\label{defBD}
Let $\phi\colon \mathbb{R} \to \mathbb{R}$ be a convex and continuously differentiable function, called a Bregman generator. Then, the Bregman divergence with generator $\phi$ is defined as
	\begin{equation*}
		B_\phi\big(z_1, z_2\big)
		:= \phi(z_1) - \phi(z_2) - \phi'(z_2) (z_1-z_2)
		\,,\quad z_1,z_2\in\mathbb{R}\,,
	\end{equation*}
	where $\phi'(z):= \frac{d}{dz} \phi(z)$ denotes the derivative of $\phi$.
\end{definition}

As we work with non-negative random variables, we typically consider Bregman generators $\phi \colon [0, \infty) \to \mathbb{R}$.
Note that for the choice $\phi(x)=x^2$, it holds that $B_{\phi}(z_1, z_2) = (z_1- z_2)^2$, i.e., we obtain the squared Euclidean distance. In general, however, the Bregman divergence lacks symmetry (and thus is not a distance).  
This asymmetry is of great interest in that it makes it possible to penalize positive deviations differently to negative ones. \Cref{BW_p1} illustrates the Bregman divergence for $\phi_1(x)=x^2$ (blue lines) and $\phi_2(x)=x\ln x$, $x>0$ (red lines). By symmetry it holds that $B_{\phi_1}(1.5, 0.8)= 0.49 = B_{\phi_1}(0.8, 1.5)$, while for $\phi_2(x)=x\ln x$, we have $B_{\phi_2}(1.5, 0.8)=0.2429>0.1971=B_{\phi_2}(0.8, 1.5)$.

\begin{figure}[h]
    \centering
    \begin{subfigure}[b]{0.49\textwidth}
        \centering
        \includegraphics[width=\textwidth]{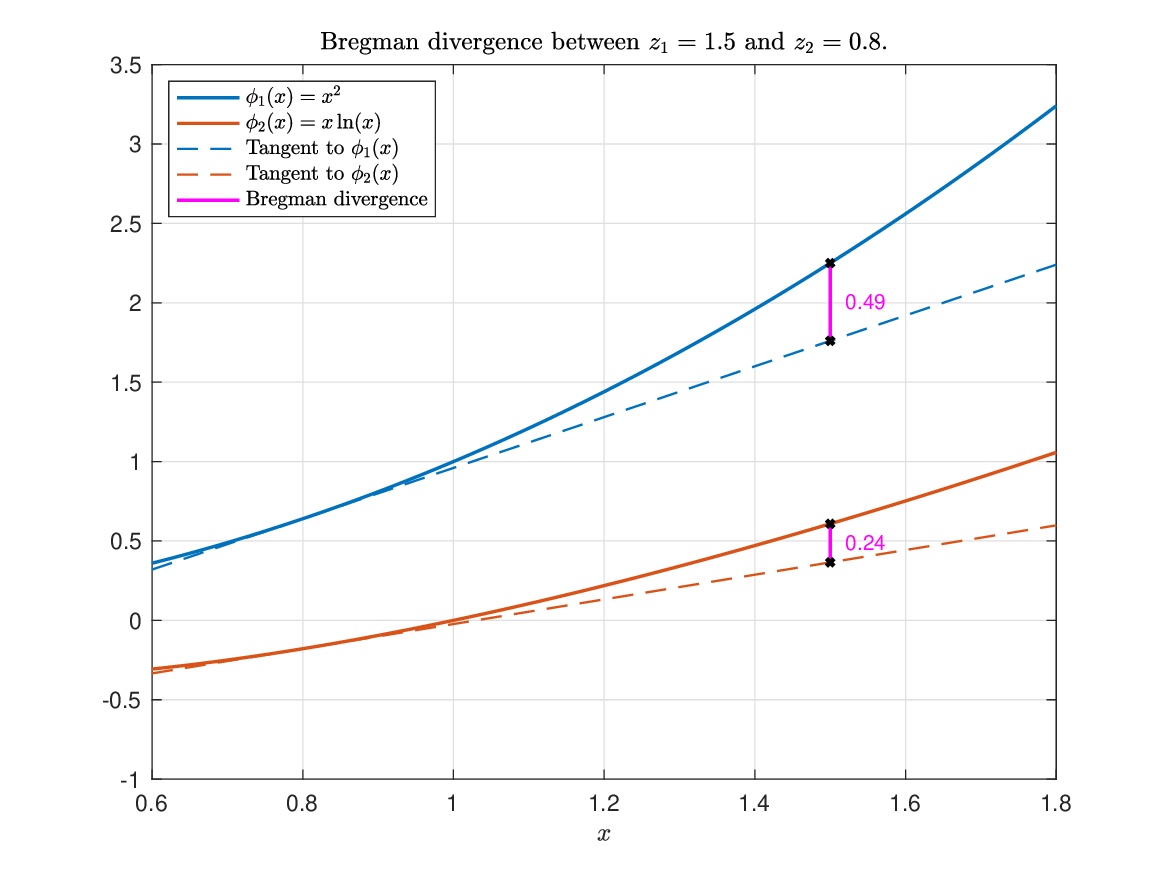}
        \label{BD_1}
    \end{subfigure}
    \hfill
    \begin{subfigure}[b]{0.49\textwidth}
        \centering
        \includegraphics[width=\textwidth]{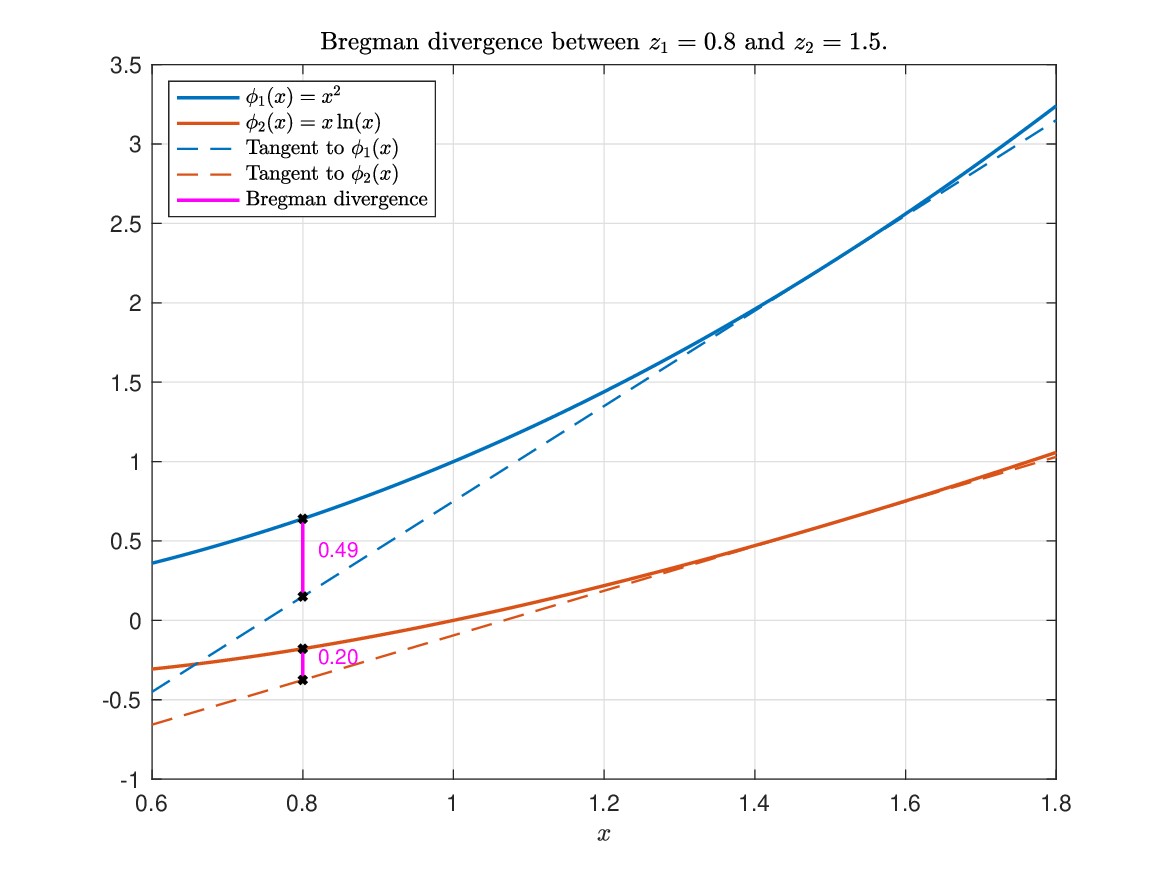}
        \label{BD_2}
    \end{subfigure}
	\caption{The Bregman divergence for the generators $\phi_1(x)=x^2$ (blue lines) and $\phi_2(x)=x\ln x$ (red lines). Left panel displays $B_{\phi_i}(1.5, 0.8)$ (purple vertical lines) and right panel displays $B_{\phi_i}(0.8, 1.5)$ (purple vertical lines), $i = 1,2$. 
 }
	\label{BW_p1}
\end{figure}

With the Bregman divergence, we can define a divergence on the space of cdfs as follows.
Let $\Pi(F_1,F_2)$ denote the set of all bivariate cdfs with marginal cdfs $F_1$ and $F_2$, respectively. \cite{Pesenti2023ORL} investigate the Monge-Kantorovich optimal transport problem 
\begin{equation*}
	\DB[F_1, F_2]:= \inf_{\pi\in\Pi(F_1,\,F_2)} \;\left\{\,\int_{\mathbb{R}^2} c(z_1, z_2)\,\pi(\d z_1,\d z_2)\, \right\},
\end{equation*}%
using as cost function $c(z_1, z_2)$ the Bregman divergence $ B_\phi(z_1, z_2)$ and show that the infimum
is attained for the comonotonic coupling, i.e., by $\big(\Finv_1(U), \Finv_2(U)\big)$, for any $U \sim U(0,1)$. As a consequence,  the so-called  Bregman-Wasserstein (BW) divergence from $F_1$ to $F_2$ has representation
\begin{align*}
	\DB_\phi [F_1, F_2]
         &=\int_0^1 B_\phi\big( \Finv_1(t), \Finv_2(t)\big)\,\d t \notag
    \\
    &=\int_0^1 \Big(   \phi\big(\Finv_1(t)\big)-\phi\big(\Finv_2(t)\big)
	-\phi'\big(\Finv_2(t)\big)\big(\Finv_1(t) -\Finv_2(t)\big)\, \Big)\,\d t\,, 
\end{align*}
which reduces to the 2-Wasserstein distance for $B_\phi$ being the squared loss, i.e., when $\phi(x) = x^2$. Since the BW divergence only depends on the marginal cdfs, we write $\DB_\phi [F_1, F_2]=\DB_\phi (\Finv_1, \Finv_2)$ and use square brackets for cdfs and round brackets for quantile functions.

{  
The asymmetry of the Bregman-Wasserstein divergence reflects how investors evaluate deviations relative to a benchmark. Since the seminal work of \cite{Tversky_Kahneman_92}, it has been well documented that individuals exhibit loss aversion, that is they weigh losses more heavily than gains of equal magnitude. For instance, pension funds or insurance companies prioritize downside protection to ensure solvency, while asset managers may tolerate some underperformance but seek to limit substantial drawdowns relative to benchmarks. The symmetric Wasserstein distance, which penalizes upward and downward deviations equally, cannot capture these asymmetric sensitivities. By contrast, the BW divergence permits a richer class of constraints where negative deviations can be penalized more severely than positive ones, aligning payoff design with observed investor behaviour and regulatory practice.
}

\subsection{Investor's Optimization Problem}
The investor aims to maximize their expected utility under the constraint that their terminal wealth does not diverge too much from the terminal wealth arising from a benchmark (reference) portfolio. Thus, we study the following 
EUT distance constrained optimization problem:
\begin{equation} \label{Yaari-o}
\underset{\begin{cases}
	\ c(X_{T}) \leq x_{0}\\[0.3em]
	\Finv_{X_T} \in \mQ_\ep
	\end{cases}%
	}{\max }\E[u(X_{T})]\,, \tag{P}
\end{equation}%
where $\mQ_\ep$ is the set of quantile functions that have a BW divergence of at most $\ep>0$ to the quantile function of the benchmark's terminal wealth $\Fbenchinv$, i.e., 
\begin{equation*}
    \mQ_\ep:= \Big\{ \Ginv ~ |~ \Ginv \in \mQ \quad \text {and} \quad \DB_\phi (\Ginv, \Fbenchinv) \leq \epsilon  \Big\}\,.
\end{equation*}

The value of $\epsilon>0$ is chosen by the investor and we refer to it as the \textit{tolerance level}. To solve problem \eqref{Yaari-o}, we first reduce the set containing all payoffs $X_{T}$ to the subset of payoffs that are anti-monotonic\footnote{Two random variables $X$ and $Y$ are called anti-monotonic if for every $\omega_1$, $\omega_2 \in \Omega$ it holds that $(X(\omega_1)-X(\omega_2))(Y(\omega_1)-Y(\omega_2)) \leq 0.$ By contrast, they are comonotonic if for every $\omega_1$, $\omega_2 \in \Omega$, $(X(\omega_1)-X(\omega_2))(Y(\omega_1)-Y(\omega_2)) \geq 0$.} to the state price density $\varphi _{T}$. To see this, assume that some payoff $X_T$ distributed with $F$ solves the EUT optimal payoff selection problem \eqref{Yaari-o}\footnote{  The existence of such a distribution $F$ can be ensured under \Cref{assumption_1} and \Cref{asm-well-posed}; see \Cref{Exist} for a detailed proof. } and is not anti-monotonic to the state price density $\varphi _{T}$. We can then obtain a payoff $\hat{X}_T$ that has the same cdf $F$ and is strictly cheaper. Indeed, $\hat X_{T}:=\Finv(1-F_{\varphi _{T}}(\varphi _{T}))$ has distribution $F$, thus the same expected utility, and satisfies $c(\hat{X}_T) < c(X_T)$ by the Hoeffding--Fr\'echet bounds. This reasoning was first established in \cite{Dybvig_88} and we refer to  \cite{ruschendorf2017} and many of the references therein for more detail. As a consequence, the optimal portfolio selection problem reduces to an optimization problem on real functions under monotonicity constraints (quantile functions).

\begin{theorem} [Anti-monotonicity of solution]
	\label{theo:1a} 
	If the EUT optimization problem \eqref{Yaari-o} has a solution, then there exists a solution $\hat X_{T}$ that is anti-monotonic with $\varphi _{T}$. 
\end{theorem}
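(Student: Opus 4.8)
The plan is to make rigorous the rearrangement argument sketched before the statement, which is of the Dybvig type. Suppose $X_T$ solves \eqref{Yaari-o} and write $F := F_{X_T}$ for its cdf. I would introduce the candidate
\begin{equation*}
    \hat X_T := \Finv\big(1 - F_{\varphi_T}(\varphi_T)\big),
\end{equation*}
and then verify, one property at a time, that $\hat X_T$ is again feasible, attains the optimal value, and is anti-monotonic with $\varphi_T$.

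First I would check admissibility. Since $v \mapsto \Finv(1 - F_{\varphi_T}(v))$ is a deterministic non-increasing map and $\varphi_T$ is $\sigma(S_T)$-measurable (being a version of $e^{-rT}\,\d\Q/\d\P$ on $\F = \sigma(S_T)$), the random variable $\hat X_T$ is a measurable function of $S_T$; it is non-negative because $\Finv$ takes values in $[0,\infty)$. The key distributional step is that, as $\varphi_T$ is continuously distributed, $F_{\varphi_T}(\varphi_T) \sim U(0,1)$, hence $1 - F_{\varphi_T}(\varphi_T) \sim U(0,1)$; the inverse-transform identity then gives $\hat X_T \stackrel{d}{=} X_T$, so in particular $\hat X_T \in L^2$ and $\Finv_{\hat X_T} = \Finv_{X_T}$.

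The two constraints and the objective are then handled by equidistribution. Because the BW divergence $\DB_\phi(\cdot, \Fbenchinv)$ depends on a payoff only through its quantile function, $\DB_\phi(\Finv_{\hat X_T}, \Fbenchinv) = \DB_\phi(\Finv_{X_T}, \Fbenchinv) \leq \ep$, so $\Finv_{\hat X_T} \in \mQ_\ep$ for free; likewise $\E[u(\hat X_T)] = \E[u(X_T)]$ equals the optimal value. For the budget constraint I would invoke the Hoeffding--Fr\'echet lower bound: among all couplings with marginals $F$ and $F_{\varphi_T}$, the counter-monotonic one minimizes $\E[\varphi_T\,\cdot\,]$. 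Since $\hat X_T$ is, by construction, a non-increasing function of $\varphi_T$ and hence anti-monotonic with it, while $(X_T, \varphi_T)$ is merely one admissible coupling, it follows that $c(\hat X_T) = \E[\varphi_T \hat X_T] \leq \E[\varphi_T X_T] = c(X_T) \leq x_0$. Thus $\hat X_T$ is feasible, optimal, and anti-monotonic with $\varphi_T$, which is the claim.

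The genuinely load-bearing step is the cost comparison via Hoeffding--Fr\'echet, since it is what singles out anti-monotonicity as the efficient structure; the equidistribution that preserves both the utility objective and the divergence constraint is the convenient feature that makes the whole rearrangement legitimate, and it hinges on $\varphi_T$ being continuously distributed (otherwise $F_{\varphi_T}(\varphi_T)$ would not be exactly uniform and one would need a randomized tie-breaking device). I do not expect any serious difficulty beyond bookkeeping, since all three ingredients -- measurability through $\sigma(S_T)$, distributional invariance of $u$ and $\DB_\phi$, and the Fr\'echet bound -- are standard.
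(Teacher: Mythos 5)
Your argument is correct and is essentially the same Dybvig-type rearrangement the paper itself sketches in the paragraph preceding the theorem: construct $\hat X_T = \Finv\big(1-F_{\varphi_T}(\varphi_T)\big)$, use equidistribution to preserve the objective and the BW constraint, and invoke the Hoeffding--Fr\'echet bound to show the cost does not increase. Your write-up simply fills in the bookkeeping (measurability with respect to $\sigma(S_T)$, uniformity of $F_{\varphi_T}(\varphi_T)$) that the paper leaves implicit.
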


Payoffs that are anti-monotonic with the state price density $\varphi _{T}$ are called cost-efficient, see e.g., \cite{bernard2014explicit}. 

Next, we define the set of quantile functions that have a BW-distance to the reference cdf of at most $\ep$ and whose corresponding cost-efficient payoff costs no more than $x_0$. Specifically,
\begin{equation*}
    \mQ_\ep(x_0) \coloneqq \bigl\{\Ginv \in \mQ~\big| 
    ~ \DB_\phi (\Ginv, \Fbenchinv) \leq \epsilon \; \text{ and } \; 
    \cost(\Ginv)\leq x_0 \bigr \}\subseteq \mQ_\ep\,,
\end{equation*}
where $\cost(\Ginv):= \int_0^1 \Ginv(t)\Finv_{\varphi_T}(1-t) \mathrm{d}t$.

The following result is now classical in the literature on portfolio choice and we omit its proof.\footnote{
This ``quantile reformulation'' of the optimal payoff selection problem has been first established in the pioneering work of \cite{Dybvig_88}. It has been further refined and applied in numerous research works including those by \cite{Foellmer_Schied_04}, \cite{Carlier_Dana_06}, \cite{Burgert_Rueschendorf_06}, \cite{jin2008behavioral}, \cite{kassberger2012path}, \cite{Xu_Zhou_13}, \cite{Bernard_Moraux_Rueschendorf_Vanduffel_13}, \cite{von2014optimality}, \cite{Xu_14}, \cite{xu2016note}, \cite{ruschendorf2017}, \cite{pesenti2020portfolio}, \cite{he2021optimal}, \cite{wei2021risk}, \cite{bi2021optimal},  \cite{magnani2022efficiency}, and \cite{bernard2024cost}.}

\begin{theorem}[Quantile reformulation]\label{theo:1b} 
	If $\Finv \in \mQ_\ep(x_0)$ is a solution to the optimization problem   
		\begin{equation} \label{Yaari2}\tag{\text{$\breve{P}$}}
		\underset{
				\Ginv \in {\mQ_\ep(x_0)}
		}{\max }  \int_0^1 u(\Ginv(t))\mathrm{d}t\, ,
		\end{equation}%
	then $\hat X_{T}={ {\Finv}}(1-F_{\varphi_{T}}(\varphi_{T}))$
	solves problem \eqref{Yaari-o}. 
\end{theorem}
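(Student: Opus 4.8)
The plan is to exploit that both the objective $\E[u(X_T)]$ and the Bregman--Wasserstein constraint $\Finv_{X_T}\in\mQ_\ep$ are law-invariant, depending on $X_T$ only through its law (equivalently, its quantile function), whereas the \emph{only} ingredient of \eqref{Yaari-o} that genuinely couples $X_T$ to the state-price density $\varphi_T$ is the budget constraint $c(X_T)=\E[\varphi_T X_T]\le x_0$. Writing $v(\mathrm{P})$ and $v(\breve{\mathrm{P}})$ for the optimal values of \eqref{Yaari-o} and \eqref{Yaari2}, I would prove $v(\mathrm{P})=v(\breve{\mathrm{P}})$ together with attainment, by showing that feasible payoffs in \eqref{Yaari-o} project onto feasible quantile functions in \eqref{Yaari2} with no loss of objective, and that the anti-monotonic rearrangement of an optimizer of \eqref{Yaari2} is feasible in \eqref{Yaari-o} with the same objective.

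For the first inequality, take any $X_T$ feasible in \eqref{Yaari-o}, with cdf $F$ and quantile function $\Finv$. Law-invariance of $u$ gives $\E[u(X_T)]=\int_0^1 u(\Finv(t))\,\d t$, and the constraint $\Finv_{X_T}\in\mQ_\ep$ reads exactly $\DB_\phi(\Finv,\Fbenchinv)\le\ep$. By the Hoeffding--Fr\'echet lower bound, the cost of any payoff with marginal $F$ is bounded below by the cost of its anti-monotonic coupling with $\varphi_T$, i.e.
\begin{equation*}
  c(X_T)=\E[\varphi_T X_T]\;\ge\;\int_0^1 \Finv(t)\,\Finv_{\varphi_T}(1-t)\,\d t=\cost(\Finv).
\end{equation*}
Hence $\cost(\Finv)\le c(X_T)\le x_0$, so $\Finv\in\mQ_\ep(x_0)$ is feasible in \eqref{Yaari2} and achieves the same objective; taking the supremum over $X_T$ yields $v(\mathrm{P})\le v(\breve{\mathrm{P}})$.

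For the reverse direction, let $\Finv$ be a maximizer of \eqref{Yaari2} and set $\hat X_T:=\Finv\big(1-F_{\varphi_T}(\varphi_T)\big)$. Since $\varphi_T$ is continuously distributed, $U:=1-F_{\varphi_T}(\varphi_T)\sim U(0,1)$, so $\hat X_T$ has quantile function $\Finv$; consequently $\E[u(\hat X_T)]=\int_0^1 u(\Finv(t))\,\d t$ and $\Finv_{\hat X_T}=\Finv$ satisfies the BW constraint. Because $\Finv$ is non-decreasing while $t\mapsto 1-F_{\varphi_T}(t)$ is non-increasing, $\hat X_T$ is a non-increasing function of $\varphi_T$ and thus anti-monotonic with it; the Hoeffding--Fr\'echet bound is therefore attained, giving $c(\hat X_T)=\cost(\Finv)\le x_0$. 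As $\varphi_T$ is $\sigma(S_T)$-measurable, $\hat X_T$ is a non-negative measurable function of $S_T$, and $\Finv\in\mQ$ ensures $\hat X_T\in L^2$; hence $\hat X_T$ is feasible in \eqref{Yaari-o}. Combining, $v(\mathrm{P})\le v(\breve{\mathrm{P}})=\E[u(\hat X_T)]\le v(\mathrm{P})$, so equality holds and $\hat X_T$ solves \eqref{Yaari-o}.

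The main obstacle---and the only non-bookkeeping step---is the Hoeffding--Fr\'echet argument: establishing that $\E[\varphi_T\,\cdot\,]$ is minimized over all random variables sharing the law $F$ precisely by the anti-monotonic coupling, with minimal value $\cost(\Finv)$, and that this minimum is actually attained by $\hat X_T$. This is exactly the content underlying \Cref{theo:1a}, and it relies crucially on the assumed continuity of the law of $\varphi_T$, which guarantees both that $U\sim U(0,1)$ and that the anti-monotonic optimal coupling is realized by a deterministic decreasing function of $\varphi_T$. Everything else reduces to the law-invariance of the objective and of the BW divergence.
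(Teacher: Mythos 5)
Your proof is correct and follows exactly the classical quantile-reformulation argument that the paper invokes without proof: law-invariance of the objective and of the BW constraint, the Hoeffding--Fr\'echet lower bound $\E[\varphi_T X_T]\ge \cost(\Finv_{X_T})$ over all payoffs with a given law, and attainment of that bound by the anti-monotonic payoff $\Finv(1-F_{\varphi_T}(\varphi_T))$, which is precisely the reasoning the paper sketches in the discussion surrounding \Cref{theo:1a}. No gaps; this matches the paper's (omitted, cited-as-classical) approach.
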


Solving the EUT optimal payoff choice problem under a BW divergence constraint thus involves addressing an optimization problem concerning functions under monotonicity constraints. In other words, it entails finding a suitable quantile function.

\section{Optimal Payoff}\label{sec:3}
In this section we solve the EUT optimisation problem \eqref{Yaari2}. For this we first establish that problem \eqref{Yaari2} is  {  well-posed} and that there exists a unique solution. 

{We work under the following standard regularity assumption.
\begin{assumption}\label{assumption_1}
There exists a quantile function $\Ginv \in \mQ$ such that
\begin{align*}
\cost(\Ginv)<x_0
\quad\text{and}\quad
\DB_\phi(\Ginv,\Fbenchinv)<\varepsilon. 
\end{align*}
\end{assumption}
This assumption corresponds to a Slater-type condition (see \Cref{main_optimal_th}) for the constrained optimization problem.  In particular, it implies that the feasible set $\mQ_\ep(x_0)$ is non-empty.

Note that the Slater condition also requires that the expected utility of the quantile
function is finite. Since $\breve G\in\mathcal{Q}\subset L^1((0,1))$, the space of Lebesgue integrable functions on $(0,1)$, and $u$ is increasing and concave with $u(0)\in\mathbb{R}$, the expected utility of $\breve{G}$ is finite. For more details, please refer to \eqref{utility_finite}.
}

{ 
\begin{remark} \label{Qae}
For the convex-analytic arguments below, in line with \cite{ghossoub2025risk}, we work in $L^1((0,1))$, and identify functions that agree Lebesgue-a.e. Accordingly, we use
\begin{align*}
 \mQ^{ae}
:=
\bigl\{
\Ginv\in L^1((0,1)) \ \big|\
\Ginv\ge 0 \text{ a.e. and } \Ginv \text{ is a.e. non-decreasing}
\bigr\}.
\end{align*}
The set $\mQ^{ae}$ is the $L^1$-equivalence-class version of $\mQ$. More precisely, every element of $\mQ$ defines an element of $\mQ^{ae}$, and conversely every element of $\mQ^{ae}$ admits a unique non-decreasing left-continuous representative in $\mQ$.
\end{remark}
}

\subsection{Existence and Uniqueness}

To obtain existence and uniqueness of the optimal payoff, we first establish that {  problem \eqref{Yaari2} is well-posed, i.e., its value is finite. Before proving the well-posedness of problem \eqref{Yaari2}, we introduce the following definition.} 

{  
\begin{definition}[$\vartheta$-strongly convex]
A function $f: \mathbb{R} \to \mathbb{R}$ is $\vartheta$-strongly convex if there exist $\vartheta>0$ such that 
\begin{equation}
    f(z_1) \geq f(z_2) + f'(z_2)(z_1 - z_2) + \frac{\vartheta}{2}(z_1 - z_2)^2 \,,\quad z_1,z_2\in\mathbb{R}\,.
\end{equation}
\end{definition}

    For any Bregman generator $\phi$ that is $\vartheta$-strongly convex, the induced divergence is naturally linked to the squared Euclidean distance and to the Wasserstein divergence. Specifically, recalling \Cref{defBD}, we have
    \begin{equation*}
		B_\phi\big(z_1, z_2\big) \geq \frac{\vartheta}{2}(z_1 - z_2)^2 = \frac{\vartheta}{2} B_{\phi_1}\big(z_1, z_2\big) \,, \quad z_1,z_2\in\mathbb{R}\,,
	\end{equation*}
    where $\phi_1(x) = x^2$.
    This property plays a key role in establishing integrability and thus well-posedness in our framework. The next assumption provides three different cases, under which the problem \eqref{Yaari2} is well-posed. 
}

{  
\begin{assumption}\label{asm-well-posed}
    Let at least one of the following conditions hold:
   \begin{enumerate}[label = $\roman*)$]
      \item The state-price density satisfies $\operatorname*{ess\,inf} \varphi_T > 0$.
      \item The Bregman generator $\phi$ is $\vartheta$-strongly convex and the benchmark $\breve{F}_b$ belongs to $L^2$. 
      \item The Bregman generator $\phi$ is strictly convex and the benchmark $\breve{F}_b$ is bounded.
   \end{enumerate}
\end{assumption}

\begin{proposition}[Well-posedness]\label{Well-posedness}
   Under \Cref{asm-well-posed} problem \eqref{Yaari2} is well-posed.
\end{proposition}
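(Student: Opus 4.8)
The plan is to prove well-posedness by exhibiting a finite upper bound on the objective $\int_0^1 u(\Ginv(t))\,\d t$ that holds uniformly over the feasible set $\mQ_\ep(x_0)$. Since \Cref{assumption_1} guarantees that this set is non-empty and the objective at any feasible $\Ginv$ is bounded below by $u(0)>-\infty$ (as $u$ is increasing and $\Ginv\geq 0$), such a uniform upper bound yields a finite value. The crucial reduction is that, by concavity of $u$, for any fixed $x_*>0$ the tangent-line inequality $u(x)\leq u(x_*)+u'(x_*)(x-x_*)$ holds for all $x\geq 0$, whence
\begin{equation*}
\int_0^1 u(\Ginv(t))\,\d t \;\leq\; u(x_*)-u'(x_*)\,x_* \;+\; u'(x_*)\int_0^1 \Ginv(t)\,\d t\,.
\end{equation*}
It therefore suffices to show that the mean $\int_0^1 \Ginv(t)\,\d t$ is bounded by a constant depending only on the data, uniformly over $\Ginv\in\mQ_\ep(x_0)$. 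I would then treat the three cases of \Cref{asm-well-posed} separately.

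Under case $i)$ I would use the cost constraint directly: since $\Finv_{\varphi_T}(1-t)\geq \essinf \varphi_T =: m>0$ and $\Ginv\geq 0$, we have $x_0\geq \cost(\Ginv)=\int_0^1 \Ginv(t)\,\Finv_{\varphi_T}(1-t)\,\d t\geq m\int_0^1 \Ginv(t)\,\d t$, giving the mean bound $x_0/m$. Under case $ii)$ I would invoke the strong-convexity lower bound $B_\phi(z_1,z_2)\geq \tfrac{\vartheta}{2}(z_1-z_2)^2$ recorded before the statement, so that the BW constraint yields $\int_0^1(\Ginv(t)-\Fbenchinv(t))^2\,\d t\leq 2\ep/\vartheta$; since the benchmark lies in $L^2$, the triangle inequality in $L^2$ bounds $\|\Ginv\|_{L^2}\leq \|\Fbenchinv\|_{L^2}+\sqrt{2\ep/\vartheta}$ uniformly, and Cauchy--Schwarz then bounds the mean by this same constant.

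The main obstacle is case $iii)$, where $\phi$ is only strictly convex (not strongly convex, and possibly of at most linear growth), so no quadratic lower bound is available. Here I would exploit boundedness of the benchmark, say $0\leq \Fbenchinv\leq M$. Writing $B_\phi(z_1,z_2)=\int_{z_2}^{z_1}\big(\phi'(s)-\phi'(z_2)\big)\,\d s$ for $z_1>z_2$ and using that $\phi'$ is strictly increasing, I would first establish the uniform gap $\delta_M:=\min_{z_2\in[0,M]}\big(\phi'(z_2+1)-\phi'(z_2)\big)>0$, which exists and is strictly positive by strict convexity together with continuity of $\phi'$ on the compact interval $[0,M]$. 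This delivers the uniform linear lower bound $B_\phi(z_1,z_2)\geq \delta_M\,(z_1-M-1)$ valid for every $z_1>M+1$ and every $z_2\in[0,M]$. Integrating against the BW constraint and discarding the nonnegative contribution on $\{\Ginv\leq M+1\}$ then controls the tail mass, $\int_{\{\Ginv>M+1\}}\big(\Ginv(t)-M-1\big)\,\d t\leq \ep/\delta_M$, and splitting $\int_0^1\Ginv$ over $\{\Ginv\leq M+1\}$ and its complement yields the uniform mean bound $\int_0^1\Ginv(t)\,\d t\leq 2(M+1)+\ep/\delta_M$.

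Combining any one of the three mean bounds with the reduction of the first paragraph gives a finite upper bound on the objective over $\mQ_\ep(x_0)$, which together with non-emptiness of the feasible set establishes that \eqref{Yaari2} is well-posed. The delicate point throughout is to ensure that every estimate is \emph{uniform} over the feasible set rather than merely pointwise, and in case $iii)$ this hinges on the compactness argument producing the positive gap $\delta_M$ from strict convexity alone.
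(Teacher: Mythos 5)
Your proof is correct, and its overall skeleton coincides with the paper's: both reduce well-posedness, via concavity of $u$ (the paper uses Jensen, you use the tangent-line inequality---the same trick), to showing that $\int_0^1 \Ginv(t)\,\d t$ is bounded uniformly over the feasible set, and both then treat the three cases of \Cref{asm-well-posed}; your arguments for cases $i)$ and $ii)$ are essentially identical to the paper's. The one place where you take a genuinely different route is case $iii)$, which is the only non-routine case. The paper expands $B_\phi(x,y)=\phi(x)-\phi'(y)x+\bigl(y\phi'(y)-\phi(y)\bigr)$, bounds the $y$-dependent terms over the compact benchmark range to get $B_\phi(x,y)\ge \phi(x)-C_1x-C_2$ with $C_1=\phi'(M_b)$, and then uses $\lim_{x\to\infty}\phi'(x)=L>C_1$ (a consequence of strict convexity) to extract linear growth of $\phi(x)-C_1x$ beyond some $a_0$. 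You instead use the integral representation $B_\phi(z_1,z_2)=\int_{z_2}^{z_1}\bigl(\phi'(s)-\phi'(z_2)\bigr)\d s$ and the uniform derivative gap $\delta_M=\min_{z_2\in[0,M]}\bigl(\phi'(z_2+1)-\phi'(z_2)\bigr)>0$, which exists by strict convexity and continuity of $\phi'$ on a compact set, to obtain the linear lower bound $B_\phi(z_1,z_2)\ge \delta_M(z_1-M-1)$ directly. Both devices deliver the same uniform coercivity estimate ``$B_\phi(z_1,z_2)\ge k z_1 - C$ for $z_1$ large and $z_2$ in the benchmark range,'' after which the tail-splitting of $\int_0^1\Ginv$ is the same; your version avoids any discussion of the limit $L$ and the auxiliary choice of $\delta\in(0,L-C_1)$, extracting the required slope from strict monotonicity of $\phi'$ near the benchmark range rather than at infinity, which is arguably slightly cleaner. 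I see no gap in either case.
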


\begin{proof}
{For any feasible solution $\breve{F} \in \mQ_\ep(x_0)$ of problem \eqref{Yaari2}, Jensen's inequality gives
\begin{equation*}
        \int_{0}^{1} u(\breve{F}(t))\mathrm{d}t \leq u\Big( \int_{0}^{1} \breve{F}(t)\mathrm{d}t \Big) \,.
    \end{equation*}
Therefore, to prove that the value $V$ of problem \eqref{Yaari2} is finite, it is sufficient to show that the feasible set is uniformly bounded in $L^1((0,1))$, namely, that there exists a constant $C>0$ such that
$\int_0^1 \breve F(t) \mathrm{d} t\le C$,
for every feasible $\breve{F}$. }

    For case $i)$, we have 
    \begin{align*}
        \int_{0}^{1} \breve{F}(t) \mathrm{d}t  
        \leq 
        \frac{1}{ \displaystyle\lim_{t \to 1} \breve{F}_{\varphi_{T}}(1-t) } \int_{0}^{1} \breve{F}(t) \breve{F}_{\varphi_{T}}(1-t) \mathrm{d}t
     \leq 
     \frac{x_0}{ \displaystyle\lim_{t \to 1} \breve{F}_{\varphi_{T}}(1-t) } 
     <  \infty \,.
    \end{align*}
    
    For case $ii)$, we note that as $\phi$ is $\vartheta$-strong convex, we have $B_\phi\big(z_1, z_2\big) \geq \frac{\vartheta}{2}(z_1-z_2)^2$ as well as $\DB_\phi (\breve{F}_1, \breve{F}_2) \geq \frac{\vartheta}{2} \DB_{\phi_{1}} (\breve{F}_1, \breve{F}_2)$, with $\phi_1(x)=x^2$.
Then, from the BW constraint, we obtain 
\begin{equation*}
   \frac{\vartheta}{2} \DB_{\phi_{1}} (\breve{F}, \breve{F}_b) \leq  \DB_{\phi} (\breve{F}, \breve{F}_b) \leq \varepsilon \,,
\end{equation*}
that is
\begin{equation*}
   \DB_{\phi_{1}} (\breve{F}, \breve{F}_b) = \int_{0}^{1} \big( \breve{F}(t) - \breve{F}_b(t) \big)^2  \mathrm{d}t \leq \frac{2 \varepsilon}{\vartheta}. 
\end{equation*}
Moreover, we have
\begin{align*}
    \int_0^1 \breve{F}(t) \mathrm{d}t  - \int_0^1 \breve{F}_b(t) \mathrm{d}t
    &\le 
   \Big| \int_0^1 \breve{F}(t) \mathrm{d}t  - \int_0^1 \breve{F}_b(t) \mathrm{d}t \Big|
   \\
   &\leq \sqrt{ \int_{0}^{1} \big( \breve{F}(t) - \breve{F}_b(t) \big)^2  \mathrm{d}t } \\
   &\leq \sqrt{\frac{2 \varepsilon}{\vartheta}}\,,
 \end{align*}
 where the second inequality follows from the Cauchy-Schwarz inequality (applied to the functions $\breve{F}(\cdot) - \breve{F}_b(\cdot)$ and $1$), and we obtain  
 \begin{equation*}
     \int_{0}^{1} \breve{F}(t) \mathrm{d}t  
     \leq  \int_0^1 \breve{F}_b(t) \mathrm{d}t +  \sqrt{\frac{2 \varepsilon}{\vartheta}}  
     <  \infty.
\end{equation*}

For case $iii)$, let the Bregman generator $\phi$ be strictly convex and the benchmark be bounded.
Since the benchmark is bounded, then for all $t \in (0,1)$, we have $m_b \leq \breve{F}_b(t) \leq M_b$, where $m_b$ and $M_b$ are two given constants. Consider $x>0$ and $y = \breve{F}_b(t)$, from the definition of Bregman divergence, we have
\begin{align*}
        B_\phi\big( x, y\big) =& \ \phi(x) - \phi(y) - \phi'(y) (x-y) \\
        =& \ \phi(x) - \phi'(y)x + \big( y \phi'(y) - \phi(y) \big),
\end{align*}
where the terms $\phi'(y)$ and $y \phi'(y) - \phi(y)$ are bounded because $y \in [m_b, M_b]$ and $\phi, \phi'$ are continuous on the compact set. 

Let $C_1= \phi'(M_b)$ and $C_2 = \sup_{y \in [m_b,M_b]} |y\phi'(y) - \phi(y)|$, we obtain 
\begin{equation*}
    B_\phi\big( x, y\big) \geq \phi(x) - C_1 x -C_2.
\end{equation*}
Note that $\phi$ is strictly convex, its derivative $\phi'$ is strictly increasing. Hence,
\begin{align*}
    \lim_{x \to \infty} \phi'(x)=L \in (C_1, \infty].
\end{align*}
Next, choose $\delta \in (0, L - C_1)$ (if $L = \infty$, take any $\delta \in (0, \infty)$). Then there exists $a_0 > M_b$ such that $\phi'(x) \ge C_1 + \delta$ for all $x \ge a_0$.
By the convexity of $\phi(x)$, we have
\begin{equation*}
    \phi(x) \geq \phi(a_0) + \phi'(a_0)(x-a_0) \geq \phi(a_0) + (C_1 + \delta)(x-a_0) \,,\quad x \geq a_0.
\end{equation*}
Therefore,
\begin{align*}
    \phi(x) - C_1 x - C_2 \geq & \ \phi(a_0) + (C_1 + \delta)(x-a_0) - C_1 x  - C_2 \\
    = & \ \delta x - ((C_1 + \delta )a_0 - \phi(a_0) ) - C_2  \,,\quad x \geq a_0.
\end{align*}
Taking $k_0 = \delta$ and $C_0 = \max \{C_2 + (C_1 + \delta)a_0 - \phi(a_0) ,0 \}$, we obtain that for all $x > a_0$ and all $y \in [m_b, M_b]$,
\begin{align*}
    B_\phi\big( x, y\big) \geq k_0 x -C_0.
\end{align*}

Let $\mathcal{S} = \{t \in (0,1) \ | \ \breve{F}(t) > a_0 \}$ and $\overline{\mathcal{S}} = \{t \in (0,1) \ | \ \breve{F}(t) \leq a_0 \}$, consider the BW constraint, 
\begin{align*}
     k_0 \int_{\mathcal{S}} \breve{F}(t) \mathrm{d}t - C_0
      &\leq \int_{\mathcal{S}} \big( k_0 \breve{F}(t) - C_0 \big) \mathrm{d}t  \\
    &\leq \int_{\mathcal{S}}B_{\phi}\big( \breve{F}(t), \breve{F}_b(t) \big) \mathrm{d}t \\
   &\leq   \DB_{\phi} (\breve{F}, \breve{F}_b) \\ 
   &\leq \varepsilon,
\end{align*}
where the first inequality arises from  the measure of $\mathcal{S}$ being at most 1. Thus we obtain
\begin{align*}
    \int_{\mathcal{S}} \breve{F}(t) \mathrm{d}t \leq \frac{ C_0 + \varepsilon }{k_0} \qquad \text{and}\qquad 
    \int_{\overline{\mathcal{S}}} \breve{F}(t) \mathrm{d}t \leq a_0.
\end{align*}
Thus, we derive that
\begin{align*}
    \int_{0}^{1} \breve{F}(t) \mathrm{d}t  
     = \int_{\mathcal{S}} \breve{F}(t) \mathrm{d}t + \int_{\overline{\mathcal{S}}} \breve{F}(t) \mathrm{d}t  
     \leq  \frac{ C_0 + \varepsilon }{k_0} + a_0  
     < \infty.
\end{align*} 

In conclusion, we showed that in all three cases, any feasible solution has a finite expected utility. Thus problem \eqref{Yaari2} is well-posed.    
\end{proof}

\begin{remark}\label{remark_convex}
    % For the classical expected-utility maximization problem (i.e., with only the budget constraint), well-posedness and the existence of an optimal solution requires  assumptions; see \cite{jin2008convex} for details. 
    If a Bregman generator $\phi$ does not satisfy strong convexity—for example, $\phi_{2}(x) = x\ln x$ or the threshold-based generator introduced in \Cref{defMBD} —strong convexity can be enforced by adding a regularization term of the form $\frac{\vartheta}{2} x^2$, where $\vartheta > 0$ is sufficiently small (e.g., $10^{-8}$), to the Bregman generator. For instance, for $\phi_{2}(x)$, one may consider the regularized version $\phi_{2,\vartheta}(x) = x\ln x + \frac{\vartheta}{2} x^2 $. Conceptually, this regularization adds a weighted Wasserstein-type constraint, $\frac{\vartheta}{2} \DB_{\phi_{1}} (\Finv_1, \Finv_2)$ with $\phi_{1}(x)=x^2$, to the original Bregman–Wasserstein constraint $\DB_{\phi_2} (\Finv_1, \Finv_2)$, i.e., 
    \begin{equation}\label{add_convex}
        \DB_{\phi_{2,\vartheta}} (\Finv_1, \Finv_2) = \DB_{\phi_2} (\Finv_1, \Finv_2) + \frac{\vartheta}{2} \DB_{\phi_{1}} (\Finv_1, \Finv_2),
    \end{equation}
    thereby ensuring the well-posedness of the problem. As observed from \eqref{add_convex}, when $\vartheta$ is sufficiently small, the deviation penalty is primarily governed by $\DB_{\phi_2} (\Finv_1, \Finv_2)$.
\end{remark}
}

{
\Cref{asm-well-posed} provides three explicit and verifiable sufficient conditions for the well-posedness of problem~\eqref{Yaari2}. Case 1 relies only on an assumption on the state price density \(\varphi_T\), a classical condition in the literature; see Theorem~5.1 in \cite{jin2008convex}. Cases 2 and 3 rely on the BW distance constraint, which restricts the feasible portfolio set to a neighbourhood of the benchmark. 
These cases cover a broad range of practically relevant settings, including both bounded and unbounded benchmarks, and contrast with works where well-posedness is imposed directly; see, e.g., \cite{wei2018risk,mostovyi2018utility}. 

The following example shows that, without such conditions, problem~\eqref{Yaari2} may fail to be well-posed.

\begin{example}[Ill-posedness under the budget and BW divergence constants]
    \label{ex:illposed_budget_BW}
Consider the problem
\begin{align*}
 \sup_{\Ginv \in \mathcal{Q}}\left\{
\int_0^1 u(\breve G(t))\,\mathrm{d}t \ \Big| \ \mathfrak c(\breve G)\le x_0,\ 
\DB_\phi(\breve G,\breve F_b)\le \varepsilon
\right\},
\end{align*}
where $u(x)=\sqrt{x}$ for $x\ge0$, the Bregman generator is $\phi(x)=x+e^{-x}$,
the benchmark is given by
\begin{align*}
 \breve F_b(t)=(1-t)^{-1/2},\quad t\in(0,1),
\end{align*}
and the quantile function of the state-price density is 
\begin{align*}
\breve{F}_{\varphi_T}(t)=e^{-t^{-\beta}},\quad t\in(0,1), 
\end{align*}
with $\beta=\frac35$. { We observe that none of the three cases in  \Cref{asm-well-posed} applies.}

For $N\in\mathbb N$, define
\begin{align*}
 A_N:=N^4,
\quad
\breve G_N(t):=\breve F_b(t)+A_N\,\mathbf 1_{(1-\frac1N,\,1)}(t),\quad t\in(0,1).
\end{align*}
Then each \(\breve G_N:(0,1)\to[0,\infty)\) is nondecreasing and left-continuous, hence a quantile function. Also, for every $y,a\ge0$,
\begin{align*}
B_\phi(y+a,y)
=
\phi(y+a)-\phi(y)-\phi'(y)a
=
e^{-y}(e^{-a}-1+a)
\le a e^{-y}. 
\end{align*}
Therefore,
\begin{align*}
\DB_\phi(\breve G_N,\breve F_b)
=
\int_{1-\frac1N}^1
B_\phi\!\bigl(\breve F_b(t)+A_N,\breve F_b(t)\bigr)\, \mathrm{d}t
\le
A_N\int_{1-\frac1N}^1 e^{-\breve F_b(t)}\,\mathrm{d}t. 
\end{align*}
With the change of variable \(s=1-t\), this yields
\begin{align*}
\DB_\phi(\breve G_N,\breve F_b)
&\le 
A_N\int_0^{ \frac{1}{N} } e^{-s^{-\frac{1}{2}}}\,\mathrm{d}s \\
&\le 
A_N\cdot \frac1N e^{-N^{\frac{1}{2}}} \\
 &=
N^3e^{-N^{\frac{1}{2}}}
\longrightarrow 0 \quad \text{as} \quad N \to \infty. 
\end{align*}
Similarly,
\begin{align*}
 \mathfrak c(\breve G_N)-\mathfrak c(\breve F_b)
&=
A_N\int_{1-\frac1N}^1 \breve{F}_{\varphi_T}(1-t)\,\mathrm{d}t \\
&=
A_N\int_0^{\frac{1}{N}} e^{-s^{-\beta}}\,\mathrm{d}s \\
& \le
A_N\cdot \frac1N e^{-N^\beta} \\
& =
N^3e^{-N^\beta}
\longrightarrow 0 \quad \text{as} \quad N \to \infty.
\end{align*}
Since
\begin{align*}
\mathfrak c(\breve F_b)
=
\int_0^1 (1-t)^{-1/2}e^{-(1-t)^{-\beta}}\,\mathrm{d}t =
\int_0^1 s^{-1/2}e^{-s^{-\beta}}\,\mathrm{d}s 
<\infty, 
\end{align*}
it follows that
\begin{align*}
x_0:=\sup_{N\ge1}\mathfrak c(\breve G_N)<\infty,
\quad
\varepsilon:=\sup_{N\ge1}\DB_\phi(\breve G_N,\breve F_b)<\infty. 
\end{align*}
Hence $(\breve G_N)_{N\ge1}$ is feasible for the above problem.

On the other hand, since $\breve G_N(t)\ge A_N$ on $(1-\frac1N,1)$,
\begin{align*}
\int_0^1 u(\breve G_N(t))\,\mathrm{d}t
&=
\int_0^1 \sqrt{\breve G_N(t)}\,\mathrm{d}t \\
&\ge
\int_{1-\frac1N}^1 \sqrt{A_N}\,\mathrm{d}t \\
& =
\frac1N\sqrt{A_N}
=
N
\longrightarrow\infty \quad \text{as} \quad N \to \infty.
\end{align*}
Consequently,
\begin{align*}
 \sup_{\Ginv \in \mathcal{Q}}\left\{
\int_0^1 u(\breve G(t))\,\mathrm{d}t \ \Big| \ \mathfrak c(\breve G)\le x_0,\ 
\DB_\phi(\breve G,\breve F_b)\le \varepsilon
\right\}
=+\infty, 
\end{align*}
so the problem is ill-posed.
\end{example}
}

\begin{theorem}[Existence and uniqueness]  \label{Exist}
{  Let Assumptions \ref{assumption_1} and \ref{asm-well-posed} be satisfied.} Then
there exists a unique solution, denoted by $\Finv^*$, of Problem \eqref{Yaari2}. 
\end{theorem}

\begin{proof}
The proof is broken into two parts. 	First, we show that the constraint set $\mQ_\ep(x_0)$ is convex. Second, we prove existence and uniqueness of the solution.

	\underline{Part 1.} Convexity of constraint set. Take $\Ginv_1,\Ginv_2 \in \mQ_\ep(x_0) $ and define the convex combination $\Ginv_{\omega}:=\omega \Ginv_1 + (1-\omega)\Ginv_2$ for $\omega \in [0,1]$.  {Since $\mathcal{Q}$ is convex, we have $\breve {G}_{\omega}\in \mathcal{Q}$. }It remains to verify the budget and Bregman–Wasserstein constraints.  Clearly, $\int_0^1 \Ginv_\omega(t)\Finv_{\varphi_T}(1-t) \mathrm{d}t  \leq x_0$. Furthermore, as the Bregman divergence is convex in its first argument, we have 
 \begin{align*}
     \DB_\phi (\Ginv_\omega, \Fbenchinv)
     &=
     \int_0^1 B_\phi\big( \Ginv_\omega(t), \Fbenchinv(t)\big)\,\d t
     \\
     &\leq 
     \omega \int_0^1 B_\phi\big( \Ginv_1(t), \Fbenchinv(t)\big)\,\d t 
     + (1-\omega)\int_0^1 B_\phi\big( \Ginv_2(t), \
     \Fbenchinv(t)\big)\,\d t
     \\
     &=\omega \,\DB_\phi (\Ginv_1, \Fbenchinv) + (1-\omega) \,\DB_\phi (\Ginv_2, \Fbenchinv) 
     \\
     &\leq \epsilon\,,
 \end{align*}
 where the last inequality follows since $\Ginv_1, \Ginv_2\in\mQ_\ep(x_0)$.
  Hence $\Ginv_{\omega} \in  \mQ_\ep(x_0)$ and  $\mQ_\ep(x_0)$ is a convex set.

 \underline{Part 2.} Existence and uniqueness. {
    Let $V$ denote the value of problem \eqref{Yaari2}. 
    Since $V < \infty$ (\Cref{Well-posedness}) and the feasible region is nonempty (\Cref{assumption_1}), there exists a sequence $(\breve{G}_{n})_{n \geq 1} \subseteq \mathcal{Q}_{\varepsilon}(x_0)$ such that 
     \begin{equation}\label{eq:maxseq}
     \int_{0}^{1} u(\breve{G}_{n}(t)) \mathrm{d} t  > V - \frac{1}{n}, \quad \forall n \geq 1.
     \end{equation}

By the proof of \Cref{Well-posedness}, there exists a constant $C>0$ such that
\begin{equation}\label{eq:L1bound}
\int_0^1 \breve G(t)\, \mathrm{d}t \le C,
\quad \forall\, \breve G\in \mathcal Q_\varepsilon(x_0).
\end{equation}
In particular,
\begin{align*}
\sup_{n\ge1}\|\breve G_n\|_{L^1((0,1))}<\infty. 
\end{align*}
Therefore, by Koml\'os' theorem \citep{komlos1967generalization}, see also Theorem 1.1 in \cite{gao2017uo}, there exist a subsequence \((\breve G_{n_k})_{k\ge1}\) and some \(\breve G^*\in L^1((0,1))\) such that the Ces\`aro means
\begin{equation}\label{eq:Cesaro}
\bar G_m(t):=\frac1m\sum_{k=1}^m \breve G_{n_k}(t),\quad m\ge1,
\end{equation}
satisfy
\begin{equation}\label{eq:aeconv}
\bar G_m(t)\longrightarrow \breve G^*(t) \quad \text{as} \quad m \to \infty
\quad \text{for a.e. }t\in(0,1).
\end{equation}

We next show that $\breve G^*$ is feasible.
Since each $\breve G_{n_k} \in \mQ^{ae}$ and $\mQ^{ae}$ is convex, every Ces\`aro mean $\bar G_m$ also belongs to $\mQ^{ae}$. 
In particular, $\bar G_m\ge 0$ a.e. for every $m$, and hence $\breve G^*\ge0$ a.e. by \eqref{eq:aeconv}. 
Moreover, since $\bar G_m\in \mQ^{ae}$ for all $m$, there exists a measurable set $E\subset(0,1)$ with Lebesgue measure one such that, for every $m$, $\bar G_m$ is non-decreasing on $E$, and $\bar G_m(t)\to \breve G^*(t)$ for every $t\in E$. Hence, for any $s,t\in E$ with $s<t$, we have
\begin{align*}
\bar G_m(s)\le \bar G_m(t),\quad \forall m\ge1. 
\end{align*}
Passing to the limit as $m\to\infty$, we obtain
\begin{align*}
\breve G^*(s)\le \breve G^*(t). 
\end{align*}
Therefore, $\breve G^*$ is non-decreasing on $E$, and thus $\breve G^*$ is a.e. non-decreasing. Then, we obtain $\breve G^*\in \mQ^{ae}$.

Moreover, since the budget functional is linear,
\begin{align*}
\mathfrak{c}(\bar G_m)
=\frac1m\sum_{k=1}^m \mathfrak{c}(\breve G_{n_k})
\le x_0,
\quad m\ge1. 
\end{align*}
Because $\breve F_{\varphi_T}(1-\cdot)\ge0$, Fatou's lemma and \eqref{eq:aeconv} imply
\begin{equation}\label{eq:budgetlimit}
\mathfrak{c}(\breve G^*)
=
\int_0^1 \breve G^*(t)\,\breve F_{\varphi_T}(1-t)\, \mathrm{d}t
\le
\liminf_{m\to\infty}
\int_0^1 \bar G_m(t)\,\breve F_{\varphi_T}(1-t)\,\mathrm{d}t
\le x_0.
\end{equation}

Next, since $\DB_\phi(\cdot,\breve F_b)$ is convex in its first argument,
\begin{align*}
\DB_\phi(\bar G_m,\breve F_b)
\le
\frac1m\sum_{k=1}^m \DB_\phi(\breve G_{n_k},\breve F_b)
\le \varepsilon,
\quad m\ge1. 
\end{align*}
Hence \eqref{eq:aeconv} yields
\begin{align*}
B_\phi(\breve G^*(t),\breve F_b(t))
\le
\liminf_{m\to\infty} B_\phi(\bar G_m(t),\breve F_b(t))
\quad \text{for a.e. }t\in(0,1). 
\end{align*}
Since the Bregman integrand is non-negative, Fatou's lemma gives
\begin{align}
\DB_\phi(\breve G^*,\breve F_b)
&=
\int_0^1 B_\phi(\breve G^*(t),\breve F_b(t))\, \mathrm{d} t \notag\\
&\le
\int_0^1 \liminf_{m\to\infty} B_\phi(\bar G_m(t),\breve F_b(t))\,\mathrm{d}t \notag\\
&\le
\liminf_{m\to\infty}\int_0^1 B_\phi(\bar G_m(t),\breve F_b(t))\,\mathrm{d}t \notag\\
&=
\liminf_{m\to\infty}\DB_\phi(\bar G_m,\breve F_b)
\le \varepsilon.
\label{eq:Bregmanlimit}
\end{align}
Therefore, $\breve G^*\in \mathcal Q_\varepsilon(x_0)$.

We now show that $\breve G^*$ is optimal. Since $u$ is concave, Jensen's inequality gives, for every $m\ge1$ and every $t\in(0,1)$,
\begin{align*}
u(\bar G_m(t))
=
u\!\left(\frac1m\sum_{k=1}^m \breve G_{n_k}(t)\right)
\ge
\frac1m\sum_{k=1}^m u(\breve G_{n_k}(t)). 
\end{align*}
Integrating over $(0,1)$, we obtain
\begin{equation}\label{eq:Jensen}
\int_0^1 u(\bar G_m(t))\,\mathrm{d}t
\ge
\frac1m\sum_{k=1}^m \int_0^1 u(\breve G_{n_k}(t))\,\mathrm{d}t.
\end{equation}
According to \eqref{eq:maxseq},
\begin{align*}
V-\frac1{n_k}
<
\int_0^1 u(\breve G_{n_k}(t))\,\mathrm{d}t
\le V,
\quad k\ge1. 
\end{align*}
Hence
\begin{align*}
\int_0^1 u(\breve G_{n_k}(t))\,\mathrm{d}t \longrightarrow V
\quad \text{as} \quad  k\to\infty, 
\end{align*}
and therefore, by Ces\`aro mean convergence theorem \citep[c.f. Corollary 1.1 in][]{linero2013toeplitz},
\begin{equation}\label{eq:CesaroV}
\frac1m\sum_{k=1}^m \int_0^1 u(\breve G_{n_k}(t))\,\mathrm{d}t
\longrightarrow V
\quad \text{as} \quad m\to\infty.
\end{equation}
Combining \eqref{eq:Jensen} and \eqref{eq:CesaroV}, we obtain
\begin{equation}\label{eq:liminfV}
\liminf_{m\to\infty}\int_0^1 u(\bar G_m(t))\,\mathrm{d}t \ge V.
\end{equation}

It remains to pass to the limit in the objective. Note that
\begin{align*}
u(0)=u_0:=\lim_{x\downarrow0}u(x)\in\mathbb R, 
\end{align*}
and $u$ is a continuous function on $[0,\infty)$. By \eqref{eq:L1bound},
\begin{align*}
\int_0^1 \bar G_m(t)\,\mathrm{d}t
=
\frac1m\sum_{k=1}^m \int_0^1 \breve G_{n_k}(t)\,\mathrm{d}t
\le C,
\quad m\ge1. 
\end{align*}

If $u$ is bounded above, then $\{u(\bar G_m)\}_{m\ge1}$ is uniformly bounded, hence uniformly integrable. Assume now that $u$ is unbounded above. Define
\begin{align*}
Y_m(t):=u(\bar G_m(t))-u_0\ge0. 
\end{align*}
Since $u$ is strictly increasing and concave on $[0,\infty)$, its inverse
\begin{align*}
u^{-1}:[u_0,\infty)\to[0,\infty) 
\end{align*}
is increasing and convex. Let
\begin{align*}
\Psi(y):=u^{-1}(y+u_0),\quad y\ge0. 
\end{align*}
Then $\Psi$ is increasing and convex. Moreover, by the Inada condition
\begin{align*}
\lim_{x\to\infty}u'(x)=0, 
\end{align*}
concavity implies
\begin{align*}
 \frac{u(x)}{x} \longrightarrow 0
\quad\text{as} \quad x\to    \infty.
\end{align*}
Let $x=\Psi(y)$. Then $y=u(x)-u_0$, and therefore
\begin{align*}
\frac{\Psi(y)}{y}
=
\frac{x}{u(x)-u_0}. 
\end{align*}
Since
\begin{align*}
\frac{u(x)-u_0}{x}
=
\frac{u(x)}{x}-\frac{u_0}{x}
\longrightarrow 0
\quad \text{as} \quad x\to\infty, 
\end{align*}
it follows that
\begin{align*}
\frac{\Psi(y)}{y}\longrightarrow\infty
\quad \text{as} \quad y\to\infty. 
\end{align*}
Furthermore,
\begin{align*}
\int_0^1 \Psi(Y_m(t))\,\mathrm{d} t
=
\int_0^1 u^{-1}(u(\bar G_m(t)))\,\mathrm{d}t
=
\int_0^1 \bar G_m(t)\,\mathrm{d}t
\le C,
\quad m\ge1. 
\end{align*}
By the de la Vall\'ee--Poussin criterion \citep[see Theorem 1.1 in][]{hu2011note}, $(Y_m)_{m\ge1}$ is uniformly integrable. Since
\begin{align*}
 u(\bar G_m)=Y_m+u_0,
\end{align*}
it follows that $\{u(\bar G_m)\}_{m\ge1}$ is uniformly integrable.

Combining with \eqref{eq:aeconv} and continuity of $u$ on $[0,\infty)$, we have
\begin{align*}
 u(\bar G_m(t))\longrightarrow u(\breve G^*(t))
\quad\text{as} \quad m\to\infty \quad \text{for a.e. }t\in(0,1).
\end{align*}
According to Theorem 4.6.3 in 
\cite{durrett2019probability}, we have
\begin{equation}\label{eq:Vitali}
\int_0^1 u(\bar G_m(t))\,\mathrm{d}t
\longrightarrow
\int_0^1 u(\breve G^*(t))\,\mathrm{d}t.
\end{equation}
Combining \eqref{eq:liminfV} and \eqref{eq:Vitali}, we conclude that
\begin{align*}
\int_0^1 u(\breve G^*(t))\,\mathrm{d}t \ge V. 
\end{align*}

Since $\breve G^*\in \mQ^{ae}$, we can choose a non-decreasing left-continuous representative, still denoted by $\breve G^*$, of its $L^1((0,1))$-equivalence class. Then $\breve G^*\in \mQ$. Moreover, the objective functional, the budget functional, and the BW-divergence constraint are all invariant under modifications on sets of Lebesgue measure zero. Therefore, this representative also belongs to $\mathcal Q_\varepsilon(x_0)$ and satisfies
\begin{align*}
\int_0^1 u(\breve G^*(t))\,\mathrm{d}t = V. 
\end{align*}
Hence $\breve G^*$ is an optimal solution of Problem \eqref{Yaari2} in the original admissible class $\mQ$.

Finally, uniqueness follows from the strict concavity of $u$ and the convexity of the feasible set $\mathcal Q_\varepsilon(x_0)$. More precisely, if $\breve G_1,\breve G_2\in\mathcal Q_\varepsilon(x_0)$ are both optimal and differ on a set of positive measure, then for every $\theta\in(0,1)$,
\begin{align*}
\breve G_\theta:=\theta \breve G_1+(1-\theta)\breve G_2
\in \mathcal Q_\varepsilon(x_0), 
\end{align*}
and strict concavity yields
\begin{align*}
u(\breve G_\theta(t))
>
\theta u(\breve G_1(t))+(1-\theta)u(\breve G_2(t)) 
\end{align*}
on a set of positive measure. Integrating gives
\begin{align*}
\int_0^1 u(\breve G_\theta(t))\,\mathrm{d}t
>
\theta \int_0^1 u(\breve G_1(t))\,\mathrm{d}t
+(1-\theta)\int_0^1 u(\breve G_2(t))\,\mathrm{d}t
=V, 
\end{align*}
a contradiction. Hence the optimal solution is unique up to equality almost everywhere in $L^1((0,1))$. Consequently, its non-decreasing left-continuous representative in $\mQ$ is unique.
}
\end{proof}

\subsection{Optimal Quantile Function}

{
As discussed in \Cref{Qae}, we work in $L^1((0,1))$ and consider set $\mathcal{Q}^{ae}$ which is a convex subset of $L^1((0,1))$. Endowed with the $L^1$-norm, $L^1((0,1))$ is a Banach space. This setting allows us to apply tools from convex optimization to problem~\eqref{Yaari2}, provided that the objective and constraint functionals satisfy the required regularity conditions. A concise review of the relevant results is given in \Cref{convex opti}.
}

In this section, {we work under Assumptions \ref{assumption_1} and \ref{asm-well-posed}} and rewrite problem \eqref{Yaari2} as follows:
\begin{align}\label{new_problem}
\inf_{\Ginv \in \mQ^{ae}} \left\{\int_{0}^{1} -u\big(\Ginv(t)\big)\d t \ \Big| \ 
\cost(\Ginv)- x_0 \leq 0, \ \DB_\phi (\Ginv, \Fbenchinv)-\varepsilon \leq 0 \right\}.
\end{align}

{
In the proof of \Cref{Exist} (Part 1), we established the convexity of the feasible set. Moreover, the objective function is convex, the budget constraint is linear, and the BW-distance constraint is convex in its first argument. To apply \Cref{main_optimal_th}, we extend the objective and constraint functions to the whole space $L^1((0,1))$.

Let $\bar{\mathbb{R}} := \mathbb{R} \cup \{+\infty\}$, and define the objective functional $\bar{\mathfrak{u}}(\cdot): L^1((0,1)) \to \bar{\mathbb{R}}$ by
\begin{align*}
\bar{\mathfrak{u}}(\Ginv):=
\begin{cases}
-\int_0^1 u(\Ginv(t))\, \mathrm{d}t, & \Ginv\in\mQ^{ae},\\
+\infty, & \Ginv\notin\mQ^{ae}.
\end{cases} 
\end{align*}
Next, define the constraint functionals $\bar{\cost}(\cdot),\bar{\DB}_{\phi,\Fbenchinv}(\cdot)$ by
\begin{align*}
\bar{\cost}(\Ginv):=
\begin{cases}
\cost(\Ginv)-x_0, & \text{if }\cost(\Ginv)<\infty,\\
+\infty, & \text{otherwise},
\end{cases} 
\end{align*}
and
\begin{align*}
\bar{\DB}_{\phi,\Fbenchinv}(\Ginv):=
\begin{cases}
\DB_\phi(\Ginv,\Fbenchinv)-\varepsilon,
& \text{if }\DB_\phi(\Ginv,\Fbenchinv)<\infty,\\
+\infty, & \text{otherwise}.
\end{cases} 
\end{align*}

Hence problem \eqref{new_problem} is equivalent to
\begin{equation}\label{new_problem_ext}
\inf_{\Ginv \in L^{1}((0,1))} \bigg\{ \bar{\mathfrak{u}}(\Ginv) \ \Big| \ 
\bar{\cost}(\Ginv) \le 0,\ 
\bar{\DB}_{\phi,\Fbenchinv}(\Ginv) \le 0 \bigg\}.
\end{equation}

\begin{proposition}\label{existlag}
Suppose that Assumptions~\ref{assumption_1} and~\ref{asm-well-posed} hold.
If $\Ginv^*$ is  a solution to problem \eqref{new_problem_ext}, there exist Lagrange multipliers $\lambda^*,\mu^*\ge0$ such that
\begin{align*}
\lambda^*\bar{\cost}(\Ginv^*)=0,
\quad \text{and} \quad
\mu^*\bar{\DB}_{\phi,\Fbenchinv}(\Ginv^*)=0, 
\end{align*}
and 
\begin{align}\label{theLagrangian}
\Ginv^*
 \in  \argmin_{\Ginv\in L^{1}((0,1))}
\bigg\{ \bar{\mathfrak{u}}(\Ginv) + \lambda^*  
\bar{\cost}(\Ginv) + \mu^*
\bar{\DB}_{\phi,\Fbenchinv}(\Ginv) \bigg\}.
\end{align}
\end{proposition}

\begin{proof}
    We verify the conditions of \Cref{main_optimal_th}. First, $X=L^{1}((0,1))$, endowed with the $L^1$-norm, is a Banach space and hence a separated locally convex space.

Second, we show that $\bar{\mathfrak u}$ is proper (see \Cref{app_def}). 
Taking $\Ginv(t)=0$ for $t \in (0,1)$, we have $\Ginv\in\mQ^{ae}$ and
\begin{align*}
\bar{\mathfrak u}(\Ginv)
=
-\int_0^1 u(0)\, \mathrm{d}t
=
-u(0)<+\infty. 
\end{align*}
Hence $\operatorname{dom}\bar{\mathfrak u}\neq\emptyset$.
Moreover, let $\breve G\in\mQ^{ae}$. Then $\breve G\ge0$ a.e. and $\breve G\in L^1((0,1))$. Since \(u\) is increasing and $u(0)\in\mathbb R$, we have
\begin{align*}
u(\breve G(t))\ge u(0)
\quad \text{for a.e. }t\in(0,1), 
\end{align*}
and therefore $\int_0^1 u(\breve G(t)) \mathrm{d}t>-\infty$.
On the other hand, since $u$ is concave on $[0,\infty)$, for any fixed $a>0$ and all $x\ge0$,
\begin{align*}
u(x)\le u(a)+u'(a)(x-a). 
\end{align*}
Then there exist constants $C_1\in\mathbb R$ and $C_2>0$ such that
\begin{align*}
u(x)\le C_1+C_2x,\quad x\ge0. 
\end{align*}
Therefore,
\begin{align*}
u(\breve G(t))\le C_1+C_2\breve G(t)
\quad \text{for a.e. }t\in(0,1). 
\end{align*}
Since $\breve G\in L^1((0,1))$, it follows that $\int_0^1 u(\breve G(t)) \mathrm{d}t<+\infty$.
Consequently,
\begin{align} \label{utility_finite}
- \infty < \int_0^1 u(\breve G(t))\mathrm{d}t < +\infty
\quad \text{for every } \breve G\in\mQ^{ae}. 
\end{align}
Thus $\bar{\mathfrak u}$ never takes the value $-\infty$ and $\bar{\mathfrak u}$ is proper. Convexity of $\bar{\mathfrak u}$ follows from the convexity of $-u$ and the convexity of $\mQ^{ae}$. { Hence $\bar{\mathfrak u}(\cdot)\in\Lambda(X)$, the class of proper convex functions on $X$, see \Cref{app_def}.} Similarly, by \Cref{assumption_1} and the convexity properties, we have $\bar{\cost}(\cdot), \bar{\DB}_{\phi,\Fbenchinv}(\cdot)\in \Lambda(X)$.

Finally, by Assumption \ref{assumption_1}, there exists $\Ginv \in\mQ$, identified with an element of $\mQ^{ae}$, such that
\begin{align*}
\bar{\cost}(\Ginv)<0
\quad\text{and}\quad
\bar{\DB}_{\phi,\Fbenchinv}(\Ginv)<0. 
\end{align*}
Thus the Slater condition in \Cref{main_optimal_th} holds.

Let $\Ginv^*$ be the solution of \eqref{new_problem_ext}. By \Cref{main_optimal_th}, there exist Lagrange multipliers $\lambda^*,\mu^*\ge0$ such that
\begin{align*}
\lambda^*\bar{\cost}(\Ginv^*)=0,
\quad \text{and} \quad
\mu^*\bar{\DB}_{\phi,\Fbenchinv}(\Ginv^*)=0, 
\end{align*}
and $0\in
\partial  \mathcal L_{\lambda^*,\mu^*} (\Ginv^*)$, where
\begin{align*}
 \mathcal L_{\lambda^*,\mu^*} := \bar{\mathfrak u}
+\lambda^*\bar{\cost}
+\mu^*\bar{\DB}_{\phi,\Fbenchinv}.
\end{align*}

Since $\bar{\mathfrak u},\bar{\cost},\bar{\DB}_{\phi,\Fbenchinv}\in\Lambda(X)$, $\lambda^*,\mu^*\ge0$, and $\Ginv^*$ is feasible for \eqref{new_problem_ext}, the functional $\mathcal L_{\lambda^*,\mu^*}$ is proper and convex, i.e., $\mathcal L_{\lambda^*,\mu^*}(\cdot) \in \Lambda(X)$.  \Cref{app_partial} yields
\begin{align*}
\Ginv^*
& \in \argmin_{\Ginv\in L^{1}((0,1))}
 \mathcal L_{\lambda^*,\mu^*}(\Ginv)   \\
& = \argmin_{\Ginv\in L^{1}((0,1))}
\bigg\{ \bar{\mathfrak{u}}(\Ginv) + \lambda^*  
\bar{\cost}(\Ginv) + \mu^*
\bar{\DB}_{\phi,\Fbenchinv}(\Ginv) \bigg\}. 
\end{align*}
\end{proof}

\begin{corollary}[At least one binding constraint]
\label{cor_binding}
Suppose that Assumptions~\ref{assumption_1} and~\ref{asm-well-posed} hold, and let
$\Ginv^*$ be a solution of problem \eqref{new_problem_ext}. Then at least one of the two constraints is binding at $\Ginv^*$, that is,
\begin{align*}
\cost(\Ginv^*)=x_0
\quad\text{or}\quad
\DB_\phi(\Ginv^*,\Fbenchinv)=\varepsilon. 
\end{align*}
\end{corollary}

\begin{proof}
By \Cref{existlag}, there exist $\lambda^*,\mu^*\ge0$ such that
\begin{align*}
\lambda^*\bar{\cost}(\Ginv^*)=0,
\quad
\mu^*\bar{\DB}_{\phi,\Fbenchinv}(\Ginv^*)=0, 
\end{align*}
and
\begin{align*}
\Ginv^*
\in
\argmin_{\Ginv\in L^1((0,1))}
\mathcal L_{\lambda^*,\mu^*}(\Ginv). 
\end{align*}

We first show that $(\lambda^*,\mu^*)\neq(0,0)$. Suppose, to the contrary, that
$\lambda^*=\mu^*=0$. Then $\mathcal L_{\lambda^*,\mu^*}=\bar{\mathfrak u}$
and hence
\begin{align*}
\Ginv^*
\in
\argmin_{\Ginv\in L^1((0,1))}
\bar{\mathfrak u}(\Ginv). 
\end{align*}
For any $c>0$, define
\begin{align*}
\Ginv_c(t):=\Ginv^*(t)+c,\quad t\in(0,1). 
\end{align*}
Since $\Ginv^*\in\mQ^{ae}$, we have $\Ginv_c\in\mQ^{ae}$. Moreover, by the strict monotonicity of $u$,
\begin{align}
u(\Ginv_c(t))>u(\Ginv^*(t))
\quad\text{for a.e. }t\in(0,1). 
\end{align}
Therefore,
\begin{align*}
\bar{\mathfrak u}(\Ginv_c)
=
-\int_0^1 u(\Ginv_c(t))\,dt
<
-\int_0^1 u(\Ginv^*(t))\,dt
=
\bar{\mathfrak u}(\Ginv^*), 
\end{align*}
which contradicts the minimality of $\Ginv^*$ for $\bar{\mathfrak u}$. Hence $(\lambda^*,\mu^*)\neq(0,0)$.

Now suppose that both constraints are slack
\begin{align*}
\cost(\Ginv^*)<x_0,
\quad
\DB_\phi(\Ginv^*,\Fbenchinv)<\varepsilon. 
\end{align*}
Equivalently,
\begin{align*}
\bar{\cost}(\Ginv^*)<0,
\quad
\bar{\DB}_{\phi,\Fbenchinv}(\Ginv^*)<0. 
\end{align*}
Since $\lambda^*,\mu^*\ge0$, the complementary slackness conditions imply $(\lambda^*,\mu^*)=(0,0)$,
which contradicts $(\lambda^*,\mu^*)\neq(0,0)$. Therefore, both constraints cannot be slack simultaneously. Hence at least one constraint is binding.
\end{proof}

We now express the Lagrangian of \eqref{theLagrangian} on its effective domain.
For any \(\Ginv\in\operatorname{dom}\mathcal L_{\lambda^*,\mu^*}\), we have
\(\Ginv\in\mQ^{ae}\), and the extended-value functionals reduce to their ordinary integral forms. Hence
\begin{align}
\mathcal L_{\lambda^*,\mu^*}(\Ginv) =& \int_{0}^{1} - u(\Ginv(t))\d t  + \lambda^{*} \cost(\Ginv)  + \mu^{*}  \DB_\phi(\Ginv, \Fbenchinv) -\lambda^*x_0-\mu^*\varepsilon \,.
\label{ordinary_int} 
\end{align}
If \(\Ginv\notin\operatorname{dom}\mathcal L_{\lambda^*,\mu^*}\), then $\mathcal L_{\lambda^*,\mu^*}(\Ginv)=+\infty$.
Therefore, the minimization of the extended Lagrangian over $X=L^1((0,1))$ is equivalent to minimizing \eqref{ordinary_int} over the effective domain of $\mathcal L_{\lambda^*,\mu^*}$.

Thus, for general $\lambda,\mu\ge0$, we introduce the pointwise candidate associated with the Lagrangian integrand. For fixed $t\in(0,1)$, the relevant ordinary integrand is
\begin{align}
-u(\Ginv(t))
+\lambda \Ginv(t)\breve{F}_{\varphi_T}(1-t)
+\mu B_\phi(\Ginv(t),\Fbenchinv(t)). 
\end{align}
Since
\begin{align*}
B_\phi(y,\Fbenchinv(t))
=
\phi(y)-\phi(\Fbenchinv(t))-\phi'(\Fbenchinv(t))(y-\Fbenchinv(t)), 
\end{align*}
the terms independent of $y$ do not affect the pointwise minimizer. Hence, for fixed $t\in(0,1)$, we consider
\begin{equation}
    h_{t}(y) :=\  m(y) + \lambda y \breve{F}_{\varphi_T}(1-t) - \mu \phi'\big(\Fbenchinv(t)  \big) y\,, \label{integrand}
\end{equation}
where we set $m(y):=-u(y)+\mu \phi(y)$.
Then, we consider the pointwise optimiser of $h_t(\cdot)$, that is for each $t\in(0,1)$ we solve 
\begin{equation}\label{G*} 
 \argmin_{y\in[0, \infty)} \; h_t(y)\,.    
 \end{equation}
The pointwise minimizer, when finite, gives a candidate for the Lagrangian minimizer.

\begin{proposition}[Pointwise candidate]
\label{mini}
For fixed $\lambda,\mu\ge0$, define the extended-valued function $\Ginv_{\lambda,\mu}:(0,1) \to [0,\infty]$ by
\begin{align}\label{opt_sol}
\Ginv_{\lambda,\mu}(t)
:=
\inf\left\{
y\ge0
\ \middle|\ 
\frac{\partial}{\partial y}h_t(y)\ge0
\right\}, 
\end{align}
with the convention $\inf\emptyset:=+\infty$. If
$\Ginv_{\lambda,\mu}(t)<+\infty$, then
$\Ginv_{\lambda,\mu}(t)$ is the unique minimizer of $h_t$ over
$[0,\infty)$. If $\Ginv_{\lambda,\mu}(t)=+\infty$, then $h_t$ has no finite minimizer over $[0,\infty)$, and its infimum is approached as $y\to\infty$.
\end{proposition}

\begin{proof}
Fix $t\in(0,1)$. Since $u$ is strictly concave and $\phi$ is convex, the function
\begin{align*}
h_t(y)
=
-u(y)
+\lambda y\breve F_{\varphi_T}(1-t)
+\mu\bigl(\phi(y)-\phi'(\Fbenchinv(t))y\bigr),
\quad y\ge0, 
\end{align*}
is strictly convex in $y$. Moreover, $h_t$ is differentiable on $(0,\infty)$, and
\begin{align*}
\frac{\partial}{\partial y}h_t(y)
=
-u'(y)
+\lambda \breve F_{\varphi_T}(1-t)
+\mu\bigl(\phi'(y)-\phi'(\Fbenchinv(t))\bigr) 
\end{align*}
is continuous and strictly increasing in $y$.

By the Inada condition $\lim_{y \to 0}u'(y)=\infty$, we have
\begin{align*}
\lim_{y\downarrow0}\frac{\partial}{\partial y}h_t(y)<0. 
\end{align*}
Then the minimizer, if finite, cannot occur at $0$.

If there exists $\hat y\in(0,\infty)$ such that
\begin{align*}
\frac{\partial}{\partial y}h_t(\hat y)=0, 
\end{align*}
then the strict convexity of $h_t$ implies that $\hat y$ is the unique minimizer of $h_t$ over $[0,\infty)$. In this case, by the definition of $\Ginv_{\lambda,\mu}(t)$ and the continuity and monotonicity of $\frac{\partial}{\partial y}h_t$, we have
\begin{align*}
\Ginv_{\lambda,\mu}(t)=\hat y. 
\end{align*}

If no such finite $\hat y$ exists, then, since $\frac{\partial}{\partial y}h_t$ is increasing and negative near zero, either $\frac{\partial}{\partial y}h_t(y)<0$ for all $y>0$. Hence $h_t$ is decreasing on $[0,\infty)$ and has no finite minimizer. In this case the set
\begin{align*}
\left\{
y\ge0
\ \middle|\ 
\frac{\partial}{\partial y}h_t(y)\ge0
\right\} 
\end{align*}
is empty, and by convention $\Ginv_{\lambda,\mu}(t)=+\infty$.
This value records that the pointwise problem has no finite minimizer.
\end{proof}

The next proposition establishes the monotonicity of the pointwise candidate $\Ginv_{\lambda, \mu}$.

\begin{proposition}\label{non-decreasing}
For all $\lambda , \mu \ge 0$, $ \Ginv_{\lambda,\mu}(\cdot)$ is non-decreasing. 
\end{proposition}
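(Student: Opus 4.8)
The plan is to work directly from the closed form for $\Ginv_{\lambda,\mu}$ established in \Cref{mini}. Recall that $m(y)=-u(y)+\mu\phi(y)$, so $\tfrac{\partial}{\partial y}h_t(y)=m'(y)+\lambda\Finv_{\varphi_T}(1-t)-\mu\phi'(\Fbenchinv(t))$. Introducing the threshold $\kappa(t):=\mu\,\phi'\big(\Fbenchinv(t)\big)-\lambda\,\Finv_{\varphi_T}(1-t)$, the definition \eqref{opt_sol} reads $\Ginv_{\lambda,\mu}(t)=\inf\{y\ge0\mid m'(y)\ge\kappa(t)\}$. From the proof of \Cref{mini}, $m'$ is continuous and strictly increasing on $[0,\infty)$ with $m'(0^+)=-\infty$; writing $M:=\lim_{y\to\infty}m'(y)\in(-\infty,+\infty]$, this means $\Ginv_{\lambda,\mu}(t)=(m')^{-1}(\kappa(t))$ whenever $\kappa(t)<M$ and $\Ginv_{\lambda,\mu}(t)=+\infty$ otherwise, where $(m')^{-1}$ is continuous and strictly increasing. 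The whole statement then reduces to showing that $\kappa$ is non-decreasing and left-continuous.

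For monotonicity I would first argue that $\kappa$ is non-decreasing: $\Fbenchinv$ is non-decreasing (being a quantile function) and $\phi'$ is non-decreasing by convexity of $\phi$, so $t\mapsto\phi'(\Fbenchinv(t))$ is non-decreasing; similarly $t\mapsto\Finv_{\varphi_T}(1-t)$ is non-increasing, whence $-\lambda\Finv_{\varphi_T}(1-t)$ is non-decreasing for $\lambda\ge0$. Monotonicity of $\Ginv_{\lambda,\mu}$ is then immediate from the $\inf$-representation without even invoking the inverse: for $s<t$ one has $\kappa(s)\le\kappa(t)$, hence $\{y\mid m'(y)\ge\kappa(t)\}\subseteq\{y\mid m'(y)\ge\kappa(s)\}$, and taking infima gives $\Ginv_{\lambda,\mu}(s)\le\Ginv_{\lambda,\mu}(t)$, with the convention $\inf\emptyset=+\infty$ handling the passage into the $+\infty$ regime.

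For left-continuity, the continuity and strict monotonicity of $(m')^{-1}$ reduce the claim to left-continuity of $\kappa$; in particular the transition to the value $+\infty$ at the level $\kappa=M$ is automatic, since $(m')^{-1}(r)\to+\infty$ as $r\uparrow M$, so an upward left-limit of $\kappa$ to $M$ produces a left-limit $+\infty$. The benchmark term is harmless: $\Fbenchinv$ is left-continuous and $\phi'$ is continuous, so $t\mapsto\phi'(\Fbenchinv(t))$ is left-continuous. I expect the cost-weight term to be the only real obstacle. A quantile function is left-continuous \emph{in its own argument}, and the reparametrisation $t\mapsto1-t$ converts this into right-continuity, so $t\mapsto\Finv_{\varphi_T}(1-t)$ is a priori only right-continuous: its left-limit at $t$ equals $\Finv_{\varphi_T}\big((1-t)^+\big)$, which strictly exceeds $\Finv_{\varphi_T}(1-t)$ precisely at jumps of $\Finv_{\varphi_T}$. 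The hard part is therefore to rule out such jumps, and this is exactly where the standing assumption that $\varphi_T$ is continuously distributed is used, as it forces $\Finv_{\varphi_T}$ to be continuous and hence $t\mapsto\Finv_{\varphi_T}(1-t)$ continuous. With both terms of $\kappa$ left-continuous (the cost term in fact continuous), $\kappa$ is left-continuous, and composing with the continuous $(m')^{-1}$ yields the left-continuity of $\Ginv_{\lambda,\mu}$.
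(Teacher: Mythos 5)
Your proof is correct, and it reaches the conclusion by a cleaner route than the paper's. The paper argues monotonicity by writing down the first-order condition $m'(\Ginv_{\lambda,\mu}(t_i))+\lambda\Finv_{\varphi_T}(1-t_i)-\mu\phi'(\Fbenchinv(t_i))=0$ at two points, subtracting, and invoking monotonicity of $m'$; it then proves left-continuity by a separate sequential contradiction argument (assume $g<g(\infty)$, squeeze $m'$ between the two level constants, and contradict the convergence of $-\lambda\Finv_{\varphi_T}(1-t_n)+\mu\phi'(\Fbenchinv(t_n))$). You instead factor the solution once and for all as $\Ginv_{\lambda,\mu}=(m')^{-1}\circ\kappa$ with $\kappa(t)=\mu\phi'(\Fbenchinv(t))-\lambda\Finv_{\varphi_T}(1-t)$, so that both claims reduce to monotonicity and left-continuity of $\kappa$ composed with a continuous, strictly increasing inverse. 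The ingredients are identical ($m'$ strictly increasing, $\phi'\circ\Fbenchinv$ non-decreasing and left-continuous, $\Finv_{\varphi_T}(1-\cdot)$ non-increasing and continuous), but your packaging is more economical, handles the $+\infty$ regime and the transition into it uniformly rather than as a separate first step, and has the side benefit of producing the explicit representation $\Finv^*=(m')^{-1}\bigl(-\lambda^*\Finv_{\varphi_T}(1-t)+\mu^*\phi'(\Fbenchinv(t))\bigr)$ that the paper only records later in its Corollary. One caveat, which you share with the paper: a continuous distribution for $\varphi_T$ rules out flat pieces of $\Finv_{\varphi_T}$ (atoms), not jumps of $\Finv_{\varphi_T}$ (gaps in the support), so the continuity of $\Finv_{\varphi_T}$ that both arguments need is really an assumption that the support of $\varphi_T$ is an interval; this holds in all of the paper's examples and is asserted without further comment in the paper's own proof, so it is not a gap specific to your argument.
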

\begin{proof}
First, we show that if there exists a $t_1 \in (0,1)$ such that $\Ginv_{\lambda,\mu}(t_1)=+\infty$, then $\Ginv_{\lambda,\mu}(t_2)=+\infty$ for all $t_2 \geq t_1$. From the proof of \Cref{mini}, we know that $\Ginv_{\lambda,\mu}(t_1)=+\infty$ is equivalent to
\begin{align*}
     \frac{\partial}{\partial y }h_{t_1}(y) = m'(y) + \lambda  \breve{F}_{\varphi_T}(1-t_1) - \mu \phi'\big(\Fbenchinv(t_1)  \big)  <0,
 \end{align*}
 for all $y \in (0, \infty)$. Since $\breve{F}_{\varphi_T}(1-t)$ is non-increasing and $\phi'\big(\Fbenchinv(t)  \big)$ is non-decreasing with respect to $t$, we also obtain that 
 \begin{align*}
     \frac{\partial}{\partial y }h_{t_2}(y)
     =
     m'(y) + \lambda  \breve{F}_{\varphi_T}(1-t_2) - \mu \phi'\big(\Fbenchinv(t_2)  \big)  <0,
 \end{align*}
for all $y \in (0,\infty)$, and thus $\Ginv_{\lambda,\mu}(t_2)=+\infty$. Thus, we conclude that $\Ginv_{\lambda,\mu}(t_1)=+\infty$, implies that $\Ginv_{\lambda,\mu}(t_2)=+\infty$ for all $t_2 \geq t_1$.

	Next, we  verify that $\Ginv_{\lambda,\mu}(t)$ is non-decreasing within the set $\{ t \in (0,1) \ | \ \Ginv_{\lambda,\mu}(t) < + \infty  \}$.
    By \Cref{mini}, these finite values are the unique pointwise minimizers. For $t_1 < t_2$,  we have  
  \begin{align*}
  &\tfrac{\partial}{\partial y} h_{t_1}(\Ginv_{\lambda,\mu}(t_1)) = m'(\Ginv_{\lambda,\mu}(t_1)) + \lambda  \breve{F}_{\varphi_T}(1-t_1) - \mu \phi'\big(\Fbenchinv(t_1)  \big)  =0\,, 
         \quad \text{and}\\
  &\tfrac{\partial}{\partial y} h_{t_2}(\Ginv_{\lambda,\mu}(t_2))
  =m'(\Ginv_{\lambda,\mu}(t_2)) + \lambda  \breve{F}_{\varphi_T}(1-t_2) - \mu \phi'\big(\Fbenchinv(t_2)  \big)  =0.
 \end{align*}
    Subtracting the two equations and rearranging, we have
\begin{equation}\label{eq:pf-monotone}
\begin{split}
        m'(\Ginv_{\lambda,\mu}(t_2)) -m'(\Ginv_{\lambda,\mu}(t_1))= 
    -\lambda \left( \breve{F}_{\varphi_T}(1-t_2) - \breve{F}_{\varphi_T}(1-t_1) \right) 
    \\
    + \mu \left( \phi'\big(\Fbenchinv(t_2)  \big) - \phi'\big(\Fbenchinv(t_1)  \big)  \right)  \,.
\end{split}
\end{equation}
Note that the right hand side of \eqref{eq:pf-monotone} is non-negative by the monotonicity of the quantile functions $\breve{F}_{\varphi_T}(\cdot)$ and $\Fbenchinv(\cdot)$. Thus 
\begin{equation*}
    m'(\Ginv_{\lambda,\mu}(t_2)) -m'(\Ginv_{\lambda,\mu}(t_1)) \ge 0\,,
\end{equation*}
which implies that $\Ginv_{\lambda,\mu}(t_2)\ge \Ginv_{\lambda,\mu}(t_1) $, since $m'(\cdot)$ is increasing. 
Hence, $\Ginv_{\lambda,\mu}$ is indeed non-decreasing.
\end{proof}

Finally, we identify the optimal quantile function with the pointwise candidate associated with the optimal Lagrange multipliers.

\begin{theorem}[Optimal quantile function]
\label{mainth}
Let Assumptions \ref{assumption_1} and \ref{asm-well-posed} hold, and let
$\breve F^*\in\mQ^{ae}$ be the unique solution of problem \eqref{new_problem_ext}. 
Let $\lambda^*,\mu^*\ge0$ be the Lagrange multipliers, and define
$\Ginv_{\lambda^*,\mu^*}$ by \eqref{opt_sol}. Suppose that $\Ginv_{\lambda^*,\mu^*}\in
\operatorname{dom}\mathcal L_{\lambda^*,\mu^*}$.
Then
\begin{align*}
\breve F^*(t)
=
\Ginv_{\lambda^*,\mu^*}(t)
\quad\text{for a.e. }t\in(0,1). 
\end{align*}
Moreover, the multipliers satisfy the complementary slackness conditions
\begin{align*}
\lambda^*\bigl(\cost(\breve F^*)-x_0\bigr)=0,
\qquad
\mu^*\bigl(\DB_\phi(\breve F^*,\Fbenchinv)-\varepsilon\bigr)=0. 
\end{align*}
Consequently, the non-decreasing left-continuous representative of this $L^1$-equivalence class is the optimal quantile function for \eqref{Yaari2}.
\end{theorem}

\begin{proof}
By \Cref{Exist}, problem \eqref{new_problem_ext} admits a unique solution
$\breve F^*\in\mQ^{ae}$. According to \Cref{existlag}, there exist
$\lambda^*,\mu^*\ge0$ such that
\begin{align*}
\lambda^*\bar{\cost}(\Ginv^*)=0,
\qquad 
\mu^*\bar{\DB}_{\phi,\Fbenchinv}(\Ginv^*)=0, 
\end{align*}
and
\begin{align*}
\breve F^*
\in
\argmin_{\Ginv\in L^1((0,1))}
\mathcal L_{\lambda^*,\mu^*}(\Ginv), 
\end{align*}
where $\mathcal L_{\lambda^*,\mu^*}=\bar{\mathfrak u}
+\lambda^*\bar{\cost}
+\mu^*\bar{\DB}_{\phi,\Fbenchinv}$. Since $\breve F^*$ is feasible, the complementary slackness conditions are equivalently
\begin{align*}
\lambda^*\bigl(\cost(\breve F^*)-x_0\bigr)=0,
\qquad
\mu^*\bigl(\DB_\phi(\breve F^*,\Fbenchinv)-\varepsilon\bigr)=0. 
\end{align*}

For $t\in(0,1)$ and $y\ge0$, define
\begin{align}
\ell_{\lambda^*,\mu^*}(t,y)
:=
-u(y)
+\lambda^*y\breve F_{\varphi_T}(1-t)
+\mu^*B_\phi(y,\Fbenchinv(t)). 
\end{align}
On $\operatorname{dom}\mathcal L_{\lambda^*,\mu^*}$, the extended-value functionals reduce to their ordinary integral forms. Thus, for every
$\Ginv\in\operatorname{dom}\mathcal L_{\lambda^*,\mu^*}$,
\begin{align*}
\mathcal L_{\lambda^*,\mu^*}(\Ginv)
=
\int_0^1
\ell_{\lambda^*,\mu^*}(t,\Ginv(t))\,\d t
-\lambda^*x_0-\mu^*\varepsilon. 
\end{align*}

For each fixed $t \in (0,1)$, the functions $y \mapsto \ell_{\lambda^*,\mu^*}(t,y)$ and $
y \mapsto h_t^*(y)$ (given in \eqref{integrand} with multipliers $\lambda^*$ and $\mu^*$) differ only by a term independent of $y$. Hence they have the same pointwise minimizers. By \Cref{mini}, whenever the pointwise candidate is finite, it is the unique minimizer of $h_t$, and hence also of $\ell_{\lambda^*,\mu^*}(t,\cdot)$.

By the admissibility assumption, $\Ginv_{\lambda^*,\mu^*}
\in
\operatorname{dom}\mathcal L_{\lambda^*,\mu^*}$.
In particular, $\Ginv_{\lambda^*,\mu^*}$ is finite a.e. and is an admissible competitor for the Lagrangian minimization problem. Therefore, for every $\Ginv\in\operatorname{dom}\mathcal L_{\lambda^*,\mu^*}$,
\begin{align*}
\ell_{\lambda^*,\mu^*}
\bigl(t,\Ginv_{\lambda^*,\mu^*}(t)\bigr)
\le
\ell_{\lambda^*,\mu^*}
\bigl(t,\Ginv(t)\bigr)
\quad\text{for a.e. }t\in(0,1). 
\end{align*}
Since both functions belong to the effective domain, the corresponding integrals are well defined. Integrating the above inequality gives
\begin{align*}
\mathcal L_{\lambda^*,\mu^*}
(\Ginv_{\lambda^*,\mu^*})
\le
\mathcal L_{\lambda^*,\mu^*}(\Ginv). 
\end{align*}
For $\Ginv\notin\operatorname{dom}\mathcal L_{\lambda^*,\mu^*}$, the inequality is trivial because $\mathcal L_{\lambda^*,\mu^*}(\Ginv)=+\infty$. Hence
\begin{align*}
\Ginv_{\lambda^*,\mu^*}
\in
\argmin_{\Ginv\in L^1((0,1))}
\mathcal L_{\lambda^*,\mu^*}(\Ginv). 
\end{align*}

Since $\breve F^*$ is also a minimizer of the same Lagrangian, suppose that
\begin{align*}
\mathcal I:=
\bigl\{
t\in(0,1):
\breve F^*(t)\neq \Ginv_{\lambda^*,\mu^*}(t)
\bigr\} 
\end{align*}
has positive measure. By the strict convexity of $h_t$, the finite pointwise minimizer is unique. Hence, for a.e. $t\in \mathcal I$,
\begin{align*}
\ell_{\lambda^*,\mu^*}
\bigl(t,\Ginv_{\lambda^*,\mu^*}(t)\bigr)
<
\ell_{\lambda^*,\mu^*}
\bigl(t,\breve F^*(t)\bigr). 
\end{align*}
Integrating yields
\begin{align*}
\mathcal L_{\lambda^*,\mu^*}
(\Ginv_{\lambda^*,\mu^*})
<
\mathcal L_{\lambda^*,\mu^*}(\breve F^*), 
\end{align*}
which contradicts the Lagrangian minimality of $\breve F^*$. Therefore,
\begin{align*}
\breve F^*(t)
=
\Ginv_{\lambda^*,\mu^*}(t)
\quad\text{for a.e. }t\in(0,1). 
\end{align*}

Finally, since this $L^1$-equivalence class belongs to $\mQ^{ae}$, we choose its non-decreasing left-continuous representative. This representative belongs to $\mQ$ and gives the optimal quantile function for \eqref{Yaari2}.
\end{proof}

Next, we show that the optimal quantile function has a simple representation. 

\begin{corollary}
    Under the assumptions of \Cref{mainth}, the unique quantile function that solves \eqref{Yaari2} has representation
    \begin{equation*}
        \breve{F}^*(t)
        =
       (m')^{-1}\Big( -\lambda^*  \breve{F}_{\varphi_T}(1-t) + \mu^* \phi'\big(\Fbenchinv(t)  \big)\Big)\,, \quad\text{for a.e. }t\in(0,1),
    \end{equation*}
    where $(m')^{-1}$ is the inverse of $m'(y) = -u'(y) + \mu^* \phi'(y)$, and $\lambda^*$ and $\mu^*$ are given in \Cref{mainth}.
\end{corollary}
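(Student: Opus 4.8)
The plan is to read the closed form directly off \Cref{mainth}, which already identifies the optimal quantile function as $\Finv^*(t)=\Ginv_{\lambda^*,\mu^*}(t)$ with $\Ginv_{\lambda^*,\mu^*}$ given by \eqref{opt_sol}, and which moreover establishes that $\Finv^*$ is finitely-valued on $(0,1)$. Hence I only need to evaluate the infimum in \eqref{opt_sol} at an interior, finite point and rewrite it via an inverse function.

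First I would compute the derivative of the integrand $h_t$ from \eqref{integrand}. Recalling $m(y)=-u(y)+\mu^*\phi(y)$, differentiation gives
\begin{equation*}
\tfrac{\partial}{\partial y}h_t(y)=m'(y)+\lambda^*\Finv_{\varphi_T}(1-t)-\mu^*\phi'\big(\Fbenchinv(t)\big),
\end{equation*}
so that the threshold condition $\tfrac{\partial}{\partial y}h_t(y)\ge 0$ appearing in \eqref{opt_sol} is equivalent to
\begin{equation*}
m'(y)\ge -\lambda^*\Finv_{\varphi_T}(1-t)+\mu^*\phi'\big(\Fbenchinv(t)\big).
\end{equation*}

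Second, I would invoke the monotonicity already exploited in \Cref{mini}: since $u$ is strictly concave and $\phi$ is convex, the map $m'(y)=-u'(y)+\mu^*\phi'(y)$ is continuous and strictly increasing on $[0,\infty)$, hence injective, so its inverse $(m')^{-1}$ is well defined on the range of $m'$. Because \Cref{mainth} guarantees $\Finv^*(t)<+\infty$ for every $t\in(0,1)$, the minimiser of $h_t$ over $[0,\infty)$ is the interior stationary point, i.e. the unique $y$ solving $\tfrac{\partial}{\partial y}h_t(y)=0$; equivalently, the quantity $-\lambda^*\Finv_{\varphi_T}(1-t)+\mu^*\phi'(\Fbenchinv(t))$ lies in the range of $m'$. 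Applying $(m')^{-1}$ to the stationarity identity $m'(\Finv^*(t))=-\lambda^*\Finv_{\varphi_T}(1-t)+\mu^*\phi'(\Fbenchinv(t))$ then yields the claimed representation.

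The only delicate point is to be certain that the infimum in \eqref{opt_sol} is genuinely attained at the zero of $\tfrac{\partial}{\partial y}h_t$, rather than at the boundary $y=0$ or at $+\infty$. However, \eqref{limh_t} in the proof of \Cref{mini} already excludes the lower boundary, and the finiteness of $\Finv^*(t)$ inherited from \Cref{mainth} excludes $+\infty$. Strict monotonicity of $m'$ makes $(m')^{-1}$ single-valued on the relevant range, so no genuine obstacle remains: the corollary is essentially a restatement of \eqref{opt_sol} in inverted form.
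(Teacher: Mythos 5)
Your proposal is correct and follows essentially the same route as the paper: invoke the finiteness of $\Finv^*(t)=\Ginv_{\lambda^*,\mu^*}(t)$ from \Cref{mainth}, use \Cref{mini} to identify $\Finv^*(t)$ with the interior zero of $\tfrac{\partial}{\partial y}h^*_t$, and invert the strictly increasing map $m'$. The extra care you take in ruling out the boundary cases $y=0$ and $y=+\infty$ is a welcome elaboration but does not change the argument.
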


\begin{proof}
    By \Cref{mainth},  $\breve{F}^*(t) =\Ginv_{\lambda^{*},\mu^{*}}(t)$, for a.e. $t \in (0,1)$. Moreover, the admissibility condition implies that $\Ginv_{\lambda^*,\mu^*}$ is finite a.e. Thus, from \Cref{mini} we have that for each $t \in (0,1)$, $\Ginv_{\lambda^{*},\mu^{*}}(t)$ is the solution to
    \begin{equation}\label{eq:equation-F-star}
         \frac{\partial}{\partial y}\;h^*_t(y) 
         =
           m'(y)+ \lambda^{*} \breve{F}_{\varphi_T}(1-t)  -\mu^{*}  \phi'\big(\Fbenchinv(t)  \big)  = 0\,,
    \end{equation}
    where $h^*_t(\cdot)$ is given in \eqref{integrand} with multipliers $\lambda^*$ and $\mu^*$. Solving \eqref{eq:equation-F-star} for $y$ concludes the proof.
\end{proof}

Combining Theorems \ref{Exist} and \ref{mainth} with \Cref{cor_binding}, we observe that there are three possible cases for the optimal quantile function:
	\begin{enumerate}[label = $\roman*)$]
		\item Both constraints are binding, i.e., $\lambda^*,\mu^*>0$. Then, $\breve{F}^{*}(t) =  \Ginv_{\lambda^{*},\mu^{*}}(t)$ and 
		\begin{align*}
			  \cost\big( \breve{F}^{*}\big)  = x_0  \quad \text{and} \quad
		 \DB_\phi\big(\breve{F}^*, \Fbenchinv\big) =\varepsilon\,.
		\end{align*}
		\item Only the budget constraint is binding, i.e., $\lambda^*>0$ and $\mu^*=0$. Then, $\breve{F}^{*}(t) =  \Ginv_{\lambda^{*},0}(t)$ and 
		\begin{align*}
			\cost\big( \breve{F}^{*}\big) = x_0 \,.
	   \end{align*}
	   \item Only the BW constraint is binding, i.e., $\lambda^*=0$ and $\mu^*>0$. Then, $\breve{F}^{*}(t) =  \Ginv_{0,\mu^{*}}(t)$ and 
       \begin{align*}
		\DB_\phi\big(\breve{F}^*, \Fbenchinv\big) = \varepsilon\,.
   \end{align*}
\end{enumerate}

}

\section{Examples}\label{sec:4}

In this section we illustrate the theoretical results. We derive optimal payoffs and evaluate how these depend on the choice of the BW divergence, that is, on how investors reward positive deviations and penalize negative ones. We first lay out the market model and then discuss how to choose the tolerance level $\epsilon$ and the Bregman generating function $\phi$.  
\subsection{Market model}
We provide examples within the Geometric Brownian Motion (GBM) market framework. In this market, there is a stock $(S_t)_{t\in [0,T]}$ and a risk-free asset $(B_t)_{t \in [0,T]}$ satisfying the stochastic differential equations 
\begin{align*}
\d S_t =  \mu_{s} S_t \d t + \sigma_s S_t \d W_t, \quad \text{and} \quad \d B_t = rB_t \d t, 
\end{align*}
respectively, where $(W_t)_{t \in [0,T]}$ is a standard Brownian motion, and the stock has drift $\mu_s\in\mathbb{R}$ and instantaneous volatility $\sigma_s>0$. Under the no-arbitrage condition the state price density $\varphi_T$ is uniquely given by
\begin{equation*}
	\varphi_T=
 e^{- \left(r+\frac{1}{2}\theta^2  \right)T - \theta W_{T} },
\end{equation*}
in which $\theta=\frac{\mu_{s}-r}{\sigma_s}$. Hence, $\varphi_T$ follows a log-normal distribution with mean $\mu_{\varphi}:=-(r+ \frac{1}{2}\theta^2)T$ and standard deviation $\sigma_\varphi := \theta\sqrt{T}$, denoted as $\varphi_T \sim \text{LN}(\mu_{\varphi}, \sigma_\varphi^2)$. Its cdf and quantile function are 
\begin{equation*}
	F_{\varphi _{T}}(x)= \Phi \left(\frac{\ln x-\mu_{\varphi}}{\sigma_{\varphi}} \right) \quad \text{and} \quad  \breve{F}_{\varphi_{T}}(t)=e^{\mu_{\varphi}+\sigma_{\varphi} \breve{\Phi}(t)}, 
	\end{equation*}
	where $\Phi(\cdot)$ and $\breve{\Phi}(\cdot)$ are the cdf and quantile function of the standard normal distribution, respectively.

In all our numerical experiments,  we assume the following parameters for the market model
\begin{equation*}
  T=5\,,\quad r=0\,, \quad \mu_s=0.05\, , \quad \sigma_s=0.1\,, 
  \quad \text{and} \quad S_0=1\,.
   \end{equation*}

{Furthermore, we assume that the investor makes their choices using a Constant Relative Risk Aversion (CRRA) utility function
\begin{align*}
	  u(x)=\frac{x^{1-\gamma}-1}{1-\gamma}  \quad 0<\gamma<1,
\end{align*}
in which the parameter $\gamma$ reflects their degree of risk aversion. In the numerical  experiments we choose the parameter values $\gamma=0.5$ and $\gamma=0.8$. Moreover, without any loss of generality, we assume that the investor has a budget $x_0=1$. }

\subsection{Choice of tolerance level $\varepsilon$}\label{choice_BW}
When determining their optimal payoff as a solution to problem \eqref{Yaari-o}, the investor needs to choose the benchmark quantile function $\Fbenchinv$, the tolerance level $\varepsilon$, and the Bregman generating function $\phi$ that determines how deviations from the benchmark $\Fbenchinv$ are assessed.

{
For a given auxiliary budget level $\bar{x}_0>0$, we consider the smallest BW distance attainable under this budget:
\begin{equation}
  \min_{c(X_T)\le \bar{x}_0}
  \DB_\phi(\breve F_{X_T},\Fbenchinv).
\end{equation}
Since $\DB_\phi(\breve F_{X_T},\Fbenchinv)$ depends only on the distribution of $X_T$, and cost-efficiency yields the least expensive payoff among all payoffs with the same distribution; see \cite[Corollary~2]{bernard2014explicit}, it is sufficient to restrict attention to payoffs that are anti-monotonic with $\varphi_T$. This leads to the quantile problem
\begin{equation}\label{eq:min-BW}
    \min_{\Ginv\in\mathcal Q,\ \cost(\Ginv)\le \bar{x}_0}
    \DB_\phi(\Ginv,\Fbenchinv).
\end{equation}
We denote the value of \eqref{eq:min-BW} by $\varepsilon_{\min}(\bar{x}_0)$.

The value $\varepsilon_{\min}(\bar{x}_0)$ provides a natural lower bound for the tolerance level. In problem~\eqref{Yaari2}, if the initial budget satisfies $x_0>\bar{x}_0$, then every choice $\varepsilon>\varepsilon_{\min}(\bar{x}_0)$
satisfies the Slater-type feasibility condition in \Cref{assumption_1}. Indeed, any minimizer of \eqref{eq:min-BW} has cost at most $\bar{x}_0<x_0$, while its BW distance is equal to $\varepsilon_{\min}(\bar{x}_0)<\varepsilon$. Together with \Cref{asm-well-posed}, this yields the existence of a solution to problem~\eqref{Yaari2}.

The following proposition gives the expression for $\varepsilon_{\min}(\bar{x}_0)$.

\begin{proposition}\label{dis_min}
    For a budget $\bar{x}_0$ and a benchmark $\Fbenchinv$, the minimal BW distance $\varepsilon_{\min}(\bar{x}_0)$ is given by
    \begin{align*}
\varepsilon_{\min}(\bar{x}_0) =
    \begin{cases}
        0 \qquad & \text{if} \qquad \cost(\Fbenchinv) \le \bar{x}_0\\[0.5em]
        \DB_\phi (\Ginv^*, \Fbenchinv) \quad & \text{if} \qquad \cost(\Fbenchinv) > \bar{x}_0\,,
    \end{cases}
    \end{align*}
    where
    \begin{align*}
    \Ginv^*(t)=(\phi')^{-1}\left(\phi'(\Fbenchinv(t))-\eta^*\breve F_{\varphi_T}(1-t)\right),  
    \end{align*}
    and $\eta^* > 0$ is the unique solution to $\cost\big( \Ginv^{*}\big) = \bar{x}_0$.
\end{proposition}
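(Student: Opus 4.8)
The plan is to treat \eqref{eq:min-BW} as a convex program and to characterise its minimiser through the associated Lagrangian, proceeding exactly as in the proof of \Cref{mainth} but with the utility term deleted. Recall from Part~1 of the proof of \Cref{Exist} that $\Ginv \mapsto \DB_\phi(\Ginv, \Fbenchinv)$ is convex (the Bregman divergence is convex in its first argument) and that $\cost(\cdot)$ is linear; hence \eqref{eq:min-BW} is the minimisation of a convex functional over the convex set $\mathcal{Q}$ subject to a single affine inequality constraint. I would first dispose of the case $\cost(\Fbenchinv)\le x_0$: here the benchmark itself is feasible, and since $B_\phi(z,z)=0$ while $B_\phi\ge 0$ by convexity of $\phi$, we get $\DB_\phi(\Fbenchinv,\Fbenchinv)=0$, so the non-negative objective attains its global minimum $0$ at $\Ginv=\Fbenchinv$ and $\epsilon_{\min}=0$.

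For $\cost(\Fbenchinv) > x_0$ the benchmark is infeasible, so the budget constraint must bind. The Slater condition (\Cref{Slater}) holds because $\Ginv\equiv 0\in\mathcal{Q}$ satisfies $\cost(0)=0<x_0$; consequently strong duality applies and the optimiser is characterised by the KKT system of \Cref{Th_optimal} and \Cref{Th_sfficient}. Introducing a multiplier $\eta\ge 0$, I would minimise
\[
\DB_\phi(\Ginv,\Fbenchinv)+\eta\big(\cost(\Ginv)-x_0\big)
=\int_0^1\Big(B_\phi\big(\Ginv(t),\Fbenchinv(t)\big)+\eta\,\Ginv(t)\Finv_{\varphi_T}(1-t)\Big)\d t-\eta x_0
\]
pointwise in $t$, as in the passage \eqref{new_problem_1}--\eqref{G*}. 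Differentiating the integrand in $y=\Ginv(t)$ and using strict convexity of $\phi$ (so $\phi'$ is strictly increasing and invertible) yields the first-order condition $\phi'(y)=\phi'(\Fbenchinv(t))-\eta\,\Finv_{\varphi_T}(1-t)$, hence the candidate
\[
\Ginv^*_\eta(t)=(\phi')^{-1}\big(\phi'(\Fbenchinv(t))-\eta\,\Finv_{\varphi_T}(1-t)\big),
\]
which is the minimiser announced in the statement.

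Then I would verify feasibility and pin down $\eta^*$. Because $\phi'$ and $\Fbenchinv$ are non-decreasing while $t\mapsto\Finv_{\varphi_T}(1-t)$ is non-increasing, the argument of $(\phi')^{-1}$ is non-decreasing in $t$, so $\Ginv^*_\eta$ is a genuine quantile function (the analogue of \Cref{non-decreasing}). Next, for each $t$ the map $\eta\mapsto\Ginv^*_\eta(t)$ is continuous and strictly decreasing, so $\eta\mapsto\cost(\Ginv^*_\eta)$ is continuous and strictly decreasing, equal to $\cost(\Fbenchinv)>x_0$ at $\eta=0$ and tending to a value below $x_0$ (indeed to $0$) as $\eta\to\infty$; the intermediate value theorem then supplies a unique $\eta^*>0$ with $\cost(\Ginv^*_{\eta^*})=x_0$. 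Since $\Ginv^*_{\eta^*}$ is feasible and, by construction, minimises the Lagrangian with complementary slackness holding (the budget binds), it is optimal, whence $\epsilon_{\min}=\DB_\phi(\Ginv^*_{\eta^*},\Fbenchinv)$.

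The main obstacle is the boundary constraint $y\ge 0$ in the pointwise minimisation: unlike in \Cref{mini}, there is no utility term whose Inada condition drives the minimiser into the interior, so when $\phi'(0^+)$ is finite (e.g.\ $\phi(x)=x^2$) the root of the first-order condition may be negative and the true pointwise minimiser is then $y=0$, requiring a projection onto $[0,\infty)$ that the clean formula suppresses. I would handle this either by restricting to generators with $\phi'(0^+)=-\infty$ (such as $\phi_2(x)=x\ln x$), for which $(\phi')^{-1}$ maps $\mathbb{R}$ into $(0,\infty)$ and the formula holds verbatim, or by carrying the truncation $\max\{\,\cdot\,,0\}$ through the monotonicity and intermediate-value arguments above, both of which are preserved under this truncation. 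The secondary technical point, justifying that pointwise minimisation of the Lagrangian recovers the global constrained minimiser, follows from the same convex-duality reasoning already invoked for \eqref{new_problem_1}.
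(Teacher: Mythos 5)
Your proposal follows essentially the same route as the paper: the paper's own proof disposes of the case $\cost(\Fbenchinv)\le x_0$ via $B_\phi(z,z)=0$, writes down exactly the Lagrangian you propose with multiplier $\eta\ge 0$, and then simply states that ``the remainder of the proof follows using similar arguments as in Propositions \ref{mini}, \ref{non-decreasing}, and \Cref{mainth}'' --- which is precisely the pointwise first-order condition, monotonicity check, and intermediate-value argument for $\eta^*$ that you spell out. Your additional observation about the boundary $y\ge 0$ when $\phi'(0^+)$ is finite is a genuine point of care that the paper's sketch glosses over, and your proposed remedies (restricting the generator class or carrying the truncation through) are sound.
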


\begin{proof}
    If $\cost(\Fbenchinv) \le \bar{x}_0$, then $\ep_{\min}(\bar{x}_0) = 0$. To see this note that the Bregman divergence $B_\phi(z_1, z_2)$ is zero, if and only if $z_1 = z_2$, thus $\DB_\phi(\Ginv, \breve{F}) = 0$ if and only if $\Ginv(u) = \breve{F}(u)$ for all $u\in(0,1)$. 
    
    For the second case, assume that $\cost(\Fbenchinv) > \bar{x}_0$. 
    Then the associated Lagrangian to \eqref{eq:min-BW} with  Lagrange multiplier $\eta \geq 0$ is
        \begin{align*}
        \int_{0}^{1}\phi \big(\Ginv(t) \big)  - \phi \big(\Fbenchinv(t) \big)  - \phi'\big(\Fbenchinv(t)  \big) \Big(\Ginv(t) -\Fbenchinv(t) \Big) \d t + \eta \Big(   \int_{0}^{1} \Ginv(t) \breve{F}_{\varphi_{T}}(1-t)\d t - \bar{x}_0 \Big)\,.   
        \end{align*}
        The remainder of the proof follows using similar arguments as in Propositions \ref{mini}, \ref{non-decreasing}, and \Cref{mainth}.
\end{proof}

In applications, it is often natural to take the initial budget $x_0$ equal to the cost of the benchmark, i.e.,
\begin{align*}
x_0=\cost(\Fbenchinv). 
\end{align*}
In this case, $\varepsilon_{\min}(x_0)=0$. Note that
\begin{align*}
\DB_\phi((1-\delta)\Fbenchinv,\Fbenchinv)
\longrightarrow 0
\quad\text{as }\delta\downarrow0. 
\end{align*}
Then, for every $\varepsilon>0$, there exists $\delta>0$ such that
\begin{align*}
\cost((1-\delta)\Fbenchinv)
=
(1-\delta)x_0
<
x_0,
\qquad
\DB_\phi((1-\delta)\Fbenchinv,\Fbenchinv)<\varepsilon. 
\end{align*}
Therefore, \Cref{assumption_1} is satisfied.}

\subsection{Choice of Bregman generator $\phi$}
With the BW divergence, an investor can penalize underperformance and outperformance relative to the benchmark asymmetrically. Moreover, as investors may not want to penalize gains when measured relative to the terminal wealth of the benchmark, see e.g., \cite{harlow1991asset} and \cite{klebaner2017optimal}, we introduce a family of Bregman generators who assess deviations in relation to a wealth threshold $\alpha \in [0,\infty)$. 

\begin{definition}[Bregman divergence with threshold $\alpha$]\label{defMBD} 
    Let $\phi\colon \mathbb{R} \to \mathbb{R}$ be a convex and continuously differentiable Bregman generator. For a given threshold $\alpha$, we define the Bregman generator $\tilde{\phi}(\cdot ; \alpha)$ associated with $\phi$ as 
    \begin{align}\label{MBG}
     \tilde{\phi}(x;\alpha) = \phi(x)\, \mathbb{I}_{\{ x \leq \alpha \}} +  \left(\phi'(\alpha) (x-\alpha)+ \phi(\alpha) \right)  \, \mathbb{I}_{\{ x > \alpha \}},
    \end{align}
     where $\mathbb{I}_{ \{\cdot \} }$ denotes the indicator function.
     Then the Bregman divergence with generator $\tilde{\phi}(\cdot ; \alpha)$ becomes
    \begin{align*}
        B_{\tilde{\phi}} (z_1, z_2; \alpha)
        &=
        B_{\phi} (z_1, z_2) \,\mathbb{I}_{\{ z_1, z_2 \leq \alpha \}} 
        + B_{\phi} (z_1, \alpha ) \,\mathbb{I}_{\{ z_1 \leq \alpha < z_2\}}
        \\[0.5em]
        & \quad 
        + \left\{B_{\phi} (\alpha, z_2)  + \big(\phi'(\alpha) - \phi'(z_2)\big) (z_1 - \alpha)\right\}
        \,\mathbb{I}_{\{z_2\le \alpha < z_1\}}\,.
    \end{align*}
    \end{definition}
    Note that $\tilde{\phi}$ is continuously differentiable and convex, thus all relevant results of problem \eqref{Yaari-o} apply to the Bregman divergence with threshold $\alpha$. 

We observe that if $z_1, z_2 \le \alpha$, then $B_{\tilde{\phi}}$ reduces to the Bregman divergence with $\phi$ and if $z_1, z_2 > \alpha$, then $B_{\tilde{\phi}} \big(z_1, z_2; \alpha\big)=0$, meaning that the Bregman divergence no longer accounts for the distance between points that both exceed the threshold $\alpha$. If $z_1 \le\alpha < z_2$, the divergence equals $B_{\phi} (z_1, \alpha )$, that is only deviation of $z_1$ to the reference point $\alpha$ are taken into account, independent of the value of $z_2$. Moreover, due to the asymmetry of the Bregman divergence, if $z_2\le \alpha < z_1$, the penalization is not only $B_{\phi} (\alpha, z_2 )$ but also via the derivative of $\phi$. 
The Bregman divergence with threshold $\alpha$ thus combines the tractability of Bregman divergence with the ability to measure downside risk relative to the wealth threshold $\alpha$ and to moreover penalize underperformance relative to the benchmark. 
To gain a deeper understanding of the role of the threshold $\alpha$, \Cref{prop_MBW} establishes the monotonicity and limit properties of $\DB_{\tilde{\phi}} (\breve{F}_1, \breve{F}_2; \alpha)$ with respect to $\alpha$. Its proof is relegated to \Cref{app:proofs}.

\begin{proposition}\label{prop_MBW}
    For $\breve{F}_1,\breve{F}_2 \in \mQ$, the function $\alpha \mapsto \DB_{\tilde{\phi}} (\breve{F}_1, \breve{F}_2; \alpha)$ is non-decreasing. Moreover, it holds that
    \begin{align}\label{lima}
        \lim_{\alpha \to 0^+} \DB_{\tilde{\phi}} (\breve{F}_1, \breve{F}_2; \alpha) = 0, \quad \text{and} \quad \lim_{\alpha \to \infty} \DB_{\tilde{\phi}} (\breve{F}_1, \breve{F}_2; \alpha) = \DB_\phi (\breve{F}_1, \breve{F}_2).
    \end{align}
\end{proposition}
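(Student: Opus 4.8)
The plan is to reduce the statement to the level of the integrand and to exploit that the Bregman divergence is \emph{additive in its generator}. Writing $\tilde\phi_\alpha:=\tilde\phi(\,\cdot\,;\alpha)$ for the threshold generator of \Cref{defMBD}, we have $\DB_{\tilde\phi}(\Finv_1,\Finv_2;\alpha)=\int_0^1 B_{\tilde\phi_\alpha}\big(\Finv_1(t),\Finv_2(t)\big)\,\d t$, and each $\tilde\phi_\alpha$ is convex and continuously differentiable: it coincides with $\phi$ on $[0,\alpha]$ and continues as the tangent line of slope $\phi'(\alpha)$ beyond $\alpha$, so $\tilde\phi_\alpha'$ equals $\phi'$ on $[0,\alpha]$ and the constant $\phi'(\alpha)$ afterwards, hence is non-decreasing. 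Since $B_{\phi+\psi}=B_\phi+B_\psi$, for $\alpha_1<\alpha_2$ the difference of integrands is itself a Bregman divergence:
\[ B_{\tilde\phi_{\alpha_2}}(z_1,z_2)-B_{\tilde\phi_{\alpha_1}}(z_1,z_2)=B_{\Delta}(z_1,z_2),\qquad \Delta:=\tilde\phi_{\alpha_2}-\tilde\phi_{\alpha_1}. \]
Thus the whole monotonicity claim reduces to checking that the generator difference $\Delta$ is convex, for then $B_{\Delta}\ge0$ pointwise and integrating in $t$ gives $\DB_{\tilde\phi}(\Finv_1,\Finv_2;\alpha_1)\le\DB_{\tilde\phi}(\Finv_1,\Finv_2;\alpha_2)$.

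Convexity of $\Delta$ I would obtain by simply reading off $\Delta'$: it vanishes on $[0,\alpha_1]$ (both generators equal $\phi$ there), equals $\phi'-\phi'(\alpha_1)\ge0$ and is non-decreasing on $(\alpha_1,\alpha_2]$ (where $\tilde\phi_{\alpha_2}=\phi$ while $\tilde\phi_{\alpha_1}$ is affine of slope $\phi'(\alpha_1)$), and is the constant $\phi'(\alpha_2)-\phi'(\alpha_1)$ on $(\alpha_2,\infty)$. These pieces match at $\alpha_1$ and $\alpha_2$, so $\Delta\in C^1$ with $\Delta'$ non-decreasing, i.e. $\Delta$ is convex. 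This settles the monotonicity part. For the limit $\alpha\to\infty$, I would pass the limit through the integral using this monotonicity: for each fixed $t$ one eventually has $\alpha>\max\{\Finv_1(t),\Finv_2(t)\}$, so $B_{\tilde\phi_\alpha}(\Finv_1(t),\Finv_2(t))=B_\phi(\Finv_1(t),\Finv_2(t))$, and since the integrand increases to this value the monotone convergence theorem yields $\lim_{\alpha\to\infty}\DB_{\tilde\phi}(\Finv_1,\Finv_2;\alpha)=\DB_\phi(\Finv_1,\Finv_2)$, valid whether this limit is finite or $+\infty$.

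For $\alpha\to0^+$, the pointwise limit is again transparent: for each fixed $t$ with $\Finv_1(t),\Finv_2(t)>0$ we have $\alpha<\min\{\Finv_1(t),\Finv_2(t)\}$ once $\alpha$ is small, and then both arguments exceed the threshold so $B_{\tilde\phi_\alpha}=0$; hence the integrand decreases to $0$ a.e. The one step requiring care --- and the main technical obstacle --- is interchanging limit and integral here, since the sequence now \emph{decreases} and monotone convergence does not apply directly. I would instead dominate: fixing any $\alpha_0>0$, the monotonicity just proved gives $B_{\tilde\phi_\alpha}\le B_{\tilde\phi_{\alpha_0}}$ for all $\alpha\le\alpha_0$, and because $\tilde\phi_{\alpha_0}$ is affine beyond $\alpha_0$ the majorant $B_{\tilde\phi_{\alpha_0}}(\Finv_1(t),\Finv_2(t))$ grows at most affinely in $(\Finv_1(t),\Finv_2(t))$ and is therefore integrable, as $\Finv_1,\Finv_2\in\mQ\subset L^1((0,1))$. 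Dominated convergence then gives $\lim_{\alpha\to0^+}\DB_{\tilde\phi}(\Finv_1,\Finv_2;\alpha)=0$, completing the proof.
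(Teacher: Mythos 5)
Your proof is correct, but it takes a genuinely different route from the paper's for the monotonicity step. The paper works directly on the integrand $B_{\tilde{\phi}}(z_1,z_2;\alpha)$ by splitting into cases according to the position of $\alpha$ relative to $z_1$ and $z_2$ (and according to whether $z_1\le z_2$ or $z_1>z_2$): the function is constant equal to $0$ or to $B_\phi(z_1,z_2)$ outside the interval between the two points, and on that interval monotonicity in $\alpha$ is extracted by hand via the Mean Value Theorem applied to $\phi$. Your argument instead exploits that $\phi\mapsto B_\phi$ is additive in the generator, so the increment $B_{\tilde{\phi}_{\alpha_2}}-B_{\tilde{\phi}_{\alpha_1}}$ is itself the Bregman divergence of the difference generator $\Delta=\tilde{\phi}_{\alpha_2}-\tilde{\phi}_{\alpha_1}$, and nonnegativity then follows in one stroke from the convexity of $\Delta$, which you verify by inspecting $\Delta'$ on the three regimes. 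This removes all case-splitting on the relative order of $z_1,z_2,\alpha$ and is arguably cleaner; the paper's computation is more elementary and makes the explicit shape of $\alpha\mapsto B_{\tilde{\phi}}(z_1,z_2;\alpha)$ visible in each regime. You are also more careful than the paper on the limits in \eqref{lima}: the paper simply states that they follow by ``taking the limit in $\alpha$,'' whereas you justify the interchange of limit and integral by monotone convergence for $\alpha\to\infty$ and by dominated convergence for $\alpha\to 0^+$, with an integrable affine majorant $B_{\tilde{\phi}_{\alpha_0}}(\Finv_1(t),\Finv_2(t))$ obtained from the fact that $\tilde{\phi}_{\alpha_0}$ is affine beyond $\alpha_0$ and $\Finv_1,\Finv_2\in L^1((0,1))$; this fills a small gap the paper leaves implicit. (For completeness you may note that on the set where $\Finv_1(t)$ or $\Finv_2(t)$ vanishes the integrand still tends to $0$ as $\alpha\to0^+$, by continuity of $\phi$ and $\phi'$ at $0$, so the pointwise limit holds everywhere, not just where both quantiles are positive.)
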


The key emphasis of the divergence $\DB_{\tilde{\phi}} (\breve{F}_1, \breve{F}_2; \alpha)$ is when $\breve{F}_1$ and $\breve{F}_2$ do not simultaneously exceed the threshold $\alpha$. Specifically, for $t \in (0,1)$, if both $\breve{F}_1(t)$ and $\breve{F}_2(t)$ are less than $\alpha$, then $B_{\tilde{\phi}} \big(\breve{F}_1(t), \breve{F}_2(t); \alpha \big) = B_{\phi} \big(\breve{F}_1(t), \breve{F}_2(t) \big)$, indicating that the divergence is measured using the Bregman divergence with $\phi$. However, if $\breve{F}_1(t)$ and $\breve{F}_2(t)$ lie on opposite sides of $\alpha$, then $B_{\tilde{\phi}} \big(\breve{F}_1(t), \breve{F}_2(t); \alpha \big) \leq B_{\phi} \big(\breve{F}_1(t), \breve{F}_2(t) \big)$, thereby penalizing less than with $B_\phi$, and if both $\breve{F}_1(t)$ and $\breve{F}_2(t)$ are larger than the wealth threshold $\alpha$ the divergence is zero. 
    
Alternatively, we can view $\DB_{\tilde{\phi}}$ as a function of $\alpha$. If $\alpha = 0$, the investor's problem \eqref{Yaari-o} reduces to the case without a divergence constraint, that is the classical Merton problem. As $\alpha$ increases to infinity, $\DB_{\tilde{\phi}}$ converges to the BW divergence with $\phi$. Therefore, $\alpha$ can be regarded as a weighting parameter that interpolates between these two extreme cases.

In the numerical examples below, we use the following two convex functions as  Bregman generators 
\begin{align*}
&\phi_{1}(x)=x^{2} \quad \text{and} \quad \phi_{2}(x)=x \ln x, \quad \text{for} \quad x >0\,.
\end{align*}
The choice $\phi_{1}$ corresponds to the 2-Wasserstein distance where positive and negative deviations from the benchmark quantiles are penalized equally. By contrast, under $\phi_{2}$ deviations are penalized in an asymmetric manner in that negative deviations (underperforming the benchmark) receive more weight than positive ones (outperforming the benchmark). Further, note that the convexity of $\phi_2$ becomes less strong when $x$ increases, thus underperforming the benchmark is penalised significantly more. We further use these two convex functions as generators for the Bregman divergence with given threshold $\alpha$, which then become
  \begin{align*}
      \tilde{\phi}_1(x;\alpha) 
      & =  x^2  \mathbb{I}_{\{ x \leq \alpha \}} + 2\alpha (x - \frac{\alpha}{2}) \mathbb{I}_{\{ x > \alpha \}}\,,
      \\    
      \tilde{\phi}_2(x;\alpha) 
      &= x \ln x  \mathbb{I}_{\{ x \leq \alpha \}} + \big( (\ln \alpha + 1)x - \alpha \big)   \mathbb{I}_{\{ x > \alpha \}}\,.
  \end{align*} 
Note that both generators are linear in $x$ on  ($\alpha$,$\infty$). 

{ 
As discussed in \Cref{remark_convex}, in the numerical examples below we incorporate a regularization term of the form $\frac{\vartheta}{2}x^2$ to $\phi_{2}(x)$, $\tilde{\phi}_1(x;\alpha)$, and $\tilde{\phi}_2(x;\alpha)$. Unless otherwise stated, we set $\vartheta = 10^{-8}$ by default. For notational convenience, we retain the original notation to denote the corresponding regularized functions.}

\subsection{Example 1: Optimal payoff when the benchmark is a constant}\label{sec:ex-const-bench}
 
We study optimal payoff choice for an investor who aims to maximize his expected utility under the constraint that their terminal wealth does not diverge too much from the terminal wealth when pursuing an investment bearing a fixed rate of return $\kappa>0$. That is, we assume in problem \eqref{Yaari-o} that the benchmark's quantile function is $\breve{F}_{b} \equiv e^{\kappa T}$. 
This situation corresponds to the case of pension fund managers whose   performance and solvency position might be assessed with reference to such a benchmark. For instance, under a fair value approach, the supervisory authorities may require that $\breve{F}_{b} \equiv e^{\kappa T}$, where $\kappa$ is equal to the risk free rate $r$. 

In order to determine the appropriate tolerance level  $\varepsilon$, the investor considers three investment strategies that he deems \textit{acceptable} (although not necessarily optimal). For each of these strategies, the investor calculates their BW divergence to the benchmark's and defines $\varepsilon$ as the maximum among them (one for each of the Bregman generator $\phi$ we study). The three \textit{acceptable} strategies are as follows:

\textbf{Strategy 1:} The constant mix strategy is a strategy where one maintains, through continuous rebalancing, a portfolio which during the entire investment period $[0,T]$ keeps the proportions invested in the riskless asset respectively in the stock constant. We consider $82.5\%$ invested in the risk-free asset and the remaining proportion of $17.5\%$ is invested in stock, in which case, the quantile function of the terminal wealth is 
\begin{equation*}
		\breve{F}_T(u) = e^{\mu_1 + \sigma_1 \breve{\Phi}(u)}, \quad u \in (0,1)\,,
\end{equation*}
where $\mu_1 := \left(r+ \left(\mu_s -r \right) 0.175 - \frac{1}{2} 0.175^2{\sigma^2_s}  \right) T $ and $\sigma_1:=0.175\, \sigma_s  \,\sqrt{T}$.

\textbf{Strategy 2:} For this strategy one invests at $t=0$ a proportion of the initial wealth in the risk-free asset and the remaining proportion in the stock. This is a buy and hold strategy, meaning that during $(0, T]$ there are no trades, thus the relative exposures to the risk-free asset and to the stock evolves over time. We consider $85\%$ of the initial wealth invested in the risk-free asset and the remaining proportion of  $15\%$ is invested in stock. The terminal wealth's quantile function is 
\begin{align*}
    \breve{F}_T(u) = \left( 0.15e^{\mu_2+\sigma_2 \breve{\Phi}(u)} + 0.85e^{rT} \right), \quad u \in (0,1)\,,
\end{align*}
where $\mu_2:=\left(\mu_s - \frac{1}{2}\sigma^2_s \right)T
$ and $\sigma_2:=\sigma_s \sqrt{T}$.

\textbf{Strategy 3:} This strategy yields a so-called digital payoff $Y_T$ on the stock $S_T$, given by  
\begin{align*}
	Y_T = 0.9\, \mathbb{I}_{ \{ S_T\leq c\} } +  \tfrac{1}{0.95}\,(e^{rT}-0.045)\, \mathbb{I}_{ \{ S_T > c\} }  \,,
	\end{align*}
where $c >0$ is such that $\mathbb{Q}(S_T<c)=0.05$. Note that the cost of $Y_T$ is indeed equal to $x_0=1$. Its quantile function is given as 
\begin{align*}
    \breve{F}_T(u) = 0.9 \,\mathbb{I}_{ \{ u\leq \alpha_c\} } +  \tfrac{1}{0.95}(e^{rT}-0.045)\, \mathbb{I}_{ \{ u > \alpha_c\} }, \quad  u \in (0,1)\,, 
\end{align*}
where $\alpha_c:= \mathbb{P}(S_T<c)$.

\begin{figure}[H]
    \centering
    \begin{subfigure}[b]{0.49\textwidth}
        \centering
        \includegraphics[width=\textwidth]{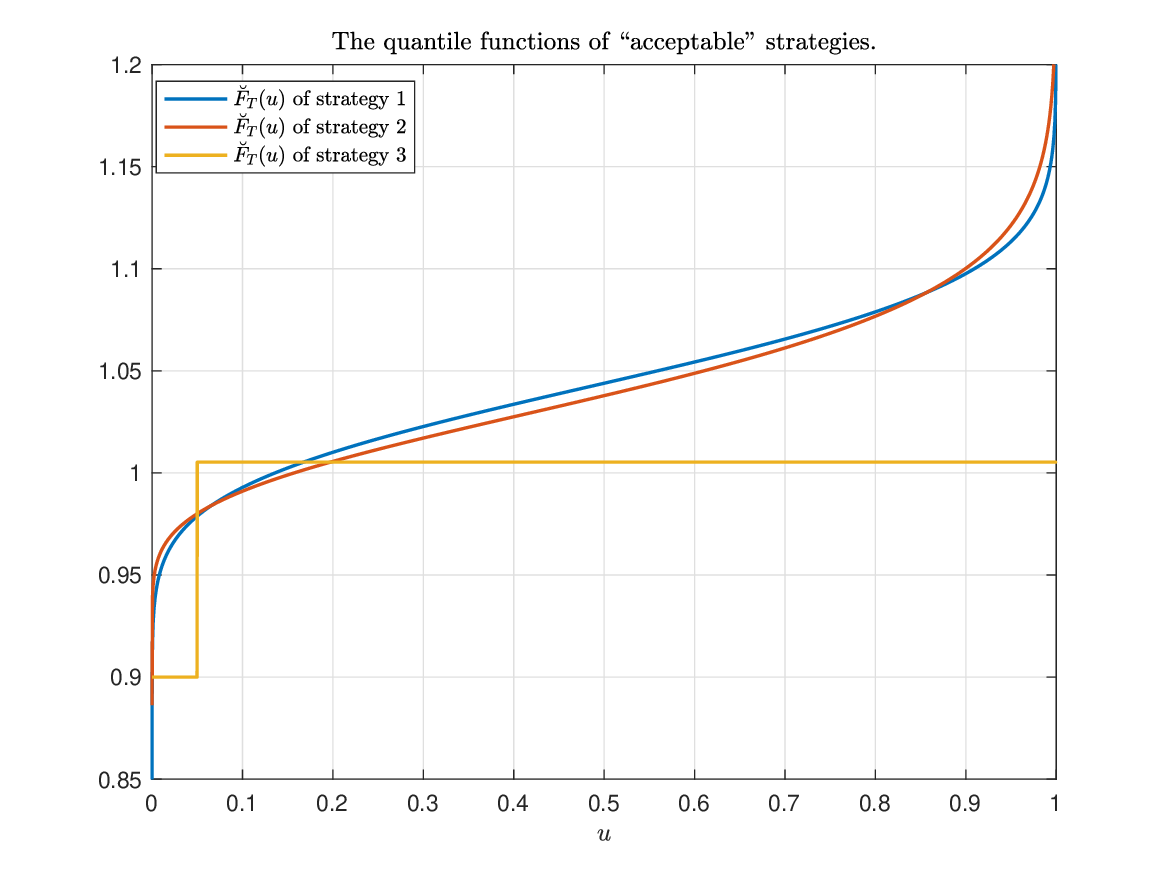}
    \end{subfigure}
    \hfill 
    \begin{subfigure}[b]{0.49\textwidth}
        \centering
        \includegraphics[width=\textwidth]{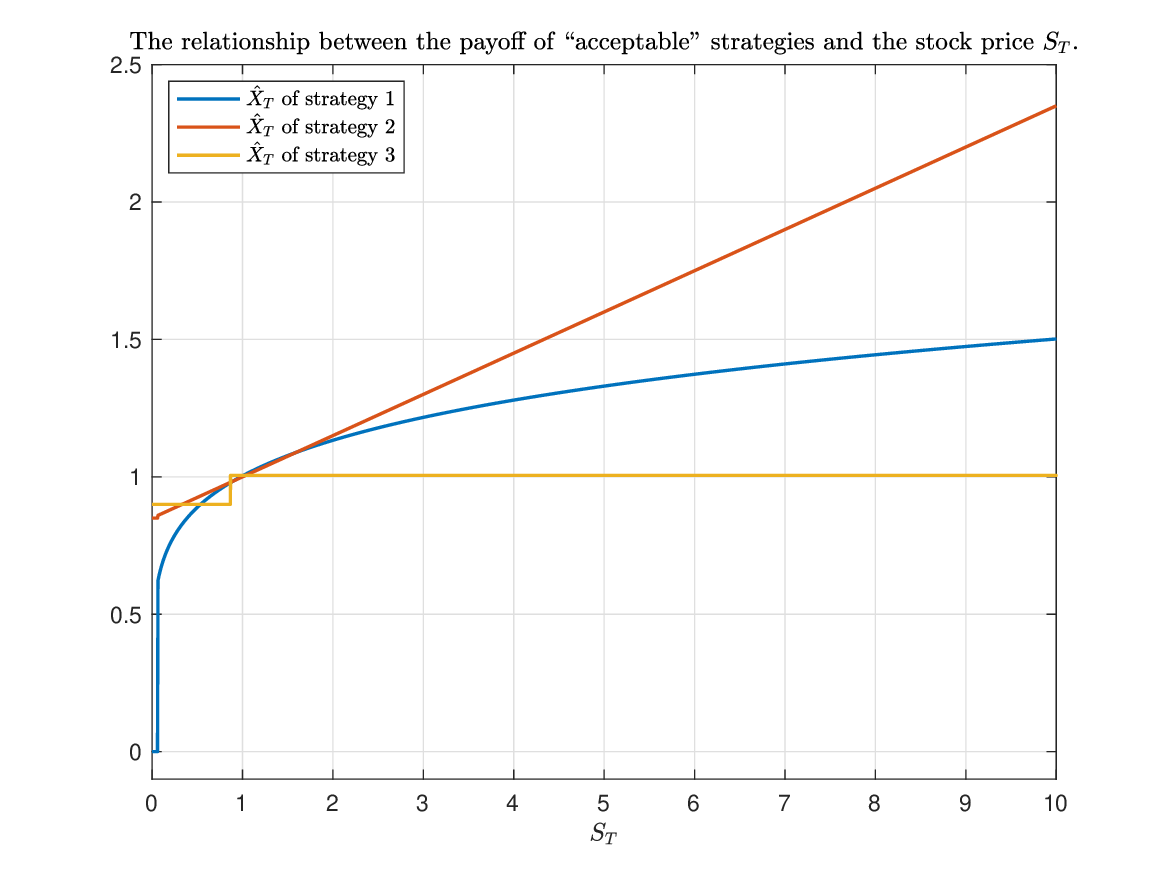}
    \end{subfigure}
	\caption{The quantile functions of the three \textit{acceptable} strategies (left panel) and the  relationship between their payoffs and the stock price (right panel). }
	\label{acceptable_qr}
\end{figure}
\Cref{acceptable_qr} illustrates the quantile functions of the three \textit{acceptable} strategies (left panel) and the relationship between their payoffs and the stock price (right panel). Notably, the quantile functions of strategies 1 and 2 have similar shapes, however their lower bounds $\breve{F}_T(0^+)$ differ significantly. For strategy 1 the lower bound is $0$, whereas for strategy 2 the lower bound is $0.85$. Thus, strategy 1 carries more risk in the event of extreme stock price losses, see also the right panel. As compared with the first two strategies, strategy 3 provides the best protection against downside risk, which however comes at the cost of limited upside potential.

\begin{table}[thbp]
  \centering
  \caption{BW divergences between each \textit{acceptable} strategy's
   terminal wealth and that of the benchmark's. The top row corresponds to the BW divergence with generator $\phi_1$, while the second row to the generator $\phi_2$. The last column provides the chosen tolerance level $\varepsilon$, that is the maximal BW divergences of each row.   }
\begin{tabular}{c@{\hskip 0.25in}ccc c}
     BW generator    & \multicolumn{3}{c}{BW divergence} & \hspace{0.25em}chosen $\varepsilon$\\
     \cmidrule{2-4}
     & Strategy 1    & Strategy 2    & Strategy 3    &   \\
    \midrule
    $\phi_1$  & $0.003673 $& $0.003717$ & $0.000526 $ & $0.003717 $ \\[0.5em]
    $\phi_2$  & $0.001785 $ & $0.001799  $ & $0.000272  $& $0.001799 $\\
    \bottomrule
    \end{tabular}%
  \label{tab:1}%
\end{table}%

\Cref{tab:1} reports for each \textit{admissible} strategy the BW divergence of its terminal wealth to that of the benchmark's. The last column displays the maximal $\varepsilon$ for each row, which is the  tolerance level the investor chooses. The first row corresponds to the Bregman generator $\phi_1$ and the second to the generator $\phi_2$.

We first study the optimal payoff when the investor does not care about the benchmark, that is when the BW constraint is absent and $\varepsilon = + \infty$ in problem \eqref{Yaari-o}. Figure \ref{Test_gamma_b} displays the optimal quantile functions for risk aversion parameters $\gamma=0.5$ and $\gamma=0.8$. Note that when $\gamma=0.8$ the quantile function is, as expected, flatter. We observe that these quantile functions are far away from the benchmark.    

\begin{figure}[tt]
    \centering
    \begin{subfigure}[b]{0.49\textwidth}
        \centering
 \includegraphics[width=\textwidth]{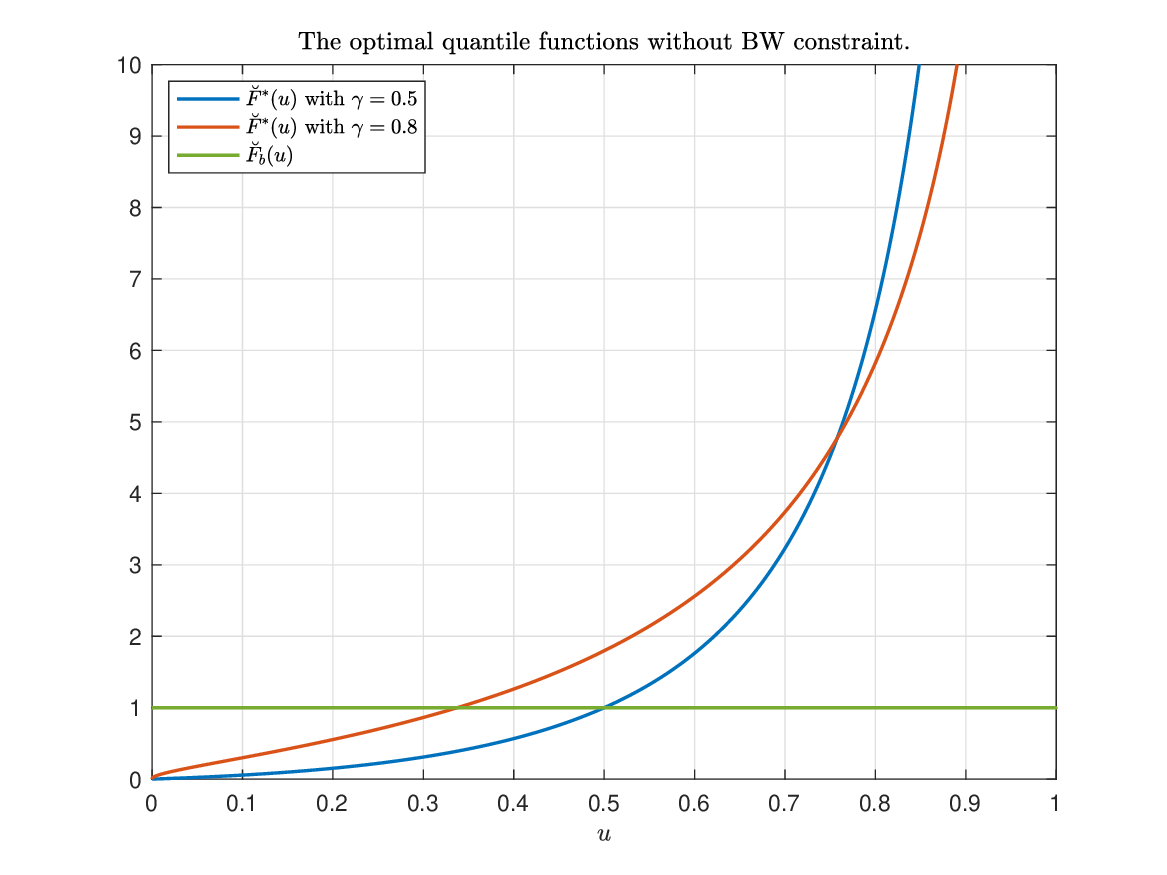}
    \end{subfigure}
	\caption{The optimal quantile function $\breve{F}^*(\cdot)$ in the absence of a BW constraint, i.e., $\varepsilon = + \infty$ in problem \eqref{Yaari2}, for $\gamma=0.5$ (blue line) and $\gamma=0.8$ (red line), respectively. The green curve depicts the quantiles of the constant benchmark. }
	\label{Test_gamma_b}
\end{figure}

Next, \Cref{Test_gamma_a} displays the optimal quantile function for problem \eqref{Yaari2} when the BW constraint is applied. We observe that the shape of the optimal quantile functions changes in a significant manner and the variation in $\gamma$ has a smaller impact on $\breve{F}^*$, indicating that the BW constraint becomes the dominant factor. We make the following additional observations. First, the similarity of $\breve{F}^*$ under the different $\phi_i$ suggests that our method for choosing  $\varepsilon$ is not unreasonable. Second, in both the left and right panels we see that - to satisfy the BW constraint - the optimal payoffs are less likely to fall below the benchmark payoff and tend to stay near it. Specifically, the optimal payoff has about $5$\% probability of falling below that of the benchmark's, while the upper bound of the payoff, $\breve{F}^*(1^-)$, is around $1.073$.

\begin{figure}[t]
    \centering
    \begin{subfigure}[b]{0.49\textwidth}
        \centering
        \includegraphics[width=\textwidth]{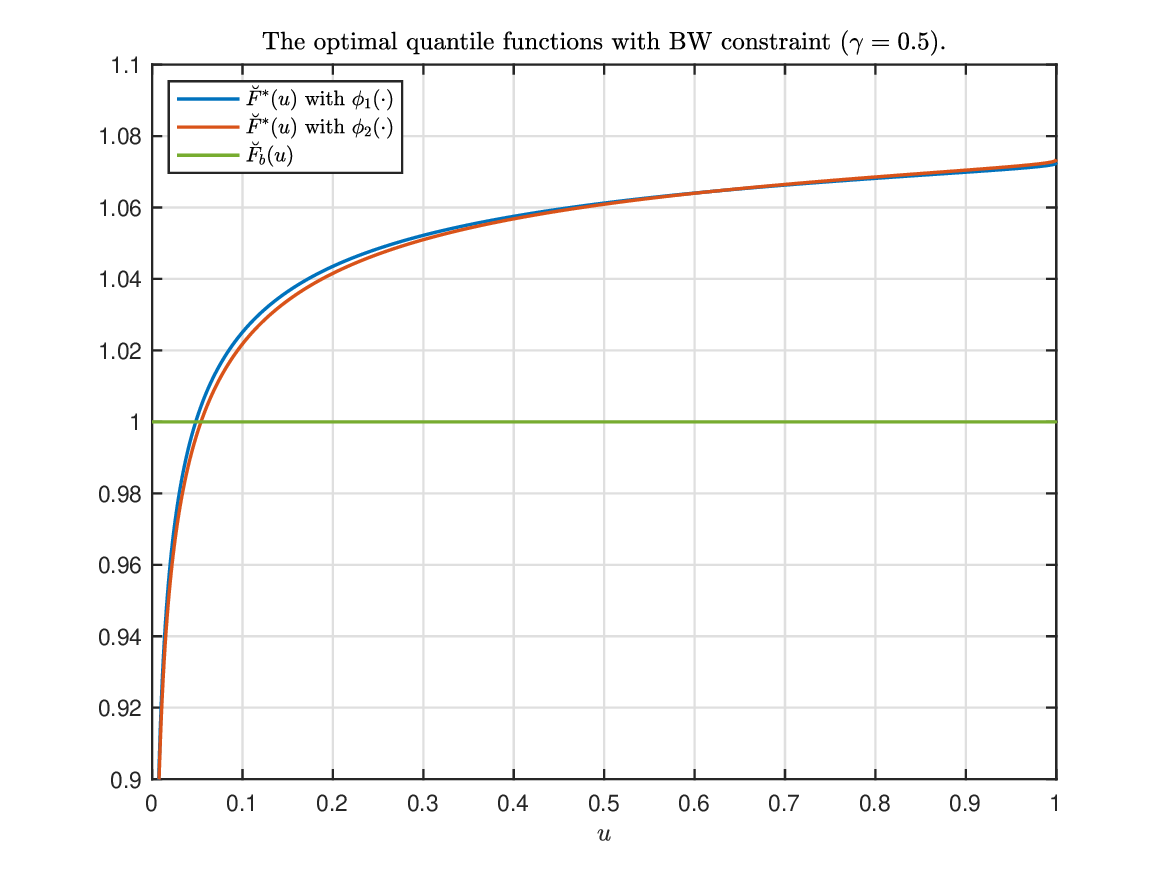}
    \end{subfigure}
    \hfill
    \begin{subfigure}[b]{0.49\textwidth}
        \centering
        \includegraphics[width=\textwidth]{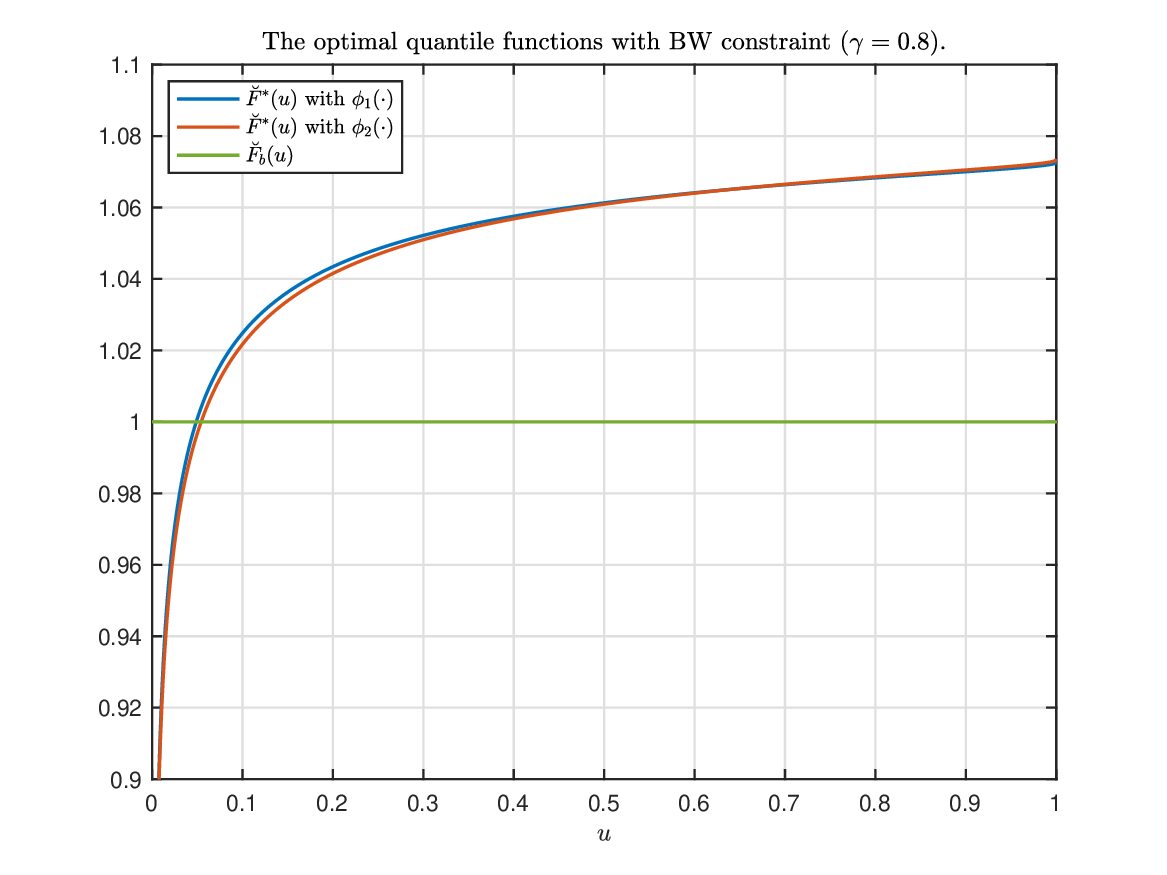}
    \end{subfigure}
	\caption{The optimal quantile functions $\breve{F}^*(u)$ for the generators $\phi_1(x)=x^2$ (blue lines), $\phi_2(x)=x \ln x$ (red lines), and the constant benchmark $\breve{F}_{b}(u)$ (green line) for $\gamma=0.5$ and $\gamma=0.8$, respectively.  }
	\label{Test_gamma_a}
\end{figure}

To further explore the behaviour of the optimal payoffs, \Cref{Test_gamma_c} displays how they depend on the stock  $S_T$. In this regard, recall that $\hat X_{T}={ {\breve{F}^*}}(1-F_{\varphi_{T}}(\varphi_{T}))$. Since $\mu_s\ > r$, it holds that $\varphi_{T}$ is monotonically decreasing in $S_T$, which results that the optimal payoffs are non-decreasing in $S_{T}$. For $\gamma=0.5$, we observe that when the stock price declines by more than 60 percent, i.e., when $S_T < 0.4$, then all optimal payoffs are close to zero. For $0.4 \leq S_T < 1$, the optimal payoffs under BW constraints increase very rapidly with $S_T$ and clearly surpass the optimal payoff without BW constraints.  Moreover, we find that under BW constraints, optimal payoffs will outperform the benchmark as soon as $S_T > 0.87$, also reflecting a capacity for risk mitigation. However, their outperformance will always remain fairly moderate. In contrast, the optimal payoff without the BW constraint only begins to outperform the benchmark when $S_T >1.25$, at which point the outperformance increases rapidly. Similar findings hold for the case in which $\gamma=0.8$ (right panel of \Cref{Test_gamma_c}).
\begin{figure}[htbp]
    \centering
    \begin{subfigure}[b]{0.49\textwidth}
        \centering
        \includegraphics[width=\textwidth]{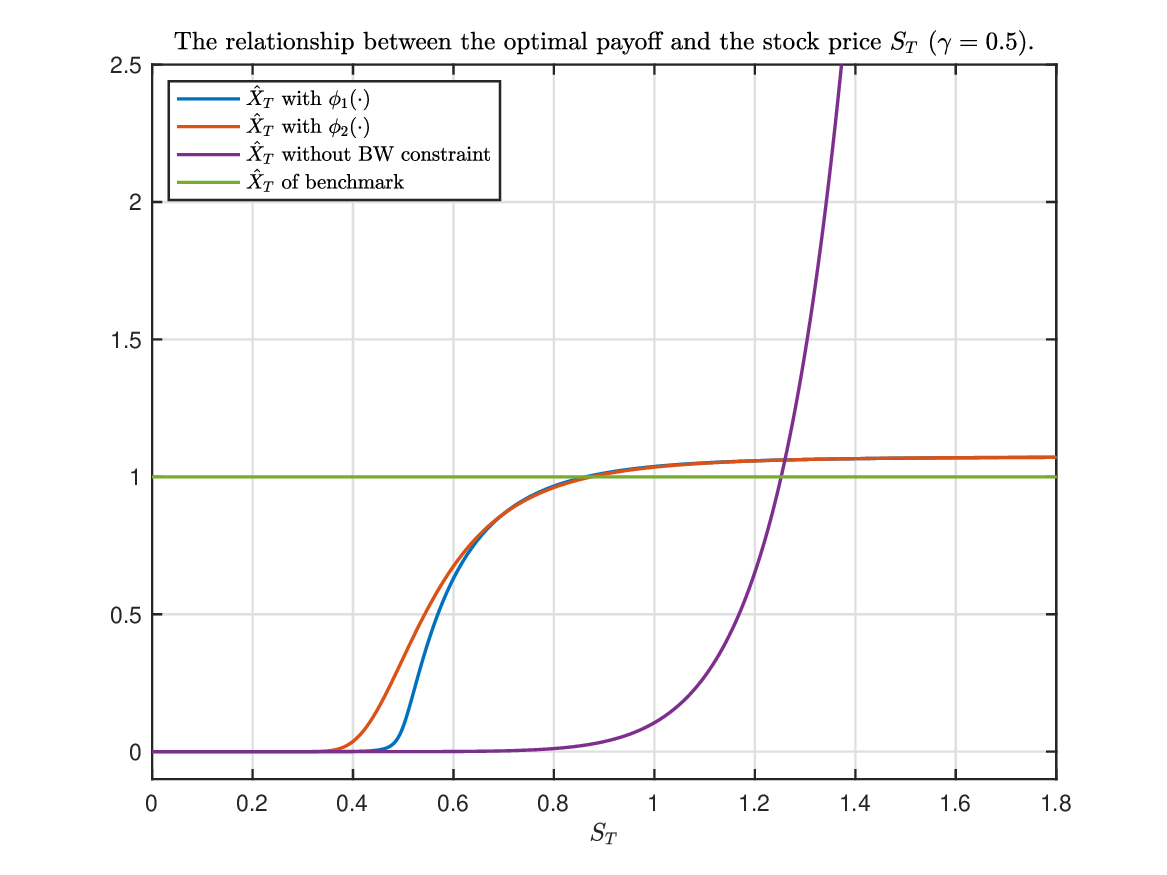}
    \end{subfigure}
    \hfill
    \begin{subfigure}[b]{0.49\textwidth}
        \centering
        \includegraphics[width=\textwidth]{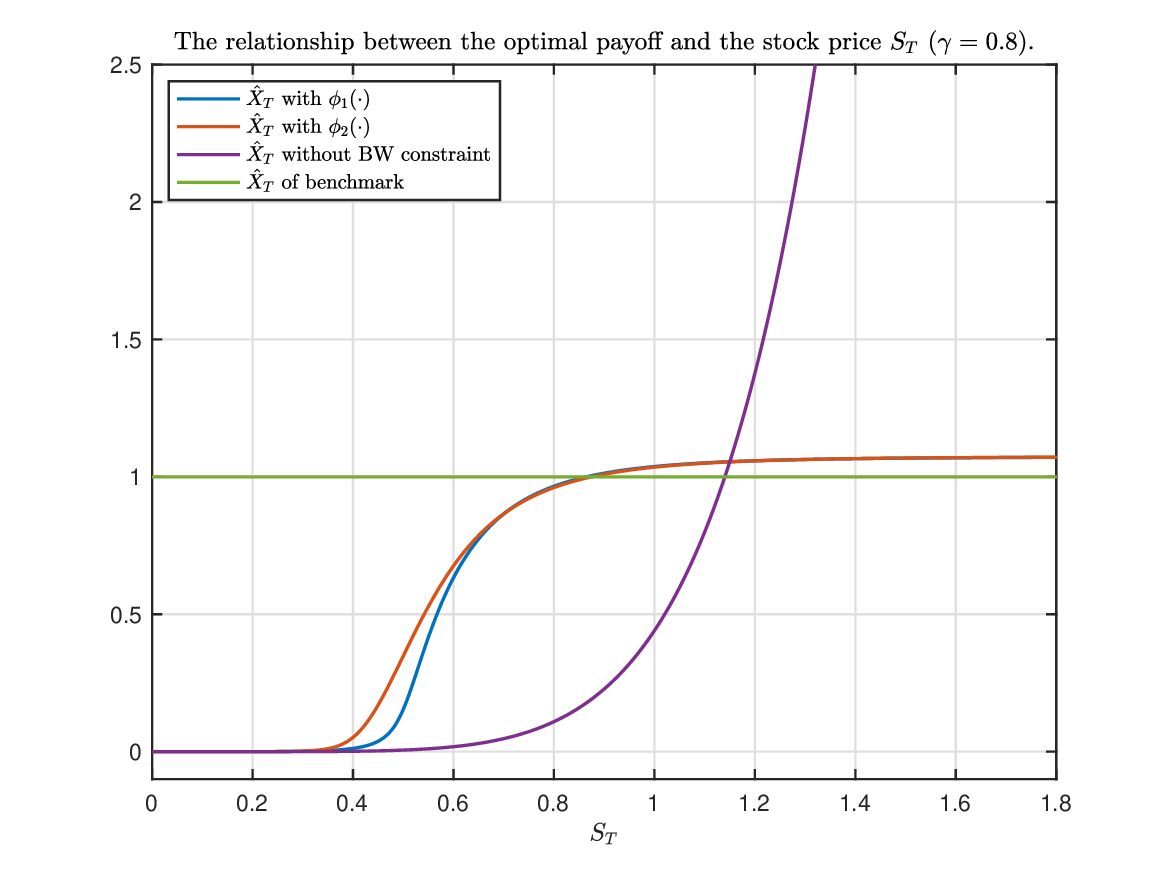}
    \end{subfigure}
	\caption{The optimal payoff $\hat X_{T}$ as a function of $S_T$  under the BW constraint induced by the generators $\phi_1(x)=x^2$ (blue lines), $\phi_2(x)=x \ln x$ (red lines) and when no such constraint is considered (purple lines). The left panel depicts the optimal payoffs  for $\gamma=0.5$ and the right panel for $\gamma=0.8$.}
	\label{Test_gamma_c}
\end{figure}

Next, we study the optimal payoff under the BW divergence that are driven by generators $\tilde{\phi}_1(\cdot;\alpha)$ and $\tilde{\phi}_2(\cdot;\alpha)$ with threshold values $\alpha=0.95$ and $\alpha=1$. \Cref{tab:2} provides, similarly to \Cref{tab:1}, the BW divergence values for the three \textit{acceptable} strategies and the different BW divergences. Again, the last column depicts the tolerance levels $\varepsilon$ that the investor chooses.
\begin{table}[htbp]
  \centering
  \caption{BW divergence between each \textit{acceptable} strategy's terminal wealth and that of the benchmark's. The different rows corresponds to different BW divergences with generators indicated in the first column. The last column provides the chosen tolerance level $\varepsilon$, that is the maximal BW divergences of each row. 
  }
\begin{tabular}{c@{\hskip 0.25in}cccc}
     BW generator    & \multicolumn{3}{c}{BW divergence} &\hspace{0.25em} chosen $\varepsilon$\\
     \cmidrule{2-4}
     & Strategy 1    & Strategy 2    & Strategy 3    &   \\
    \midrule
    $\tilde{\phi}_1(\cdot;1)$ & $0.000088 $& $0.000065 $& $ 0.000500 $& $0.000500 $\\[0.5em]
    $\tilde{\phi}_2(\cdot;1)$  & $0.000045 $ & $0.000033 $& $0.000259 $& $0.000259 $\\[0.5em]
    $\tilde{\phi}_1(\cdot;0.95)$  & $0.000002$ & 0.0 & $0.000125 $& $0.000125$ \\[0.5em]
    $\tilde{\phi}_2(\cdot;0.95)$  & $0.000001$ & 0.0 & $0.000067$& $0.000067$ \\
    \bottomrule
    \end{tabular}
  \label{tab:2}%
\end{table}%
\begin{figure}[htbp]
    \centering
    \begin{subfigure}[b]{0.49\textwidth}
        \centering
        \includegraphics[width=\textwidth]{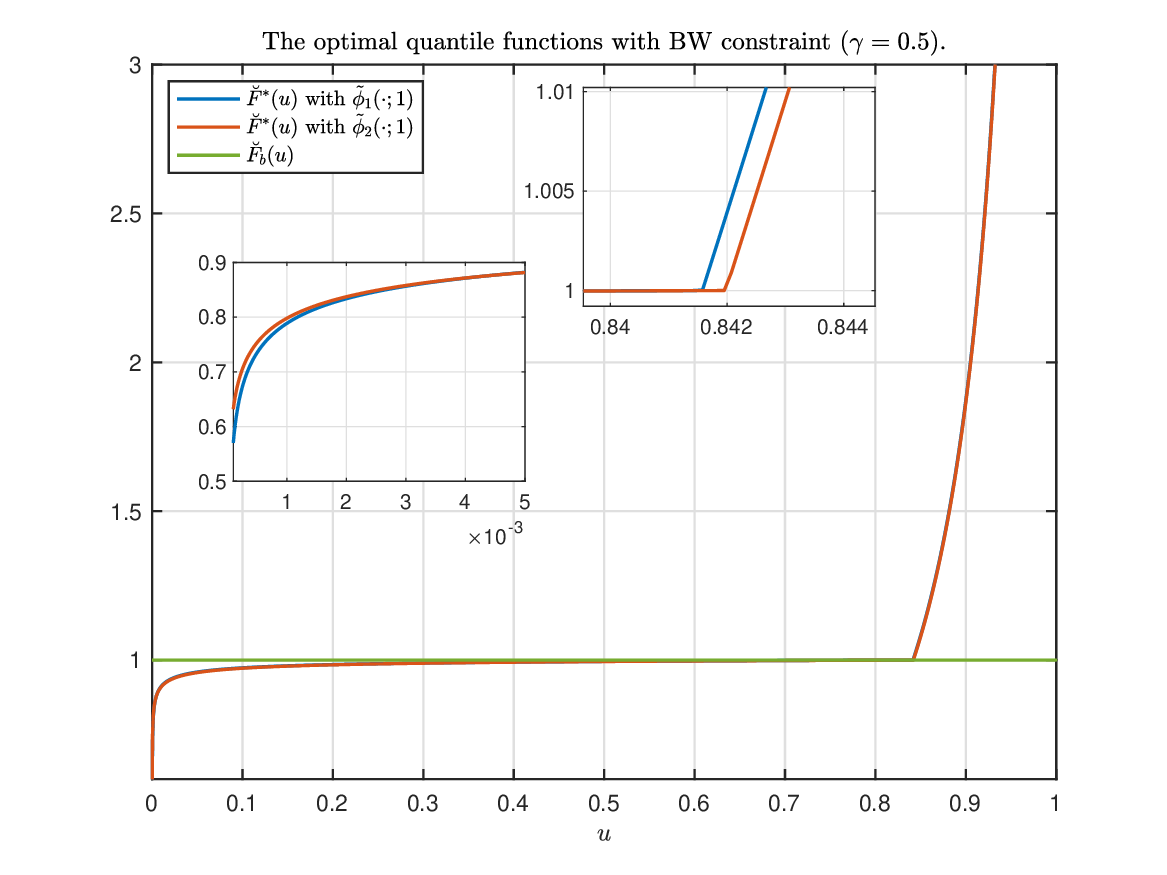}
    \end{subfigure}
    \hfill
    \begin{subfigure}[b]{0.49\textwidth}
        \centering
        \includegraphics[width=\textwidth]{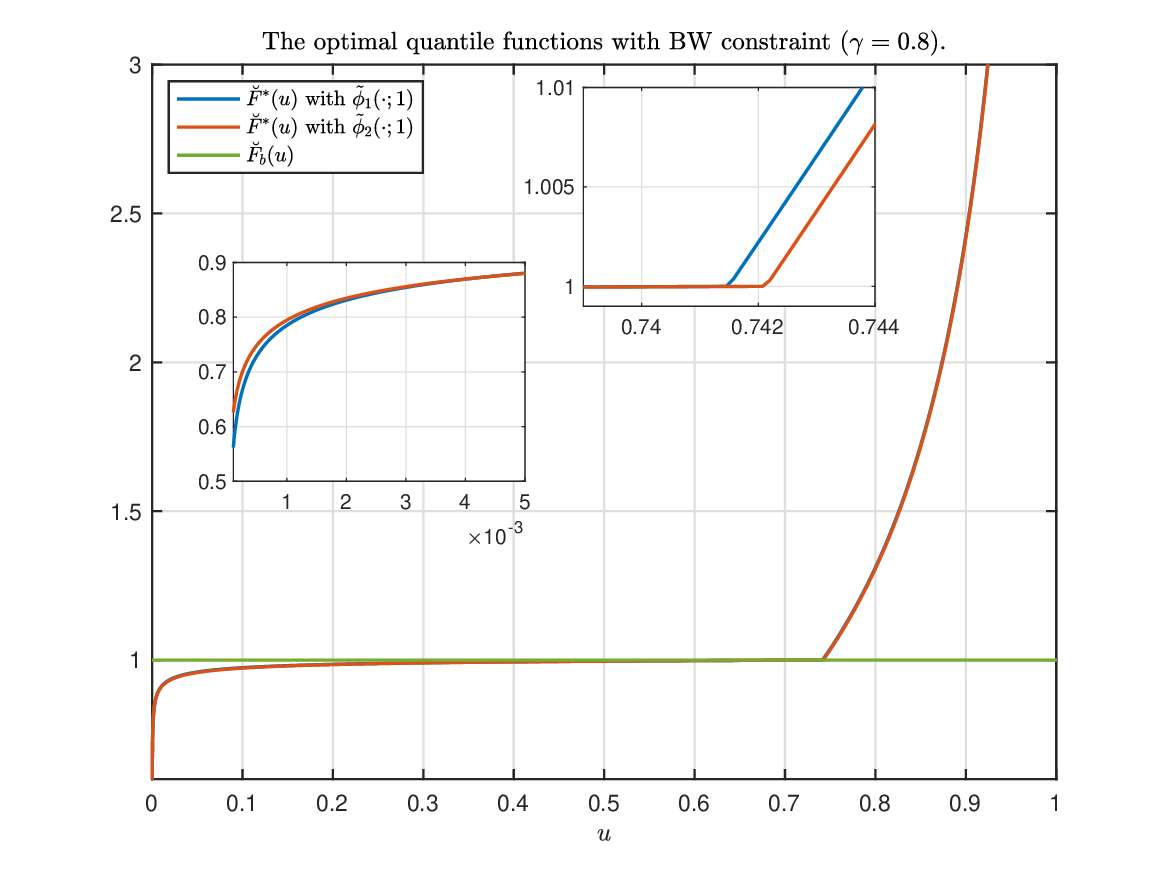}
    \end{subfigure}
    \begin{subfigure}[b]{0.49\textwidth}
        \centering
        \includegraphics[width=\textwidth]{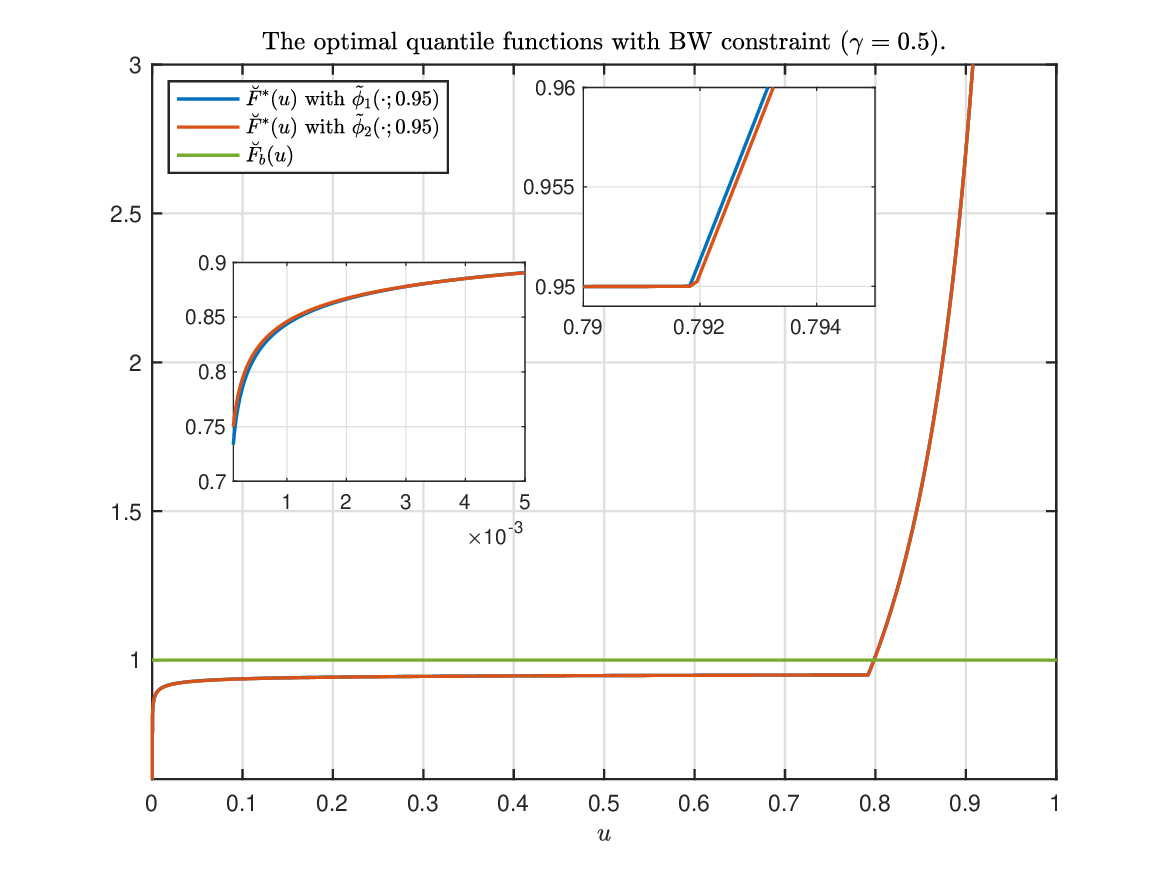}
    \end{subfigure}
    \hfill
    \begin{subfigure}[b]{0.49\textwidth}
        \centering
        \includegraphics[width=\textwidth]{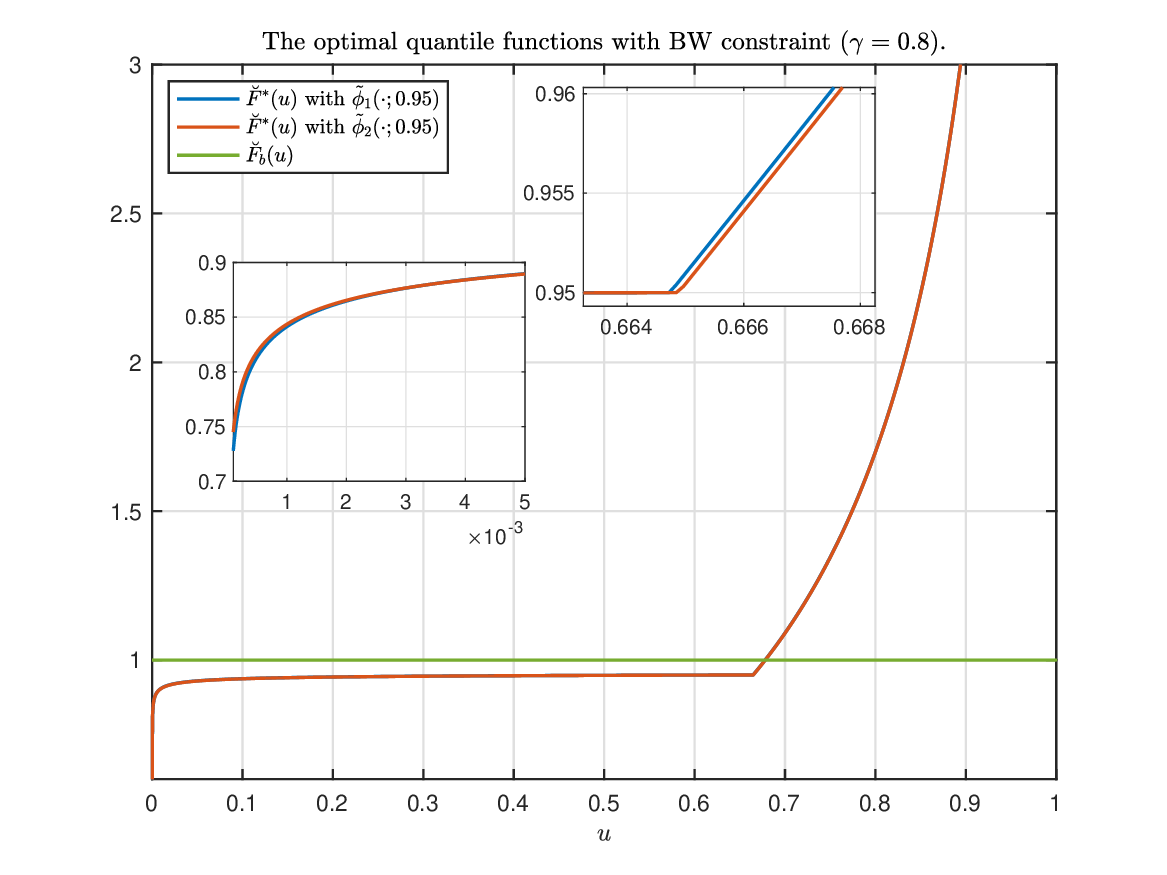}
    \end{subfigure}
	\caption{Optimal quantile functions $\breve{F}^*(u)$ for the generators $\tilde{\phi}_1(\cdot;\alpha)$ (blue lines), $\tilde{\phi}_2(\cdot;\alpha)$ (red lines) and the constant benchmark $\breve{F}_{b}(u)$ (green line). The BW generator threshold is $\alpha=0.95$  (bottom panels) and $\alpha=1$ (top panels) and the investor's risk aversion is $\gamma=0.5$ (left panels) and $\gamma=0.8$ (right panels), respectively. }
	\label{Test_improve_gamma_a}
\end{figure}

\begin{figure}[htbp]
    \centering
    \begin{subfigure}[b]{0.49\textwidth}
        \centering
        \includegraphics[width=\textwidth]{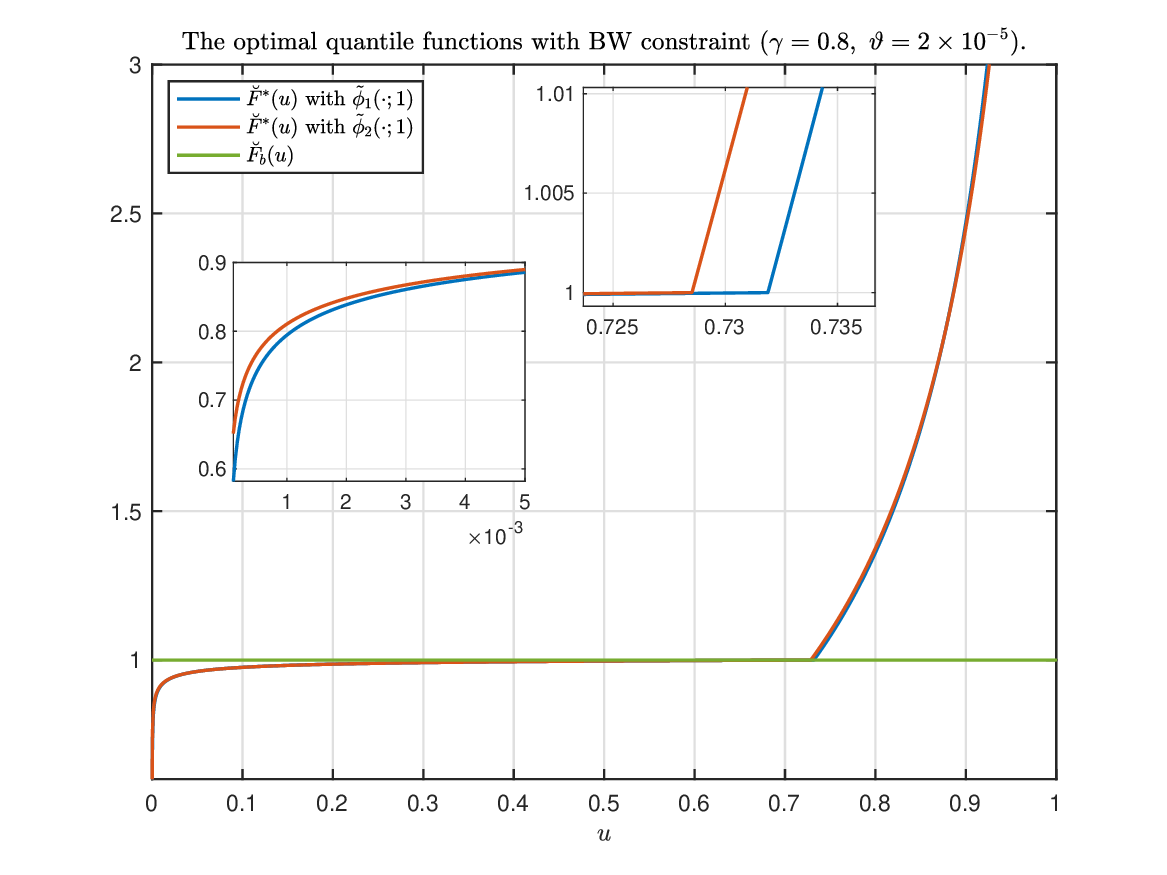}
    \end{subfigure}
    \hfill
    \begin{subfigure}[b]{0.49\textwidth}
        \centering
        \includegraphics[width=\textwidth]{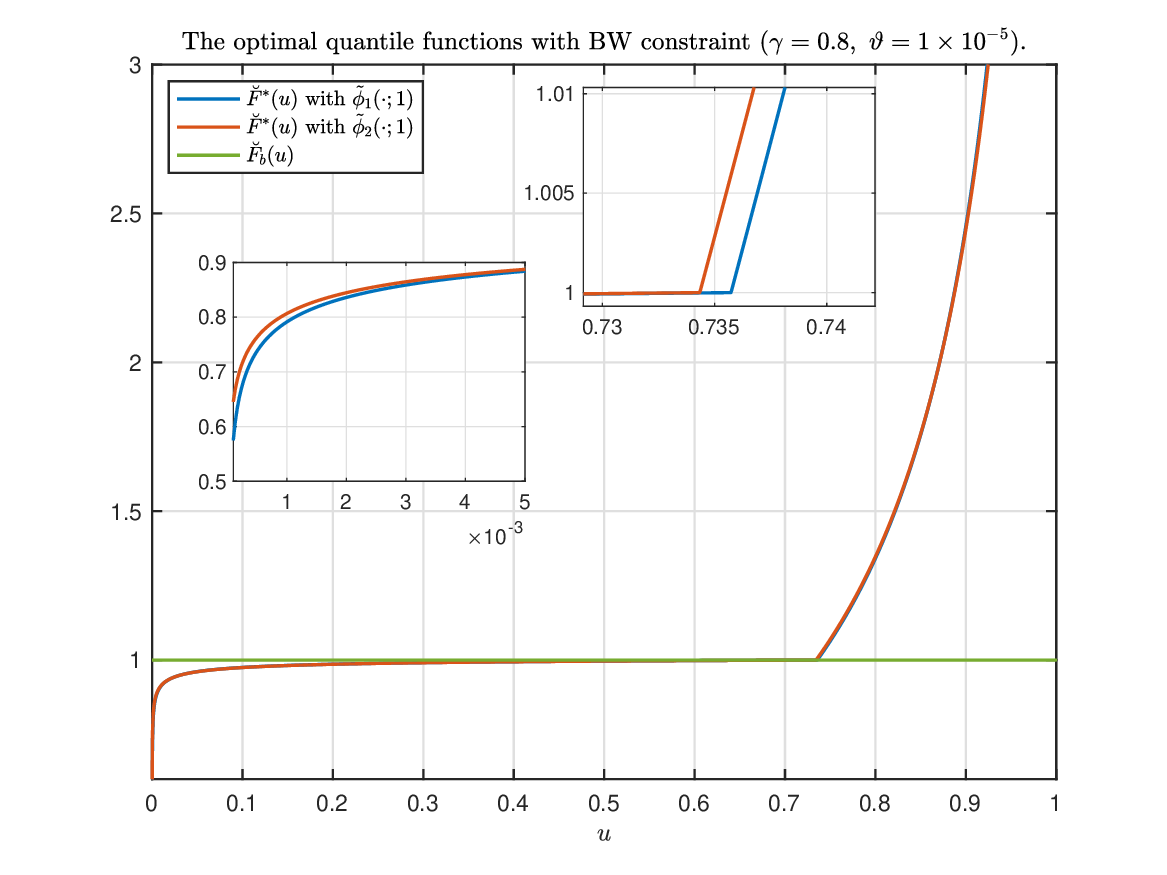}
    \end{subfigure}
    \begin{subfigure}[b]{0.49\textwidth}
        \centering
        \includegraphics[width=\textwidth]{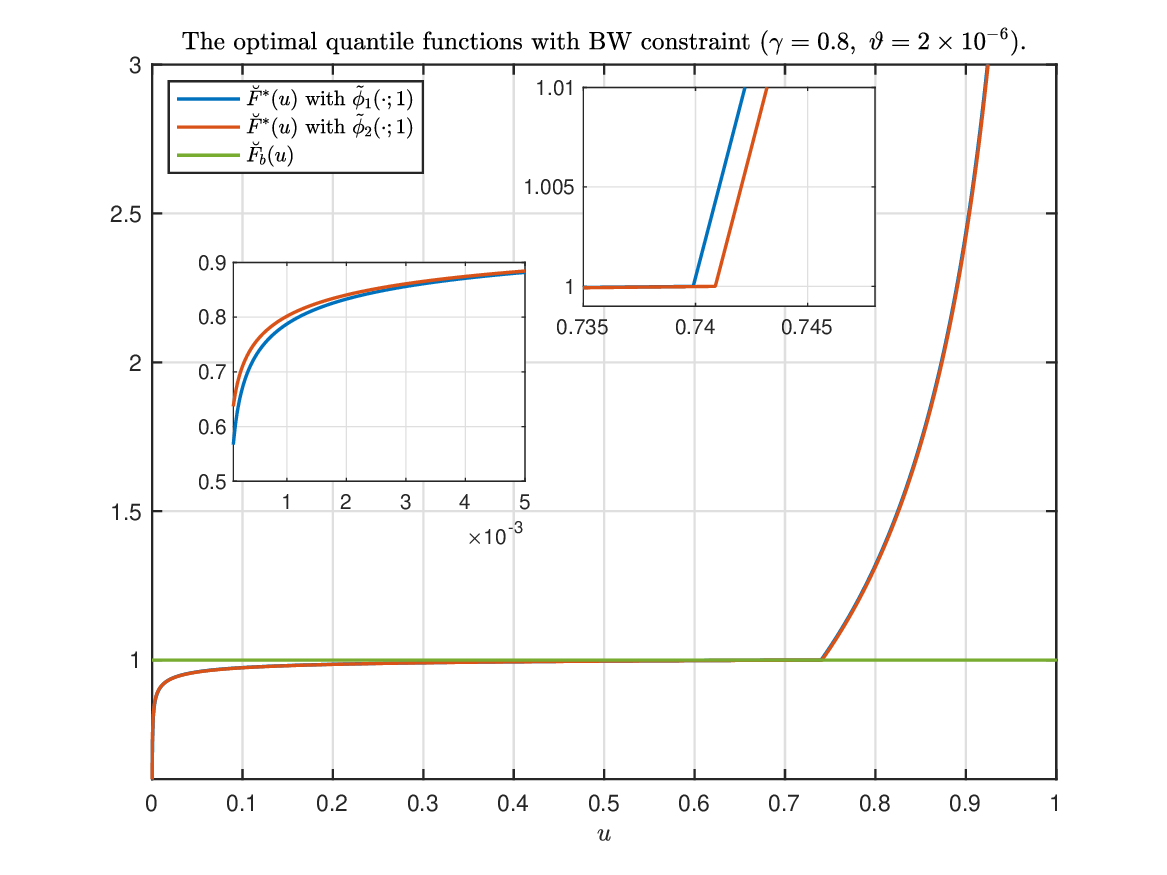}
    \end{subfigure}
    \hfill
    \begin{subfigure}[b]{0.49\textwidth}
        \centering
        \includegraphics[width=\textwidth]{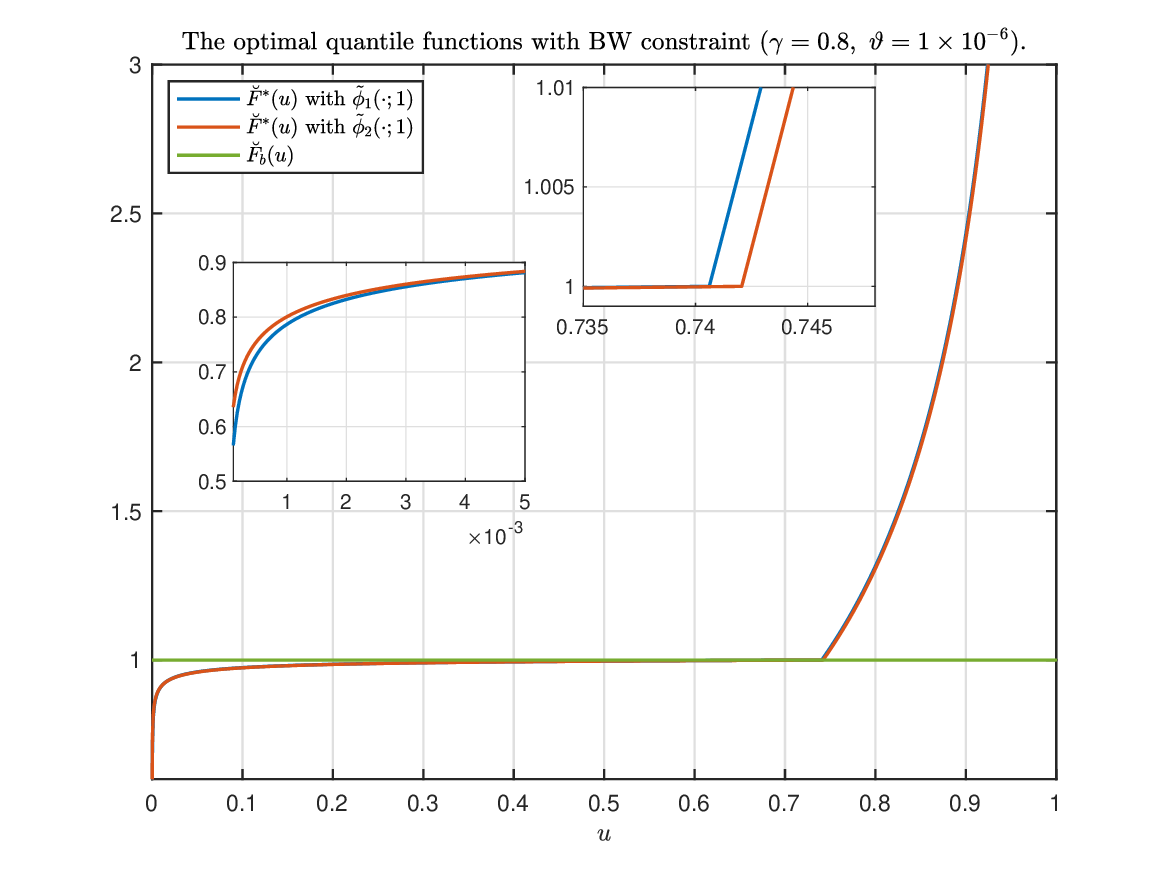}
    \end{subfigure}
	\caption{Optimal quantile functions $\breve{F}^*(u)$ for the generators $\tilde{\phi}_1(\cdot;1)$ (blue lines), $\tilde{\phi}_2(\cdot;1)$ (red lines) and the constant benchmark $\Fbenchinv(u)$ (green line) with different choice of $\vartheta$. The investor's risk aversion is $\gamma=0.8$.}
	\label{different_vartheta}
\end{figure}

  \begin{figure}[htbp]
    \centering
    \begin{subfigure}[b]{0.49\textwidth}
        \centering
        \includegraphics[width=\textwidth]{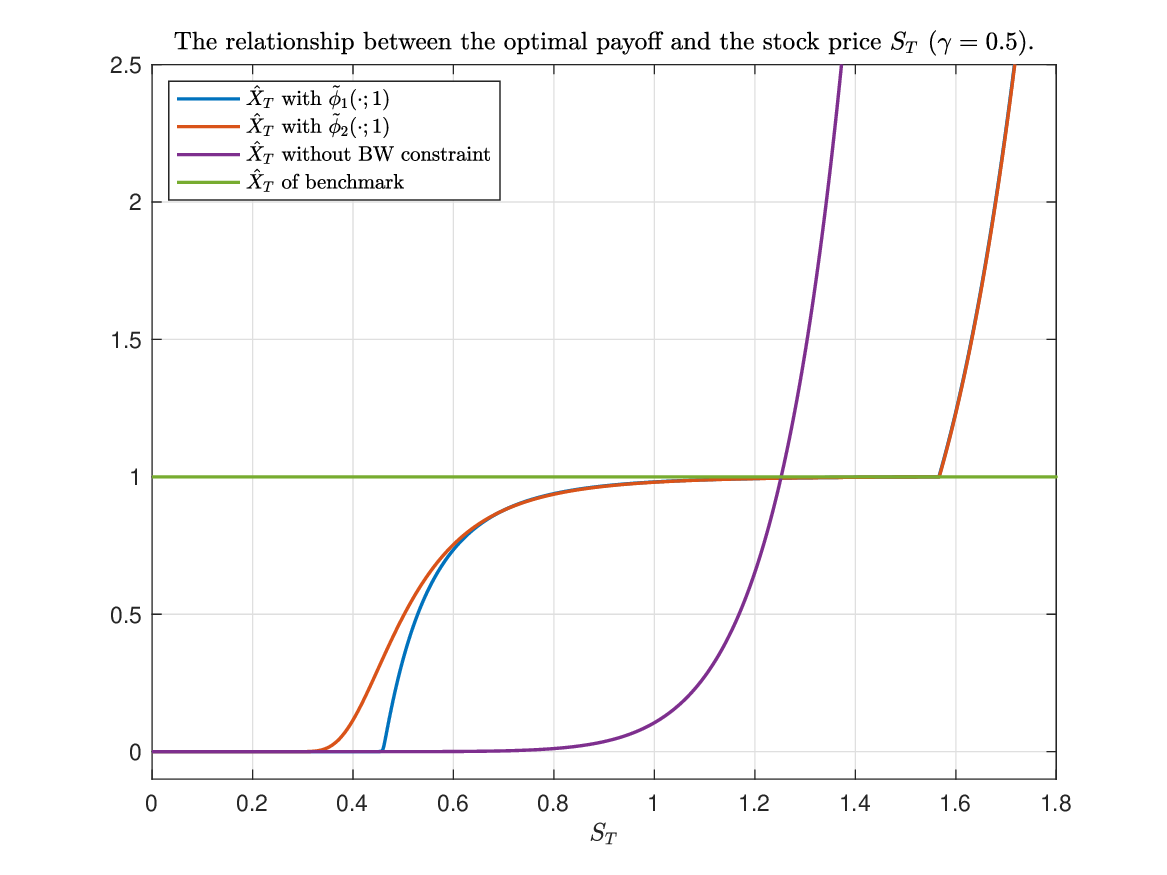}
    \end{subfigure}
    \hfill
    \begin{subfigure}[b]{0.49\textwidth}
        \centering
        \includegraphics[width=\textwidth]{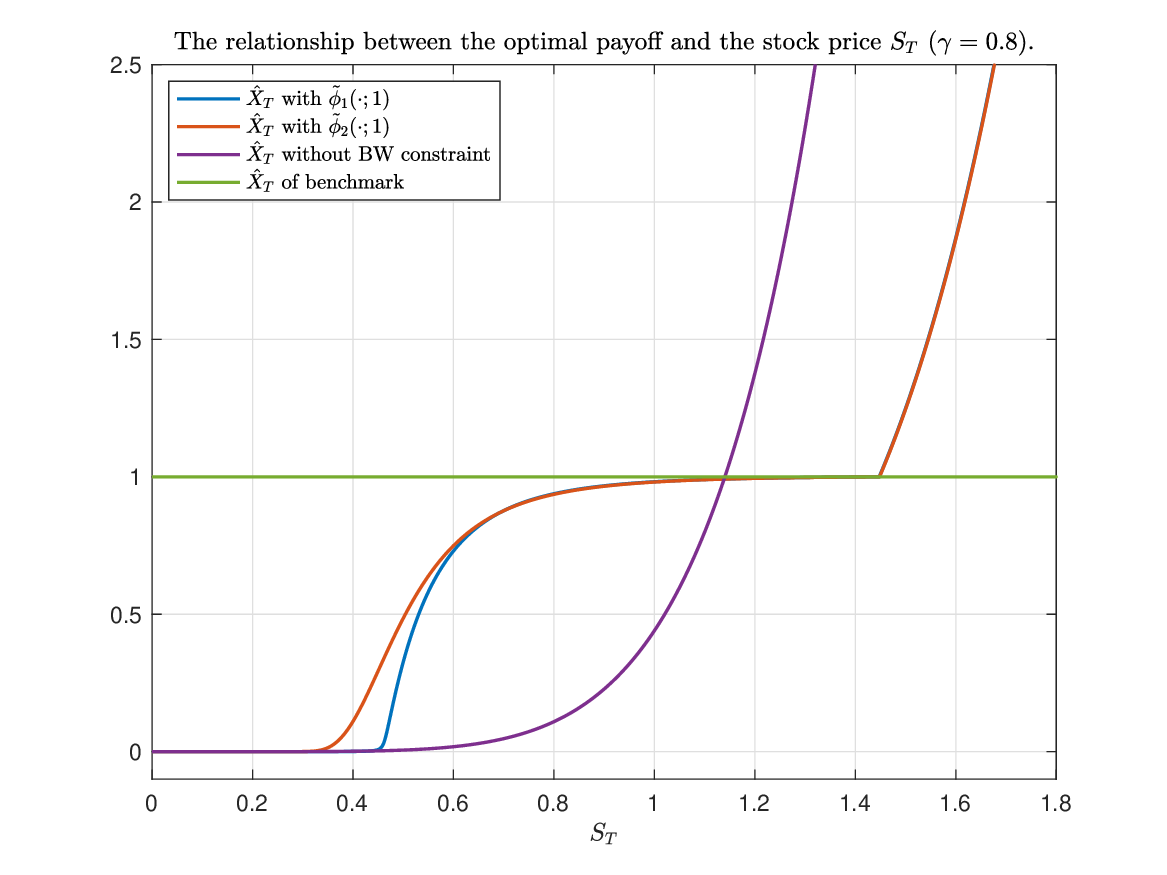}
    \end{subfigure}
    \begin{subfigure}[b]{0.49\textwidth}
        \centering
        \includegraphics[width=\textwidth]{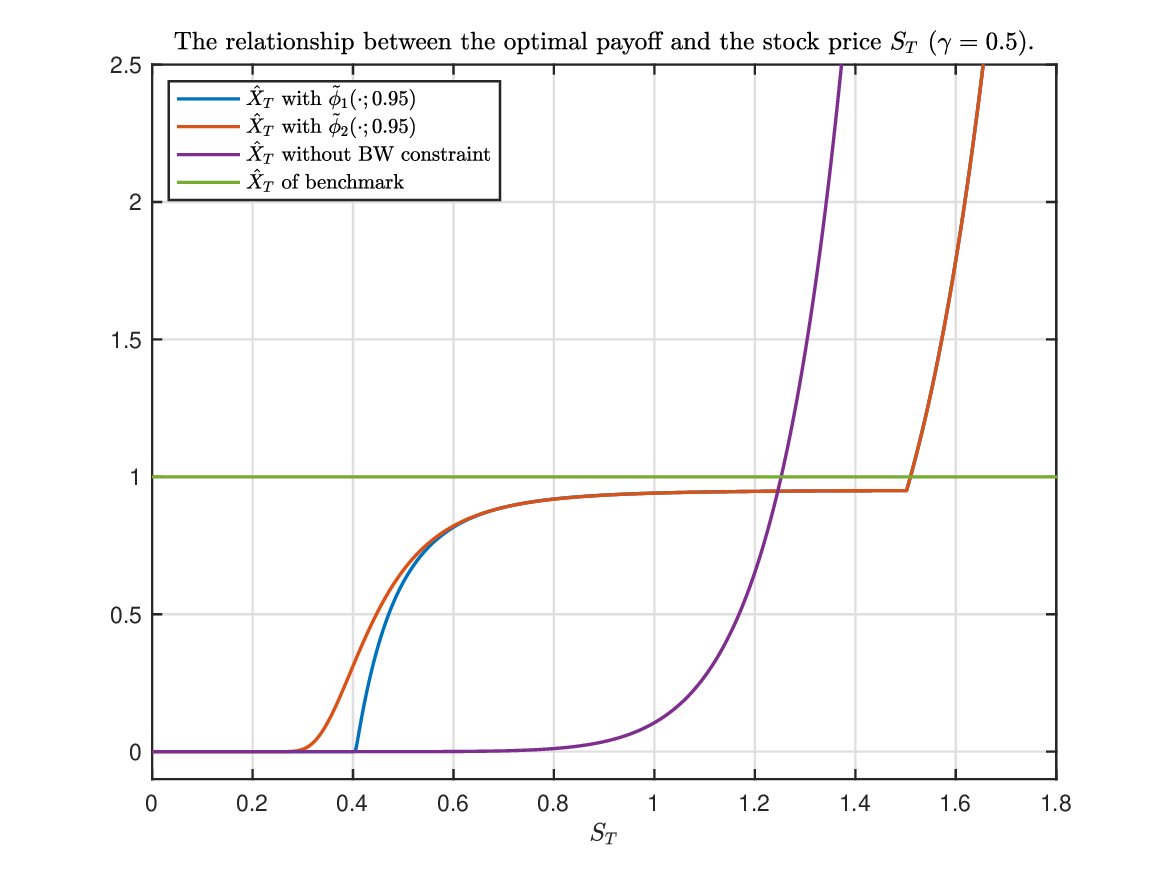}
    \end{subfigure}
    \hfill
    \begin{subfigure}[b]{0.49\textwidth}
        \centering
        \includegraphics[width=\textwidth]{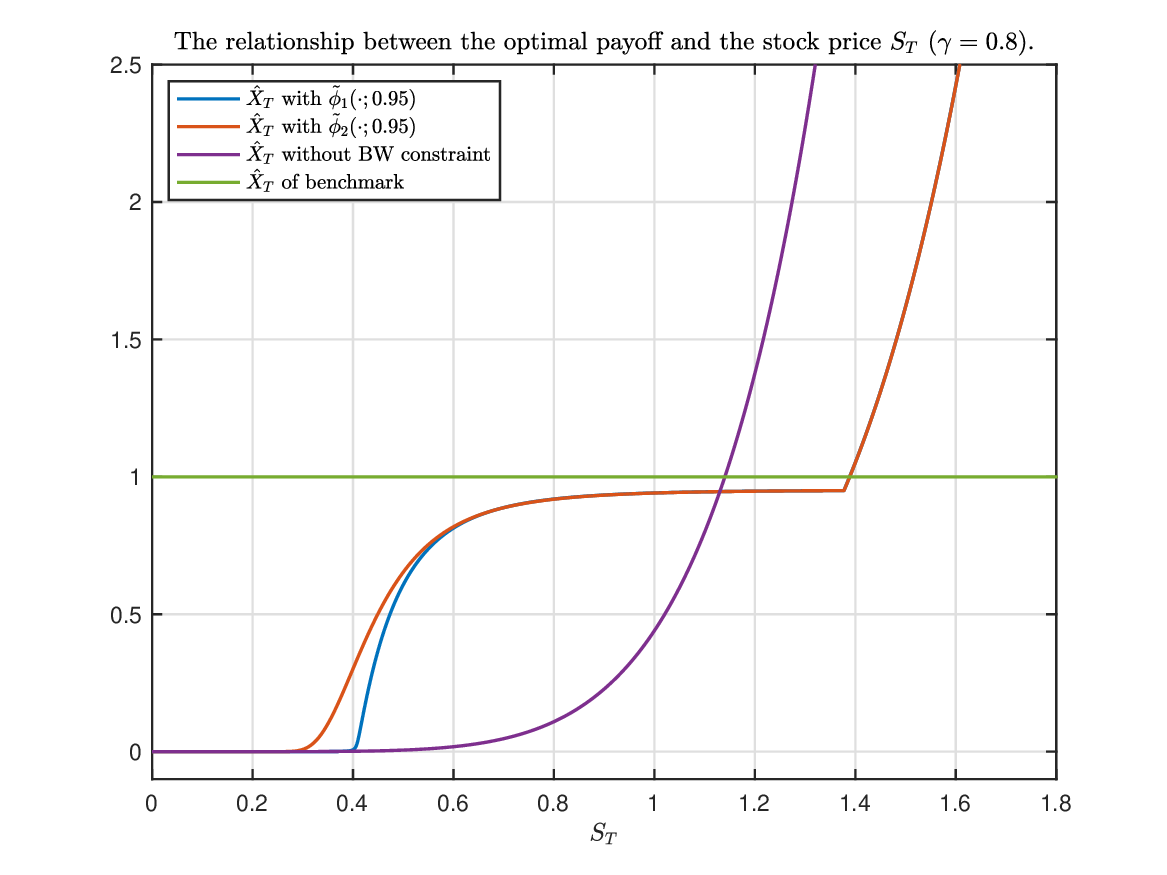}
    \end{subfigure}
	\caption{Optimal payoffs $\hat X_{T}$ as a function of $S_T$ for the generators $\tilde{\phi}_1(\cdot;\alpha)$ (blue lines) and $\tilde{\phi}_2(\cdot;\alpha)$ (red lines).  The purple line shows the payoff when the BW divergence constraint is absent. The BW generator threshold is $\alpha=0.95$  (bottom panels) and $\alpha=1$ (top panels) and the investor's risk aversion is $\gamma=0.5$ (left panels) and $\gamma=0.8$ (right panels), respectively.}
	\label{Test_improve_gamma_c}
\end{figure}

\Cref{Test_improve_gamma_a} presents the optimal quantile functions for the BW divergence with threshold $\alpha$. We observe that when $\breve{F}^*(u)$ is below the threshold value $\alpha = 1$ in the top panels ($\alpha = 0.95$ in the bottom panels), the optimal quantile function and the benchmark quantile function are close. This is in contrast to when $\breve{F}^*(u)$ exceeds $\alpha$, in which case the optimal quantile function increases rapidly. This indicates that the divergence constraint primarily targets the downside risk and penalizes outcomes that fall below the threshold $\alpha$. The magnified insets in each subplot illustrate the details of the quantile function $\breve{F}^*(u)$ within the interval $[0, 0.005]$ (bottom left) and when it crosses the threshold $\alpha$ (top right). Notably, as $\gamma$ increases (from left to right panels), $\breve{F}^*(u)$ reaches $\alpha$ earlier, but the subsequent rate of increase becomes slower. This indicates that risk averse investors (larger $\gamma$) are more inclined to secure the return level $\alpha$ rather than to pursue higher returns. In summary, since the budget and the $\varepsilon$ are fixed, if investors wish to reduce downside risk, they must correspondingly forgo high returns during favourable market conditions.

{\Cref{different_vartheta} showcases the effect of different regularization parameters $\vartheta$ on the numerical computation of the optimal quantile functions. It can be observed that the overall shapes of the optimal quantile functions remain similar across different values of $\vartheta$. As $\vartheta$ decreases, the influence of the regularization term becomes weaker. In fact, when $\vartheta = 10^{-6}$, the resulting optimal quantile functions are already very close to those shown in the upper-left panel of \Cref{Test_improve_gamma_a} ($\vartheta = 10^{-8}$). In practical applications, a smaller value of $\vartheta$ can be chosen to achieve higher numerical accuracy.}

\Cref{Test_improve_gamma_c} illustrates the relationship between the various payoffs   $\hat{X}_T$ and the stock price $S_T$. As compared to \Cref{Test_gamma_c}, we observe that the payoffs are no longer subject to an upper bound, allowing investors to achieve higher returns when the stock price is elevated. Furthermore, the optimal payoffs present strong risk resilience during periods of declining stock prices. The different figures clearly show that using the BW constraint with a wealth threshold makes it possible to effectively balance downside risk (by staying close to the constant benchmark) against the potential to take advantage of more favourable market returns.

\subsection{Example 2: Optimal payoff under a non-constant benchmark}
In this section, we consider the case of an investor with a more aggressive benchmark strategy. Specifically, the benchmark strategy has a constant exposure of $80\%$ in the stock and $20\%$ in the risk-free asset, that is maintained during the investment horizon. The quantile function of terminal wealth the benchmark strategy generates is 
\begin{align*}
    \breve{F}_{b}(u) = e^{  (r+(\mu_s-r)0.8-\frac{1}{2}0.8^2\,\sigma_s^2)T +  0.8\,\sigma_s \,\sqrt{T} \,\breve{\Phi}(u) }, \ u \in (0,1). 
\end{align*}
To determine the tolerance level $\varepsilon$, the investor pursues similarly to \Cref{sec:ex-const-bench} by considering the BW divergence to \textit{acceptable} strategies. Since the benchmark strategy is more aggressive to the constant benchmark in \Cref{sec:ex-const-bench}, we modify the three \textit{acceptable} strategies to reflect the investor's choice of benchmark.  

\textbf{Strategy 1:} The constant mix strategy with 75\% invested in the stock and the remaining 25\% in the risk-free asset.
 
\textbf{Strategy 2:} The buy-and-hold strategy with 85\% invested in the stock and the remaining 15\% in the risk-free asset.

\textbf{Strategy 3:} A digital payoff $Y_T$ on the stock $S_T$ given by  
\begin{align*}
	Y_T = 0.8 \,\mathbb{I}_{ \{ S_T\leq c\} } +  \tfrac{1}{0.9}(e^{rT}-0.08) \,\mathbb{I}_{ \{ S_T > c\} }.   
	\end{align*}
where $c >0$ is such that $\mathbb{Q}(S_T<c)=0.1$.

As before, we consider Bregman generators $\phi_i(\cdot)$, $\tilde{\phi}_i(\cdot;1)$ and $\tilde{\phi}_i(\cdot;0.95)$ for $i=1,2$.  \Cref{tab:3} presents the BW divergences between each \textit{acceptable} strategy's terminal wealth and that of the benchmark as well as the investor's chosen tolerance level $\varepsilon$ for each of the Bregman generators.

\begin{table}[htbp]
  \centering
  \caption{Choice of the tolerance levels $\varepsilon$. The first three columns display the divergences for the three strategies for all different generators that we consider. The last column provides the resulting tolerance level $\varepsilon$.}
    \begin{tabular}{c@{\hskip 0.25in}cccc}
     BW generator    & \multicolumn{3}{c}{BW divergence} & \hspace{0.25em}chosen $\varepsilon$\\
     \cmidrule{2-4}
     & Strategy 1    & Strategy 2    & Strategy 3    &   \\
    \midrule
    $\phi_1(\cdot)$  & 0.000506 & 0.001179 & 0.086821 & 0.086821 \\[0.5em]
    $\phi_2(\cdot)$  & 0.001785 & 0.000367 & 0.032795 & 0.032795 \\[0.5em]
    $\tilde{\phi}_1(\cdot;1)$  & 0.000007 & 0.000009 & 0.001108 & 0.001108 \\[0.5em]
    $\tilde{\phi}_2(\cdot;1)$  & 0.000004 & 0.000005 & 0.000630 & 0.000630 \\[0.5em]
    $\tilde{\phi}_1(\cdot;0.95)$  & 0.000007 & 0.000007 & 0.001023 & 0.001023 \\[0.5em]
    $\tilde{\phi}_2(\cdot;0.95)$  & 0.000004 & 0.000004 & 0.000586 & 0.000586 \\
    \bottomrule
    \end{tabular}% 
  \label{tab:3}%
\end{table}%

\begin{figure}[ht]
    \centering
    \begin{subfigure}[b]{0.49\textwidth}
        \centering        \includegraphics[width=\textwidth]{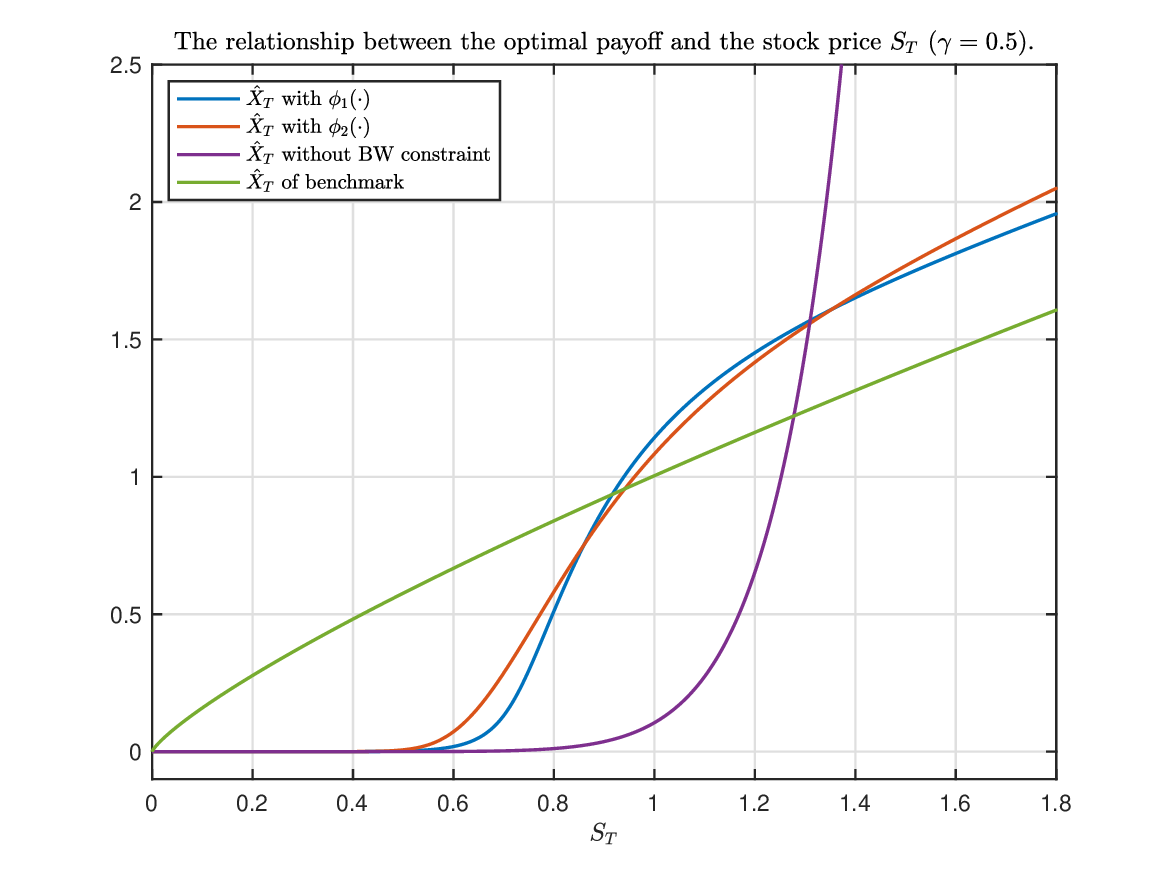}
    \end{subfigure}
    \hfill
    \begin{subfigure}[b]{0.49\textwidth}
        \centering
        \includegraphics[width=\textwidth]{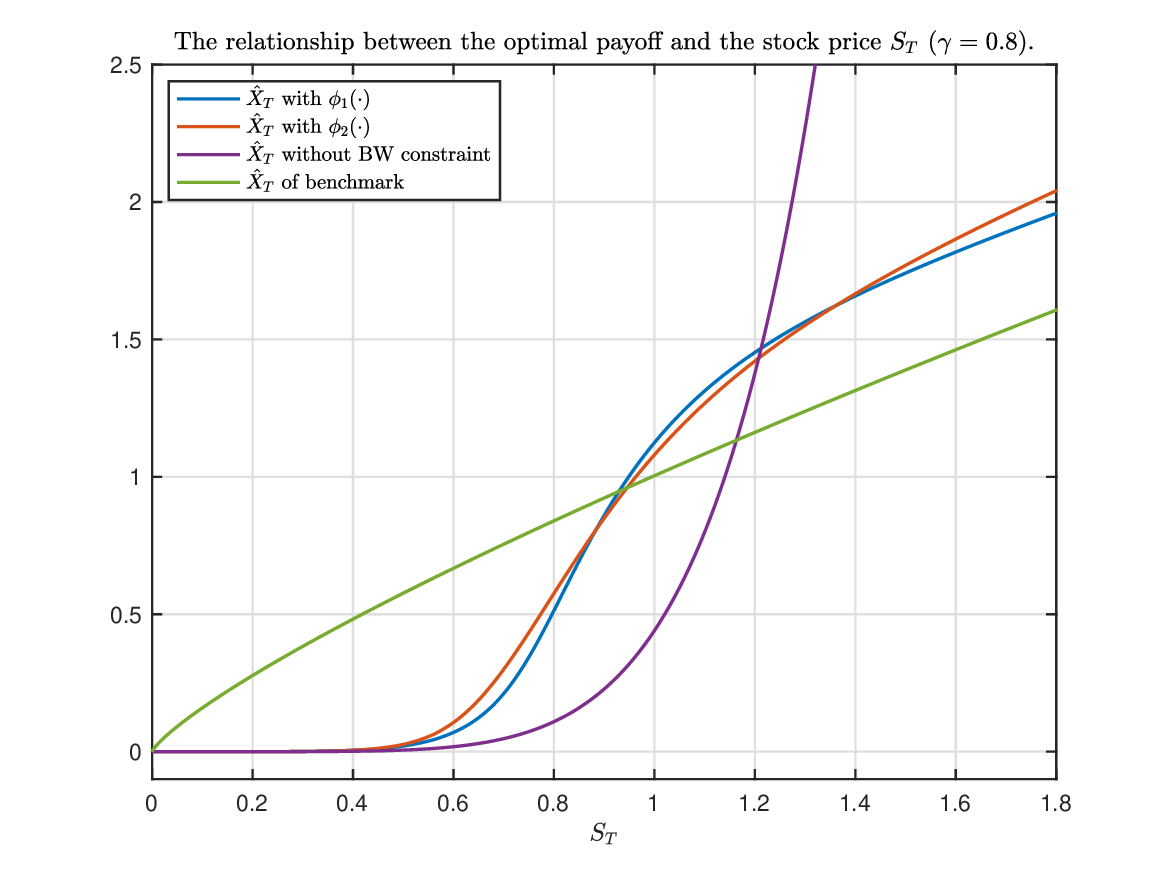}
    \end{subfigure}
	\caption{The optimal payoff as a function of $S_T$ when there are no BW constraints (purple curve) and when there is a BW constraint induced by the generator $\phi_1(x)=x^2$ (blue curve) or by the generator $\phi_2(x)=x \ln x$ (red curve). The non-constant benchmark is depicted by the green curve. The cases  $\gamma=0.5$ (left panel) and $\gamma=0.8$ (right panel) are studied.}
	\label{Test_ngamma_c}
\end{figure}

  \begin{figure}[htbp]
    \centering
    \begin{subfigure}[b]{0.49\textwidth}
        \centering
        \includegraphics[width=\textwidth]{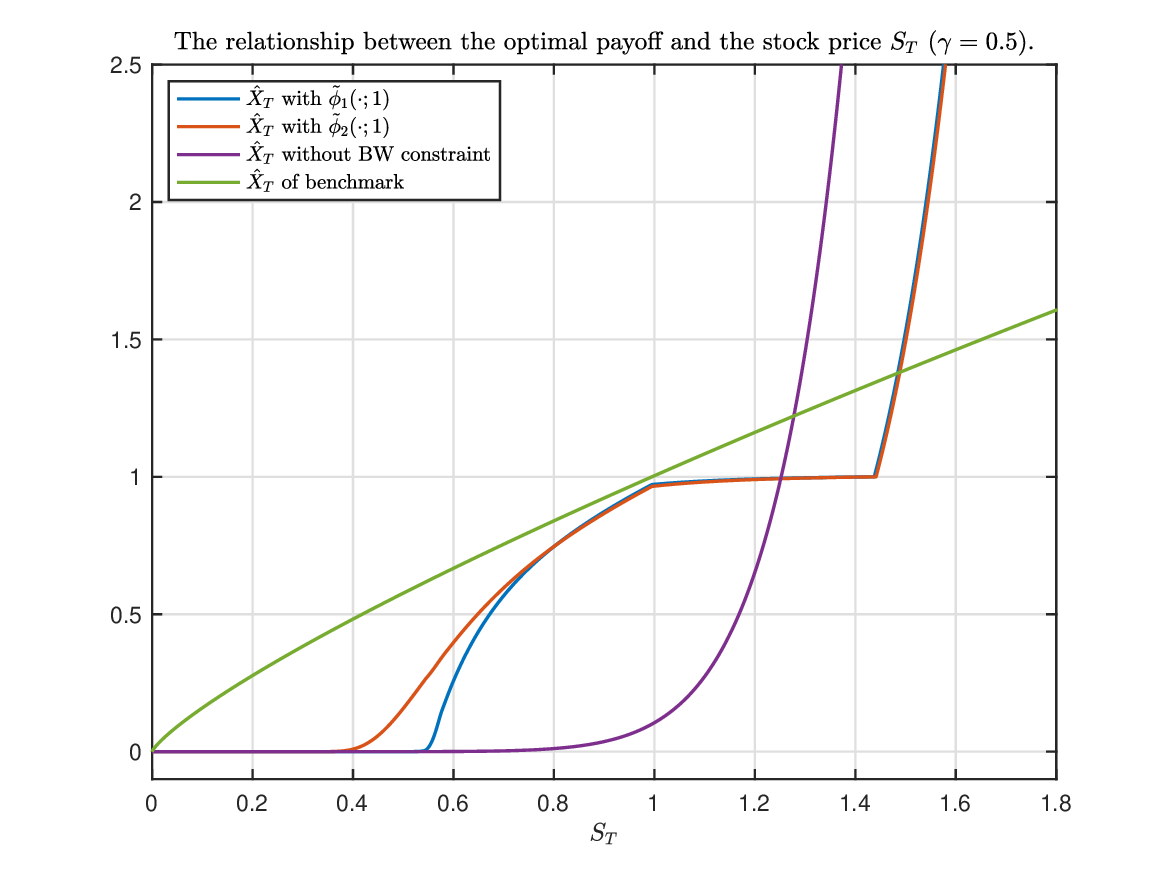}
    \end{subfigure}
    \hfill
    \begin{subfigure}[b]{0.49\textwidth}
        \centering
        \includegraphics[width=\textwidth]{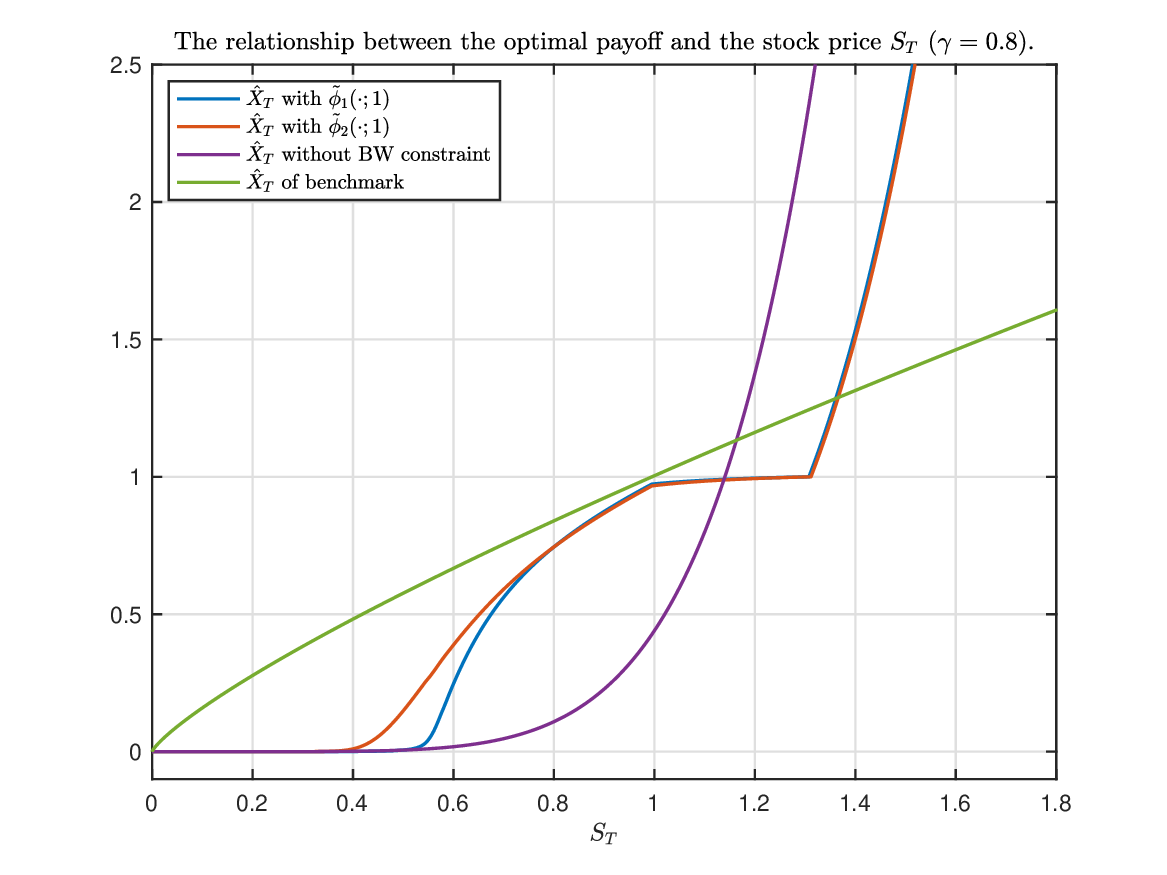}
    \end{subfigure}
    \begin{subfigure}[b]{0.49\textwidth}
        \centering
        \includegraphics[width=\textwidth]{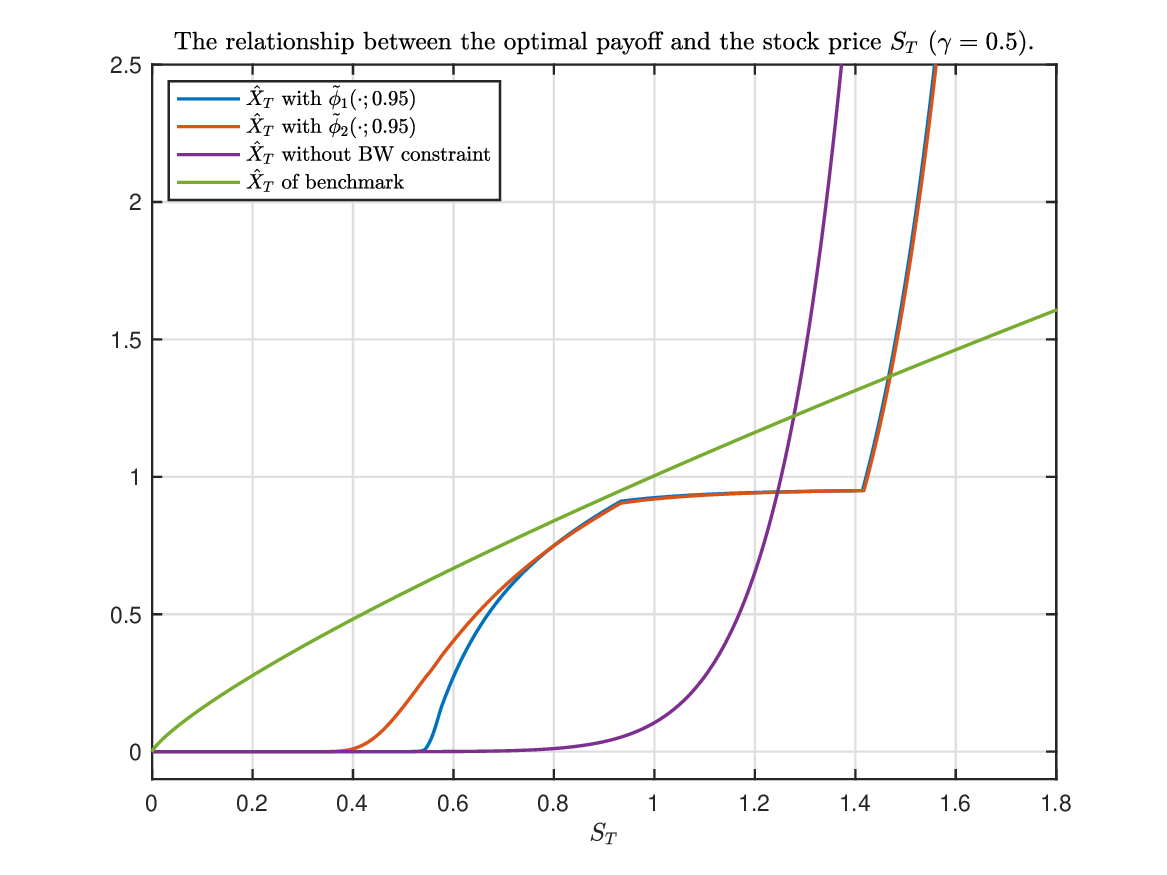}
    \end{subfigure}
    \hfill
    \begin{subfigure}[b]{0.49\textwidth}
        \centering        \includegraphics[width=\textwidth]{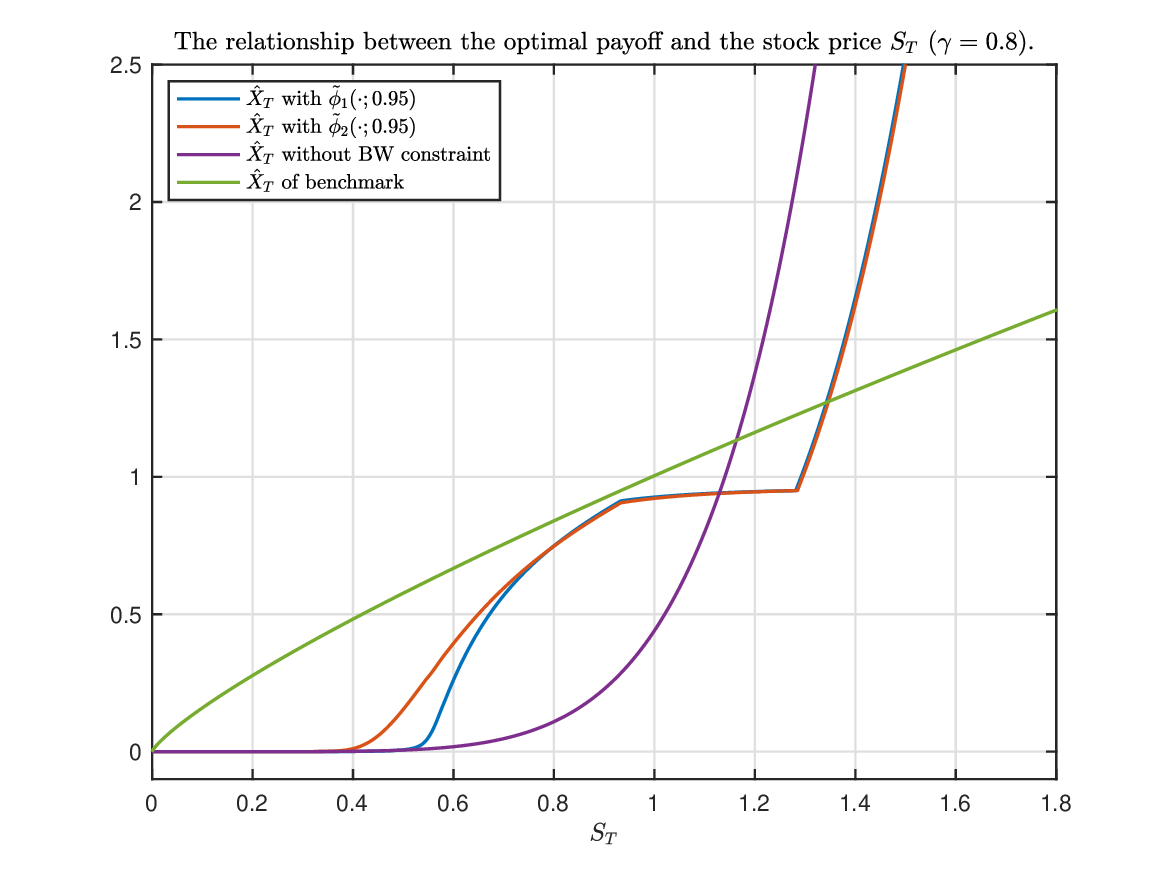}
    \end{subfigure}
	\caption{The optimal payoff as a function of $S_T$, when there are no BW constraints (purple curve) and when there is a BW constraint induced by the generator $\tilde{\phi}_1(x;\alpha)$ (blue curve) or by the generator $\tilde{\phi}_2(x;\alpha)$ (red curve). The non constant benchmark is depicted by the green curve. The cases  $\gamma=0.5$ (left panel) and $\gamma=0.8$ (right panel) are studied.}
	\label{Test_nimprove_gamma_c}
\end{figure}

\Cref{Test_ngamma_c} and \Cref{Test_nimprove_gamma_c} illustrate the relationship between the optimal payoffs $\hat{X}_T$ and the stock price $S_T$. We observe that, overall, the payoff of the benchmark increases in a linear fashion with $S_T$. This is due to the large allocation of stock within the benchmark. Furthermore, when the BW constraint is absent, $\varepsilon = \infty$, (purple curve) the optimal payoff is very different from the Benchmark. Adding the BW constraints brings, as expected, the optimal payoffs (blue and red curves) closer to the benchmark. When comparing the impact of the BW generators $\phi_i$, $i = 1,2$, (\Cref{Test_ngamma_c}) with the BW generators with threshold $\tilde{\phi_i}$, $i = 1,2$, (\Cref{Test_nimprove_gamma_c}), in the latter case the optimal payoffs stay closer to the benchmark when the underlying stock is declining and significantly outperform the benchmark when the stock is appreciating. 

In summary, the BW constraints allow the investor to create payoffs that yield large expected utility while staying close to the benchmark and in particular to obtain payoffs that show resilience when markets are declining.

\section{Concluding remarks}\label{sec:5}

In this paper, we deal with optimal payoff choice for an expected utility maximizer who aims to stay close to a benchmark, but considers asymmetry in that positive deviations are measured differently than negative ones. Doing so has significant impact on the payoff choice. By acknowledging and accounting for the asymmetric preferences of investors, financial professionals can design products and strategies that better align with their clients' preferences and risk tolerances.

\section*{Acknowledgement}
SP acknowledges support from the Natural Sciences and Engineering Research Council of Canada (grants DGECR-2020-00333 and RGPIN-2020-04289). SV is grateful to FWO for financial support (grant numbers FWO SBO S006721N and FWO WOG W001021N).
JY acknowledges the support from the National Natural Science Foundation of China (12371474).

\appendix
\section{Convex analysis in general vector spaces} \label{convex opti}
{ In this appendix, we collect auxiliary results on convex optimisation problems.

\begin{definition}\label{app_def}
    Let $X$ be a real linear (vector) space, and let $f:X \to \bar{\mathbb{R}}$.  The domain of $f$ is defined by
    \begin{align*}
     \operatorname{dom} f := \{ x \in X \ | \ f(x) < \infty \}.
    \end{align*} 
    The function $f$ is called proper if
    \begin{align*}
    \operatorname{dom} f\neq\emptyset \quad\text{and}\quad
    f(x)>-\infty,\quad \forall x\in X. 
    \end{align*}
    Moreover, $f$ is convex if  $\forall x,y \in X,  \forall \lambda \in [0,1]$
    \begin{align*}
    f(\lambda x + (1-\lambda)y) \leq \lambda f(x) + (1-\lambda) f(y).
    \end{align*}
\end{definition}

\begin{theorem}\label{main_optimal_th}
    Let $X$ be a separated locally convex space, and let $\Lambda(X)$ denote the class of proper convex functions on $X$. Let $f,g_{1},\ldots,g_{n} \in \Lambda(X)$ and consider the problem
    \begin{align} \label{constrained problem}
    \min \{ f(x) \ | \ g_i(x) \leq 0, \ i = 1,\ldots, n \}.  
    \end{align}
    Suppose that the Slater condition holds, namely, 
    \begin{align*}
     \exists x_0 \in \operatorname{dom} f, \ g_{i}(x_0) < 0, \ \forall i=1,\ldots,n.  
    \end{align*}
    Then, let $\bar{x} \in \operatorname{dom} f$; $\bar{x}$ is a solution of problem \eqref{constrained problem} if and only if $g_{i}(\bar{x}) \leq 0$ for every $i =1, \ldots,n$ and there exist $\bar{\lambda}_1,\ldots,\bar{\lambda}_n \in \mathbb{R}_{+}$ such that $\bar{\lambda}_{i}g_{i}(\bar{x})=0$ for $i=1,\ldots,n$ and 
    \begin{align*}
     0 \in \partial (f+\bar{\lambda}_1 g_1 + \ldots + \bar{\lambda}_n g_n )(\bar{x}).
    \end{align*}
    where $\partial f(x) $ denotes the Fenchel subdifferential of $f$ at $x$.
\end{theorem}
\begin{proof}
    See Theorem 2.9.3 in \cite{zualinescu2002convex}.
\end{proof}

\begin{theorem}\label{app_partial}
    If $f \in \Lambda(X)$, then $x \in \operatorname{dom} f$ is a minimum point for $f$ if and only if $0 \in \partial f(x)$.
\end{theorem}
\begin{proof}
    See Theorem 2.5.7 in \cite{zualinescu2002convex}.
\end{proof}}

\section{Proof of Proposition \ref{prop_MBW}}\label{app:proofs}
This appendix is devoted to additional proofs not provided in the main body of the paper.
\begin{proof}
We first prove that $\alpha \mapsto B_{\tilde{\phi}} \big(z_1, z_2; \alpha\big)$ is non-decreasing for fixed non-negative $z_1$ and $z_2$. First, let $\alpha_1 \leq z_1 \leq \alpha_2 \leq z_2 \leq \alpha_3$, then we prove that 
\begin{align*}
    B_{\tilde{\phi}} \big(z_1, z_2; \alpha_1 \big) \leq  B_{\tilde{\phi}} \big(z_1, z_2; \alpha_2 \big) \leq B_{\tilde{\phi}} \big(z_1, z_2; \alpha_3 \big). 
\end{align*}
From \eqref{MBG}, we have 
\begin{align*}
    B_{\tilde{\phi}} \big(z_1, z_2; \alpha_1 \big) =& \ 0, \\
    B_{\tilde{\phi}} \big(z_1, z_2; \alpha_2 \big) =& \ \phi(z_1) - \phi(\alpha_2) - \phi'(\alpha_2) (z_1 - \alpha_2), \\
    B_{\tilde{\phi}} \big(z_1, z_2; \alpha_3 \big) =& \ \phi(z_1) - \phi(z_2) - \phi'(z_2)(z_1-z_2) = B_{\phi} \big(z_1, z_2 \big).
\end{align*}
Note that only $B_{\tilde{\phi}} \big(z_1, z_2; \alpha_2 \big)$ depends on the threshold $\alpha$. Let $\alpha_{2,1},\alpha_{2,2} \in [z_1,z_2]$, with $\alpha_{2,1} \leq \alpha_{2,2}$. Then using in the second equation the Mean Value Theorem with $\xi \in [\alpha_{2,1}, \alpha_{2,2}]$, we obtain
\begin{align*}
    &B_{\tilde{\phi}} \big(z_1, z_2; \alpha_{2,2} \big)-B_{\tilde{\phi}} \big(z_1, z_2; \alpha_{2,1} \big) \\
    = & - \phi(\alpha_{2,2}) + \phi(\alpha_{2,1}) - \phi'(\alpha_{2,2}) (z_1 - \alpha_{2,2}) + \phi'(\alpha_{2,1}) (z_1 - \alpha_{2,1}) \\
    = & - \phi'(\xi) (\alpha_{2,2} - \alpha_{2,1}) - \left( \phi'(\alpha_{2,2}) - \phi'(\alpha_{2,1})   \right)z_1 +  \phi'(\alpha_{2,2}) \alpha_{2,2} -  \phi'(\alpha_{2,1}) \alpha_{2,1} \\
    = & \left( \phi'(\alpha_{2,2}) -  \phi'(\xi) \right)\alpha_{2,2} - \left( \phi'(\alpha_{2,2}) - \phi'(\alpha_{2,1})   \right)z_1 + \left( \phi'(\xi) - \phi'(\alpha_{2,1})  \right) \alpha_{2,1}\,. 
\end{align*}
Since $\phi'(\cdot)$ is non-decreasing, we have $\phi'(\alpha_{2,2}) -  \phi'(\xi) \geq \phi'(\alpha_{2,2}) - \phi'(\alpha_{2,1})$. Combing with $\alpha_{2,2} \geq z_1$ and that $\alpha_{2,1} \ge 0$, we get  
\begin{align}\label{pB}
    B_{\tilde{\phi}} \big(z_1, z_2; \alpha_{2,2} \big)-B_{\tilde{\phi}} \big(z_1, z_2; \alpha_{2,1} \big) \geq  \left( \phi'(\xi) - \phi'(\alpha_{2,1})  \right) \alpha_{2,1} \geq 0,
\end{align}
which implies that $\alpha_2 \mapsto B_{\tilde{\phi}} \big(z_1, z_2; \alpha_2 \big)$ is non-decreasing on $[z_1, z_2]$. Thus, we obtain 
\begin{align}\label{neq}
     B_{\tilde{\phi}} \big(z_1, z_2; \alpha_1 \big) = B_{\tilde{\phi}} \big(z_1, z_2; z_1 \big) \leq   B_{\tilde{\phi}} \big(z_1, z_2; \alpha_2 \big) \leq B_{\tilde{\phi}} \big(z_1, z_2; z_2 \big) = B_{\tilde{\phi}} \big(z_1, z_2; \alpha_3 \big).
\end{align}
Similarly, when $z_1 > z_2$, we can prove that the inequalities \eqref{pB} and \eqref{neq} hold, i.e., $\alpha \mapsto B_{\tilde{\phi}} \big(z_1, z_2; \alpha\big)$ is non-decreasing. Thus, we conclude that $\alpha \mapsto \DB_{\tilde{\phi}} (\Finv_1, \Finv_2; \alpha)$ is non-decreasing. Finally, taking the limit in $\alpha$, we obtain \eqref{lima}.
\end{proof}

% \section*{Acknowledgments}

% \bibliographystyle{siamplain}
\bibliographystyle{apalike}
\bibliography{references}

\end{document}